\newcommand{\opt}{\mathsf{opt}}
\newcommand{\pref}{\sigma}
\newcommand{\prefp}{\vec \sigma}
\newcommand{\poi}{\mathsf{PoII}}
\newcommand{\cns}{\rhd^{\!v}_{\!\alpha}}
\newcommand{\scns}{\rhd_{\!\alpha}}
\newcommand{\ggc}{{\mathrel{{\succ}\!\!{\succ}}}}
\newcommand{\candid}{A}
\newcommand{\PP}{$\mathcal{P}$\xspace}
\newcommand{\QQ}{$\mathcal{D}$\xspace}
\DeclareMathSymbol{\mlq}{\mathord}{operators}{``}
\DeclareMathSymbol{\mrq}{\mathord}{operators}{`'}
\newcommand{\qsucc}{\mlq\!\succ\!\mrq}
\newcommand{\qggc}{\mlq\!\ggc\!\mrq}
\newcommand{\intns}{\Join}
\renewcommand{\leq}{\leqslant}
\renewcommand{\ge}{\geqslant}
\renewcommand{\geq}{\geqslant}
\newcommand{\ags}{\ensuremath{N}}
\newcommand{\alts}{\ensuremath{A}}
\newcommand{\elc}{\mathcal{E}}
\newcommand{\scs}{\ensuremath{\mathsf{sc}}}
\newcommand{\scsd}[1]{\ensuremath{\mathsf{sc}_{#1}}}
\newcommand{\rank}[1]{\ensuremath{\mathsf{rank}_{#1}}}
\newcommand{\sco}{\scs}
\DeclareMathOperator{\psm}{PSM}
\newcommand{\dist}{\mathsf{dist}}
\newcommand{\distv}{\mathsf{dist}^{v}_\alpha}
\DeclareMathOperator*{\minimize}{minimize}
\DeclareMathOperator*{\argmin}{arg\,min}
\newcommand{\plu}{\mathsf{Plu}}
\newcommand{\optaw}{\opt^{\mathsf{aw}}_{\alpha}}
\newcommand{\optob}{\opt^{\mathsf{ob}}_{\alpha}}
\newcommand{\optawv}{\opt^{v-\mathsf{aw}}_{\alpha}}
\newcommand{\optobv}{\opt^{v-\mathsf{ob}}_{\alpha}}
\newcommand{\poiv}{\poi^{v}}
\newcommand{\butterfly}[1]{
 \sigma
}
\begin{document}

\title{Metric Distortion with Preference Intensities}

\author{Mehrad Abbaszadeh\inst{1}\and Ali Ansarifar\inst{2}\and Mohamad Latifian\inst{3} \and Masoud Seddighin\inst{4}}

\institute{
Georgia Institute of Technology, United States
\and
Sharif University of Technology, Iran\and University of Edinburgh, United Kingdom\and Tehran Institute for Advanced Studies, Iran}

\maketitle

\begin{abstract}

\normalsize In voting with ranked ballots, each agent submits a strict ranking of the form $a \succ b \succ c \succ d$ over the alternatives, and the voting rule decides on the winner based on these rankings. Although this ballot format has desirable characteristics, there is a question of whether it is expressive enough for the agents. \citet{KLS23} address this issue by adding intensities to the rankings. They introduce the \emph{ranking with intensities} ballot format, where agents can use both $\qggc$ and $\qsucc$ in their rankings to express intensive and normal preferences between consecutive alternatives in their rankings.
While they focus on analyzing this ballot format in the utilitarian distortion framework, in this work, we look at the potential of using this ballot format from the metric distortion viewpoint. We design a class of voting rules coined \emph{Positional Scoring Matching} rules, which can be used for different problems in the metric setting, and show that by solving a zero-sum game, we can find the optimal member of this class for our problem. This rule takes intensities into account and achieves a distortion lower than $3$. In addition, by proving a bound on the price of ignoring intensities, we show that we might lose a great deal in terms of distortion by not taking the intensities into account.

\keywords{Social Choice, Voting, Distortion, Ballot Design
}
\end{abstract}

\clearpage
\section{Introduction}
The field of computational social choice focuses on aggregating the individual opinions of multiple agents to arrive at a collective decision that affects all of them. This process typically involves two phases: the \emph{elicitation} of the agents' preferences and the \emph{aggregation} of these preferences to determine the final output. One key element of the elicitation phase is the \emph{ballot} format that is used to capture the agents' preferences. The ballot design is crucial, as it dictates how agents report their preferences, and this inherently constrains the collected information, which, in turn, can influence the voting outcome and potentially lead to decisions that do not accurately reflect agents' true preferences. For example, ranked ballots and approvals are two of the most famous ballot formats that are typically used in voting mechanisms. Although these ballot formats are intuitive and capture varying details about agents' desired outcomes, it remains unclear which format is most appropriate. In different collective decision-making settings, agents may be able to provide varying types of information, and alternative elicitation methods may be more suitable. Thus, the choice of ballot format remains a context-dependent question, and it is essential to evaluate whether standard formats accurately express agents' preferences.

In this work, we focus on single-winner voting and a practical modification of ranked ballots that captures more detail about agents' preferences. In standard ranked ballots, each agent submits a complete ordering of the alternatives, and the aggregation rule selects a winner based on these rankings. However, in many real-world settings, agents not only rank alternatives but also differ in how strongly they prefer them. They may strongly favor some options, feel indifferent about others, or actively oppose certain choices. Simple ordinal rankings obscure these distinctions by treating all adjacent positions as equally spaced. For example, in internship selection at a tech firm, team leaders often classify candidates into broad tiers like ``top-tier'', ``acceptable'', and ``poor fit''. These categories reflect preference gaps that simple rankings fail to capture.
 Ignoring such information can lead to undesirable outcomes.  Similarly, in R\&D evaluations, committee members often use rating scales (e.g., 0--100) to rank proposals. While these scores convey richer information than simple rankings, eliciting precise numerical values is often impractical.
 In response, several methods have been proposed to capture and elicit the intensity of preferences ~\cite{cook1985ordinal,harvey1999aggregation,saaty2007group,farquhar1989preference}.

 A natural and implementable idea is to augment ranked ballots with a notion of \emph{preference intensities}, allowing voters to express not just order but strength of preference. While a traditional ranked ballot requires a strict ranking of alternatives, it cannot differentiate between slight and strong preferences. Consider a group choosing between options of what to eat, with their options being Pizza (\(p\)), Steak (\(s\)), or Burger (\(b\)). Alice may submit \(p \succ s \succ b\) reflecting a mild preference for Pizza. Bob, a vegetarian, might submit the same ranking even though Pizza is his only viable option, and his preference is far stronger. This lack of distinction in preference intensities can lead to sub-optimal outcomes. To address this limitation, \citet{KLS23} introduced the \emph{ranking with preference intensities} ballot format, which extends traditional rankings by permitting agents to use \(\qggc\) or \(\qsucc\) to indicate stronger or weaker preferences. For example, Bob could express his preferences more accurately by submitting \(p \;\ggc b \succ s\), and a voting rule can then leverage this additional information to make a more efficient decision.

There are various approaches in the literature to compare voting rules. Some studies compare them based on axiomatic properties, but evaluating decision efficiency requires a quantitative measure.
One approach that has recently received significant attention is the notion of \emph{distortion}, introduced by \citet{procaccia2006distortion}.
 Distortion measures the efficiency loss in a voting process by comparing the outcome based on reported preferences to the ideal outcome under full information.
  Distortion has been studied in two main frameworks: utilitarian and metric.

In the utilitarian framework, introduced by \citet{procaccia2006distortion}, it is assumed that each agent has a subjective unit sum utility function over the alternatives and reports her ranking accordingly. The goal is to use the reported information to select an alternative that maximizes the social welfare. In this context, distortion of a rule is defined as the worst-case ratio between the maximum achievable social welfare and the social welfare of the alternative chosen by the rule. \citet{KLS23} analyze the utilitarian distortion of some rules that use ranking with preference intensities ballot format, and show how ignoring the intensities can affect the utilitarian distortion by introducing the notion of \emph{price of ignoring intensities}.

The metric distortion framework, first studied by \citet{anshelevich2018}, assumes that agents and alternatives are embedded in a metric space, and agents submit preferences based on proximity (closer alternatives are preferred). One motivation for this is the facility location problem~\cite{anshelevich2021ordinal,filos2021approximate}, where the goal is to select facility locations to minimize total distance or cost of serving clients. In this context, distortion is defined as the worst-case ratio of the social cost of the chosen alternative to the minimum possible social cost. \citet{anshelevich2018} proved a lower bound of 3 on the metric distortion of any deterministic voting rule. They also proved that the metric distortion of the Copeland rule is 5. Eventually, \citet{GHS20} closed the gap by introducing the \emph{Plurality Matching} rule, which achieves a distortion of 3.\\ 
Both metric and utilitarian distortion have been studied for various ballot formats. We refer to the survey of \citet{anshelevich2021distortion} for an extensive overview of these results. In this work, we study the limits of metric distortion achievable with the ranking-with-preference-intensities ballot format and compare them to those of rules using simple ranked ballots.
 We draw a complete picture by analyzing distortion in the general case, in the line metric, and by establishing bounds on the price of ignoring intensities.

\subsection{Related Work}

Distortion was introduced by \citet{procaccia2006distortion} and has since received extensive attention; see \citet{anshelevich2021distortion} for a comprehensive review. Here we overview the works most relevant to ours.

In general, the problem we consider is single-winner voting with ranked preferences. In this setting, the optimal achievable distortion in the utilitarian setting is $\Theta(m^2)$, with $m$ being the number of alternatives, and the plurality rule guarantees this bound \cite{caragiannis2017subset,caragiannis2011voting}. In the metric setting, the distortion of any deterministic single-winner voting rule with ranked preferences has a lower bound of 3. Over time, subsequent studies have progressively reduced the upper bound on the optimal metric distortion, from 5 \cite{anshelevich2018} to 4.236 \cite{Munagala2019}, and ultimately to 3 \cite{GHS20,Kizilkaya2022}. 
   In addition, \citet{gkatzelis2023best}  suggested  a voting rule that  achieves asymptotically optimal utilitarian and metric distortion simultaneously. 

Some studies suggest that having extra information can significantly reduce the distortion. This additional information may come from assumptions about the context of the election, agents' behavior, or ballot format.
For instance, in two successive studies, \citet{cheng2017people,cheng2018distortion} examine cases where alternatives are drawn randomly from the agent population, which resulted in improved distortion bounds for various voting rules. \citet{seddighin2021distortion} considered a scenario where the agents can abstain based on a  behavioral model. They show that for two alternatives, this model reduces the metric distortion.
\citet{anshelevich2017randomized} introduce $\alpha$-decisive metrics, where each agent's distance to their closest alternative is at most $\alpha$ times the distance to their next closest alternative. $\alpha$-decisive metrics have been considered in many follow-up studies~\cite{Gross2017,GHS20}. Our framework is closely related to this setting, as it allows agents to specify whether they are $\alpha$-decisive between any two consecutive alternatives.

\citet{amanatidis2022few} studied deterministic rules in the utilitarian setting, allowing a limited number of value queries in addition to agents’ preference lists. They showed that one query per agent yields distortion \( O(m) \), \( O(\log m) \) queries give \( O(\sqrt{m}) \) distortion, and \( O(\log^2 m) \) queries suffice for constant distortion.
More broadly, \citet{mandal2019efficient} and \citet{mandal2020optimal} concentrated on the communication complexity of voting algorithms. They established a framework that collects limited information from agents, without prior knowledge of their preferences, and optimizes the trade-off between distortion and the quantity of information obtained from each agent.

Finally, distortion has been extended to more general scenarios, including committee selection \cite{caragiannis2022metric,BNPS21,bedaywi2023distortion}, participatory budgeting \cite{benade2021preference,flanigan2023distortion}, and different ballot formats \cite{abramowitz2019awareness, amanatidis2021peeking, ma2020improving, bishop2022sequential, BHLS22}.

\subsection{Our contribution} 

In this work, we assume that agents and alternatives are embedded in a metric space with distance function \( d \), and we use a parameter \( \alpha \in [0, 1] \) to model the intensity of preference. Specifically, an agent strongly prefers alternative \( a \) over \( b \), denoted \( a \ggc b \), if their distance to \( a \) is less than \( \alpha \) times their distance to \( b \). We analyze our methods as a function of \( \alpha \).

Metric distortion with preference intensities can be analyzed in two settings: \emph{mandatory} and \emph{voluntary} elicitation of the intensities. In the former, agents use $\qggc$ if and only if there is a gap in their distances, while in the latter, agents might still use  $\qsucc$ when there is a gap in their distances.

It is evident that the optimal distortion in the voluntary setting is 3 (matching the bound on the distortion with ranked ballot), since agents might decide not to report any intensities. However, the distortion bound in the mandatory setting is unknown. We start by proving lower bounds on the metric distortion. Our main technical contribution is introducing a deterministic rule and analyzing its distortion in this setting.
We start by analyzing the class of \emph{moderate-up-to-$k$} preferences, which is the class of preference profiles in which each agent reports his first intense preference in the $k\textsuperscript{\text{th}}$ position. To this end, we generalize the Plurality Matching rule to the class of rules coined \emph{Positional Scoring Matching rules}. We analyze the distortion of members of this class and prove that finding the optimal rule among them is equivalent to finding the value of a two-player zero-sum game. We believe that this class of rules can be useful in different contexts in the metric setting and also in the other settings, and the framework to analyze them might be of interest for future work.

The next step is to use this analysis to design a rule for the general case, with the distortion depending on $\ell_{\max}$, the maximum $\ell \in [m - 1]$ for which all the agents are moderate-up-to-$\ell$. We prove that the distortion of this rule is much less than $3$ when $\ell_{\max}$ is small
(see \Cref{fig:mand_k_1,fig:mand_k_2}). We further show that this bound is robust by proving that  if a small portion of the agents are not 
moderate-up-to-$k$ for any $k \leq \ell_{\max}$, our bound does not change drastically.

To give more meaning to our results, it is worth mentioning that there are reasons to believe that in real world elections most of the agents are moderate-up-to-$k$ for a small $k$: agents tend to partition the alternatives into the small set of desirable outcomes and the large set of undesirable ones. Approval voting is motivated by this principle. Moreover, agents allocate more cognitive and emotional resources to learn more about their high-ranked options, where the stakes are perceived to be higher, and hence they usually have a strong preference for their top few options over the rest.

In \Cref{sec:line}, we consider the special case where our metric is a line. We establish a lower bound of \(\max\left( \frac{3 - \alpha}{1 + \alpha}, 2\alpha + 1 \right)\) on the distortion in the mandatory setting for instances with two alternatives. We also propose a rule for two-alternative cases, which we conjecture achieves a similar distortion to this lower bound. Furthermore,
we provide another lower bound, \(\frac{1 + 3\alpha^{-\lfloor\frac{m}{2}\rfloor}}{3 + \alpha^{-\lfloor\frac{m}{2}\rfloor}}\), showing that distortion in the line metric can still approach values close to 3, similar to the unrestricted case.

In \Cref{sec:poi}, we study the \emph{Price of Ignoring Intensities (POII)} which measures the efficiency loss that accrues if simple ranked ballots are used when intensities are available. We show that in the mandatory setting POII is at least $$
\frac{3 \left(\alpha^{\lfloor m \rfloor_{\text{even}}}+1\right)}{1+2\alpha^{\lfloor \frac{m}{2} \rfloor}-\alpha^{\lfloor m \rfloor_{\text{even}}}}\cdot
$$
In this section, we also prove the lower bound $\frac{3}{2 \alpha^{\lfloor\frac{m}{2}\rfloor} +1}$ on POII in the voluntary setting. 
 To prove these bounds, we carefully analyze different possible outcomes on a specific instance. This requires formulating our problem as an LP and performing primal-dual analysis. We modify the techniques introduced by \citet{kempe2020analysis} to fit our setting.

\section{Preliminaries}

For \( t \in \mathbb{N} \), let \([t] = \{1, \ldots, t\}\), and for set $S$, let $\Delta(S)$ be the set of distributions over $S$. An election $\elc = (\ags, \alts, \prefp)$ involves a set of \(n\) agents \(\ags\), a set of \(m\) alternatives \(\alts\), and an elicited preference profile \(\prefp\). We use \(i, j\) to refer to agents and \(a, b, c\) to refer to alternatives.

\paragraph{Preference Elicitation.}
In this work, we focus on preference profiles elicited through a ranking with intensities ballot format. In this ballot format, each agent \(i\) submits a preference \(\pref_i = (\pi_i, \intns_i)\), where \(\pi_i: [m] \to \alts\) is a one-to-one function representing the agent's ranking of the alternatives, and \(\intns_i: [m-1] \to \{\succ, \ggc\}\) indicates the intensity of preferences. Specifically, \(\intns_i(j) = \qsucc\) means agent \(i\) prefers \(\pi_i(j)\) over \(\pi_i(j+1)\), while \(\intns_i(j) = \qggc\) means agent $i$ strongly prefers \(\pi_i(j)\) over \(\pi_i(j+1)\).

Let \(S(\alts)\) denote the set of all possible rankings with intensities over \(\alts\), and \(\prefp = (\pref_1, \ldots, \pref_n)\) represent the preference profile of all agents. Additionally, define \(\rank{i}(a)\) as the rank of alternative \(a\) in the ranking of agent \(i\), and \(\plu(a)\) as the number of agents who rank \(a\) as their top alternative.

\paragraph{Voting Rule.}
A single-winner voting rule \( f \) using the ranking with intensities ballot format is a function \( f: S(\alts)^n \to \alts \) that takes a preference profile as input and selects a winning alternative. We also consider voting rules that only access rankings. In this case, the ranking profile is defined as \( \vec{\pi} = (\pi_1, \ldots, \pi_n) \), and the voting rule uses this profile to determine the winner.

\paragraph{Metric Framework.}
We assume that all agents and alternatives are embedded in a metric space characterized by a distance function \( d: (\ags \cup \alts)^{2} \to \mathbb{R}_{\geq 0} \). The distance function is positive, symmetric, satisfies the triangle inequality, and for any $x\in \alts \cup \ags, d(x,x)=0$.
The social cost of alternative $a \in \alts$ with respect to metric $d$ is defined as $\scs_d(a) := \sum_{i \in \ags} d(i, a)$, and the optimal alternative is $\opt_d := \argmin_{a \in \alts} \scs_d(a)$. If $d$ is clear from the context, we drop it from the subscript. A ranking profile \( \vec{\pi} \) is said to be consistent with metric \( d \), denoted by \( \vec{\pi} \rhd d \), if for each agent \( i \in \ags \) and alternatives \( a, b \in \alts \), \( a \succ_i b \) implies \( d(i, a) \leq d(i, b) \). This means agents rank alternatives in increasing order of their distance from them. 

\paragraph{Interpretation of the Intensities.} Intensities that are reported by the agents can be interpreted in two different ways: \emph{mandatory elicitation} and \emph{voluntary elicitation}. For a fixed \( \alpha \in [0, 1] \), a preference profile \( \prefp \) with intensities is \(\alpha\)-consistent with the metric \( d \) under mandatory elicitation, denoted by \( \prefp \scns d \), if for any agent \( i \in \ags \) and \( j \in [m-1] \), the following holds: if \( \intns_i(j) = \qsucc \), then \( d(i, \pi_i(j+1)) \geq d(i, \pi_i(j)) > \alpha \cdot d(i, \pi_i(j+1)) \); and if \( \intns_i(j) = \qggc \), then \( d(i, \pi_i(j)) \leq \alpha \cdot d(i, \pi_i(j+1)) \). In other words, agent \( i \) expresses a strong preference in the \( j\textsuperscript{\text{th}} \) position \emph{if and only if} there is an \(\alpha\)-gap between the distances to their \( j\textsuperscript{\text{th}} \) and \( j+1\textsuperscript{\text{th}} \) alternatives.

We primarily focus on the mandatory setting, but we also consider voluntary elicitation of the intensities in certain parts of this paper. An intensive preference profile \( \prefp \) is said to be \(\alpha\)-consistent with the metric \( d \) under voluntary elicitation of the intensities, and it is denoted by $\prefp \cns d$, if \( \vec{\pi} \rhd d \) and, for any agent \( i \in \ags \) and any \( j \in [m-1] \), where \( \intns_i(j) = \qggc \), it holds that \( d(i, \pi_i(j)) \leq \alpha \cdot d(i, \pi_i(j+1)) \). In other words, if agent \( i \) expresses a strong preference in the \( j\textsuperscript{\text{th}} \) position, there is a multiplicative \(\alpha\)-gap between the distances to the alternatives ranked \( j\textsuperscript{\text{th}} \) and \( j+1\textsuperscript{\text{th}} \). However, if no intensity is reported, the presence of a multiplicative \(\alpha\)-gap remains uncertain.

\paragraph{Metric Distortion.}
Distortion serves as a quantitative measure of the efficiency of voting rules. For \( \alpha \in [0,1] \), the \(\alpha\)-distortion of an alternative \( a \) in an election \( \elc = (\ags, \alts, \prefp) \) is defined as the worst-case ratio between the social cost of \( a \) and the social cost of the optimal alternative, where the worst case is taken across all metrics consistent with \( \prefp \), i.e.,  $$\dist_\alpha(a, \elc) := \sup_{d: \prefp \scns d} \frac{\scs_d(a)}{\min_{b \in \alts} \scs_d(b)}\cdot$$
We sometimes use the notation $\dist_\alpha(a, \prefp)$ whenever $\ags, \alts$ are clear in the context.\\ 
The metric distortion of a rule \( f \) is then defined as the worst-case distortion of its output across all possible election instances, given by \( \dist_\alpha(f) := \sup_{\elc} \dist_\alpha(f(\prefp), \elc) \). Note that this definition assumes mandatory elicitation of the intensities. A similar definition applies under voluntary elicitation, referred to as voluntary distortion (\( \dist^v_\alpha \)), which is formally defined in \Cref{sec:poi}.

\section{Bounds on the Distortion}
\label{sec:dist}

In this section, we explore the potential to achieve a distortion lower than 3 under the assumption of mandatory elicitation of the intensities. In this context, when agents do not use the \( \qggc \) sign in their intensity reports, it indicates that the distances between consecutive alternatives in their rankings differ by no more than a factor of \( \alpha \). This additional information, beyond what is provided by simple ranked ballots, offers the possibility of surpassing the distortion threshold of 3 using a deterministic rule.
We start the section by proving lower bounds on the distortion. Then we show an upper bound that depends on the location of the first $\qggc$ in the preferences of the agents. Note that by distortion in this section, we mean metric distortion assuming mandatory elicitation of the intensities. 

\subsection{Lower Bound \label{sec:mand_lb}}

We begin by establishing a lower bound on the distortion of deterministic rules using the ranked preferences with intensities ballot format, which is derived from two complementary bounds that together form the overall lower bound.

\begin{theorem}
    \label{thm:mand_lb}
    The metric distortion of any voting rule $f$ using ranked preferences with intensities ballot format with parameter $\alpha \in [0, 1]$ and assuming mandatory elicitation of the intensities is at least
    \noindent
    $1 + 2 \max\left(\alpha, \frac{1 - \alpha^{\lfloor\frac{m}{2}\rfloor}}{1 + \alpha^{\lfloor\frac{m}{2}\rfloor}} \right)\cdot$
\end{theorem}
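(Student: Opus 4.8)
The plan is to construct two families of hard instances, one witnessing the lower bound $1 + 2\alpha$ and one witnessing $1 + 2\cdot\frac{1-\alpha^{\lfloor m/2\rfloor}}{1+\alpha^{\lfloor m/2\rfloor}}$, and then take the worse of the two. For the first bound, I would use an instance with two alternatives $a$ and $b$ and essentially a symmetric split of the agents: half report $a \ggc b$ (with mandatory elicitation this forces $d(i,a) \le \alpha\, d(i,b)$) and half report $b \ggc a$. By symmetry any deterministic rule must pick one of them, say $a$; then I place all agents who prefer $a$ at $a$ itself and stretch the agents who prefer $b$ as far from $a$ as the reported $\alpha$-gap allows. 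The triangle inequality and the constraint $d(i,b) \le d(i,a)/\alpha$ pin down how much worse $\scs(a)$ can be than $\scs(b)$, and a short calculation gives the ratio $1+2\alpha$ — this is exactly the $\alpha$-decisive lower bound adapted to our mandatory setting. Actually, since with only two alternatives reporting $a\ggc b$ forces decisiveness, I'd double-check whether a balanced instance or an all-one-side instance gives the cleaner bound; I expect a near-balanced configuration so that the rule is forced to choose a ``bad'' side.

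For the second bound, the key is that $\qsucc$ with mandatory elicitation is also informative: it certifies $d(i,\pi_i(j)) > \alpha\, d(i,\pi_i(j+1))$, i.e.\ consecutive distances cannot be too far apart. The natural construction is a chain of $m$ alternatives $a_1, \dots, a_m$ and a symmetric instance built around the middle, using roughly $\lfloor m/2 \rfloor$ steps of $\qsucc$ on each side. I would split the agents into two groups whose reported rankings are mirror images (one group ranks $a_1 \succ a_2 \succ \cdots$, the other ranks $a_m \succ a_{m-1} \succ \cdots$, with all separators being $\qsucc$), forcing the rule to pick some alternative that is ``far'' for one of the groups. Then I build a metric in which each $\qsucc$ step can only shrink the distance by a factor $\alpha$, so after $\lfloor m/2\rfloor$ steps the ratio of distances is controlled by $\alpha^{\lfloor m/2\rfloor}$; optimizing the placement of the two clusters of agents relative to the chosen winner and the true optimum yields the ratio $1 + 2\cdot\frac{1-\alpha^{\lfloor m/2\rfloor}}{1+\alpha^{\lfloor m/2\rfloor}}$. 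Geometrically I picture the true optimum at the midpoint, the two agent clusters symmetric about it, and the forced winner at one extreme; the $\qsucc$ constraints propagate inward from the extremes and bound how concentrated the mass can be near the optimum.

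The steps, in order: (1) set up the two-alternative instance, invoke a symmetry/pigeonhole argument to fix which alternative the rule outputs, write down the metric maximizing the cost ratio subject to the $\qggc$ constraint and the triangle inequality, and extract $1+2\alpha$; (2) set up the $m$-alternative chain instance with mandatory $\qsucc$ constraints, again use symmetry to force a bad output, solve the resulting optimization for the worst-case metric, and extract $1 + 2\cdot\frac{1-\alpha^{\lfloor m/2\rfloor}}{1+\alpha^{\lfloor m/2\rfloor}}$; (3) observe that both instances are valid simultaneously (or can be merged), so the distortion is at least the maximum of the two, which rearranges to $1 + 2\max(\alpha, \frac{1-\alpha^{\lfloor m/2\rfloor}}{1+\alpha^{\lfloor m/2\rfloor}})$.

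The main obstacle I anticipate is the second construction: one has to be careful that the adversarial metric actually respects \emph{all} the $\qsucc$ constraints (both the lower bound $d(i,\pi_i(j)) > \alpha\, d(i,\pi_i(j+1))$ and the ordering $d(i,\pi_i(j)) \le d(i,\pi_i(j+1))$) simultaneously for every agent and every position, not just the ones that matter for the cost — because with mandatory elicitation, ``not reporting $\qggc$'' is itself a binding constraint that the metric must satisfy, which is the reverse of the usual situation and limits how much the adversary can spread the agents out. Verifying consistency of the chosen metric (e.g.\ that distances can be simultaneously realized on a line or in a general metric, and that the geometric-series spacing of the clusters does not secretly violate a triangle inequality or flip a preference) is where the real work lies; the ratio computation itself should be a routine optimization once the feasible metric is correctly identified.
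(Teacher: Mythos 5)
Your second construction is essentially the paper's: two agents with mirror-image, all-$\qsucc$ rankings, where the mandatory reading of $\qsucc$ caps the growth of consecutive distances at a factor $1/\alpha$ per step, so after $\lfloor m/2\rfloor$ steps an adversarial geometric metric yields the ratio $1+2\frac{1-\alpha^{\lfloor m/2\rfloor}}{1+\alpha^{\lfloor m/2\rfloor}}$. The paper realizes this with one agent equidistant (distance $1$) from all alternatives and the other at distances $\alpha^{-\min(\mathrm{rank}-1,\lfloor m/2\rfloor)}\cdot\frac{2}{\alpha^{-\lfloor m/2\rfloor}-1}$, chosen so that the largest distance equals $1+1+(\text{smallest distance})$, which makes the triangle-inequality check immediate; you correctly flag that this verification is where the work lies, but you have not carried it out, so this half is a plan rather than a proof.

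The genuine gap is in your first construction for the $1+2\alpha$ term. With a balanced profile (half report $a\ggc b$, half report $b\ggc a$) and the rule picking $a$, placing the $a$-supporters \emph{at} $a$ can never exceed distortion $1$: each $a$-supporter then contributes $0$ to $\scs(a)$ and the full $d(a,b)$ to $\scs(b)$, while by the triangle inequality each $b$-supporter contributes at most $d(i,b)+d(a,b)$ to $\scs(a)$, so $\scs(a)\le \scs(b)$ whenever the groups are balanced, no matter how far you stretch the $b$-supporters. The correct adversarial placement is the opposite of what you describe: the winner's supporters must sit strictly \emph{between} $a$ and $b$ at the intensity boundary $d(i,a)=\alpha\, d(i,b)$ (so they add $\frac{\alpha D}{1+\alpha}$ to the numerator but only $\frac{D}{1+\alpha}$ to the denominator), and the opponents sit exactly at $b$; this gives $\frac{\alpha D/(1+\alpha)+D}{D/(1+\alpha)}=1+2\alpha$. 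Moreover, since the theorem concerns rules on $m$ alternatives, a two-alternative instance does not directly apply; the paper instead uses two agents with reversed full rankings each containing a single $\qggc$ at some position $k\le\lfloor m/2\rfloor$ (so the winner falls after the $\qggc$ for at least one agent) and a four-valued metric, which is the $m$-alternative analogue of the placement above. Your sketch would need both corrections before the $1+2\alpha$ part goes through.
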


As mentioned, to prove this lower bound, we present two different instances. The lower bound from \Cref{lem:mand_lb_left} works better for smaller values of \( \alpha \), while the bound from \Cref{lem:mand_lb_right} becomes tighter as \( \alpha \) approaches 1. \Cref{fig:mand_lb2} illustrates how these two bounds look like for instances with $m \in \{5, 10, 15, 25\}$. 



\begin{figure}[t]
\centering
\begin{minipage}{.48\textwidth}
  \includegraphics[width=\textwidth]{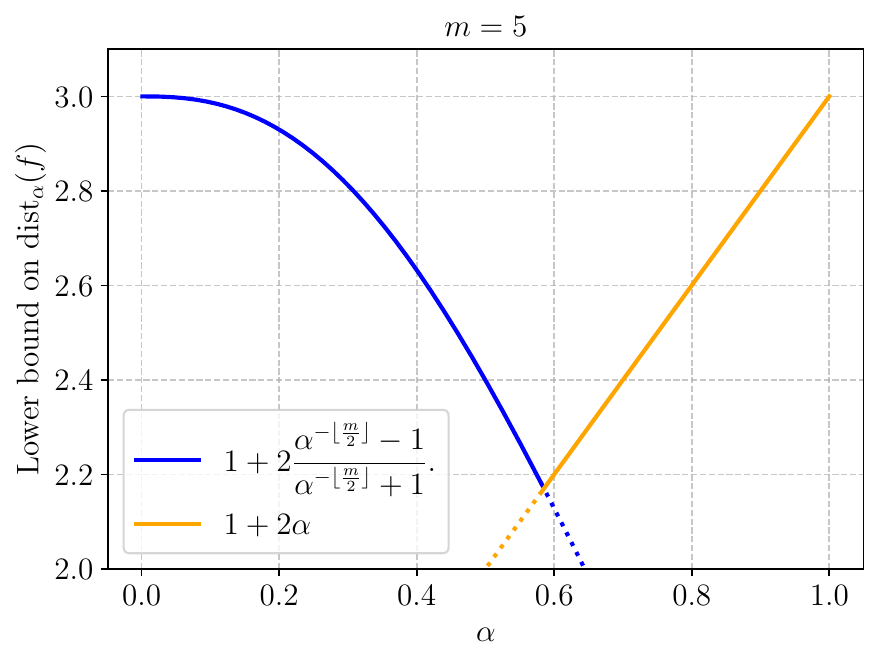}
  \end{minipage}
  \hfill
  \begin{minipage}{.48\textwidth}
  \includegraphics[width=\textwidth]{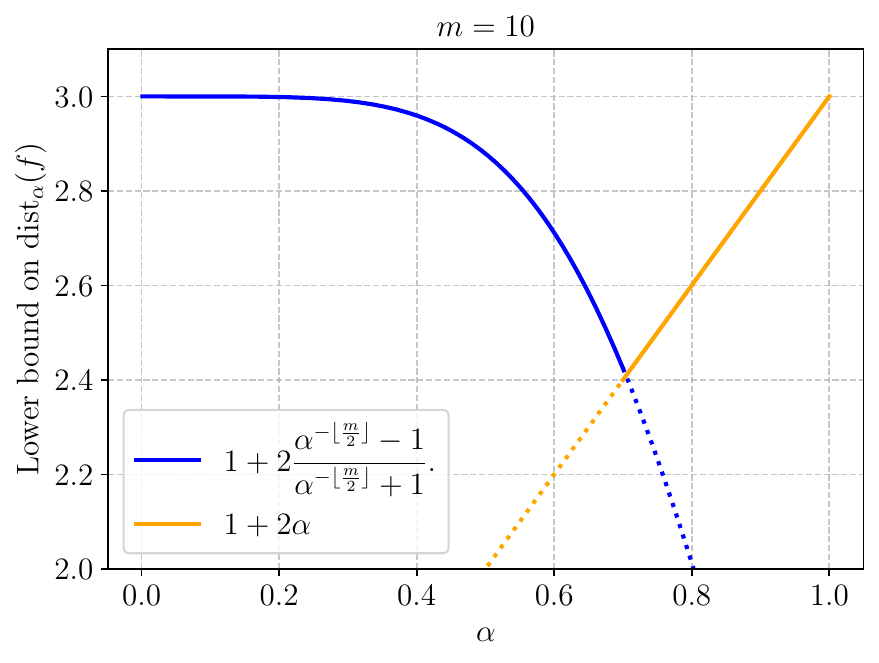}
  \end{minipage}
  \begin{minipage}{0.48\textwidth}
  \centering
  \includegraphics[width=\textwidth]{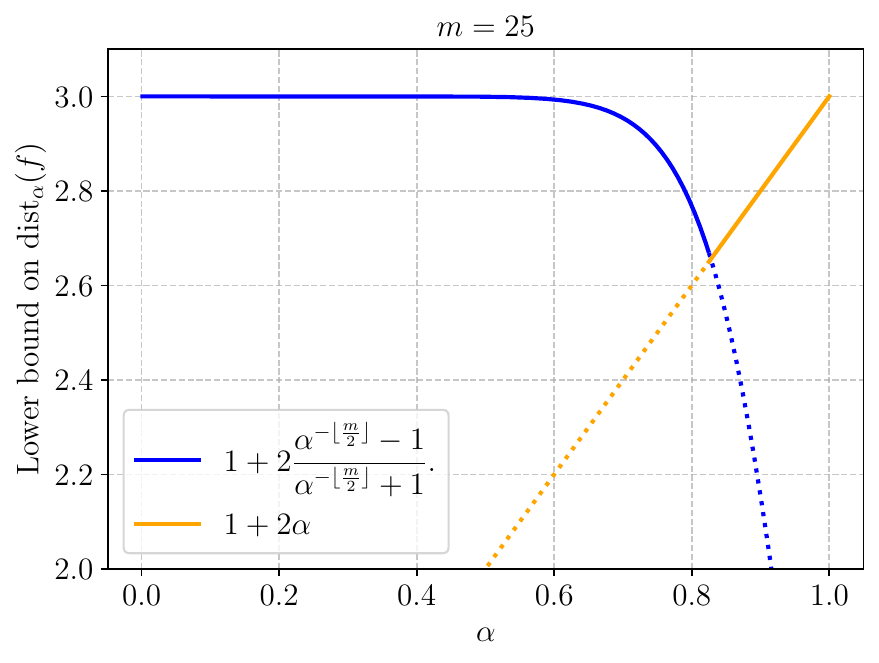}
   \label{fig:apx:25}
\end{minipage}
\hfill
\begin{minipage}{0.48\textwidth}
  \centering
  \includegraphics[width=\textwidth]{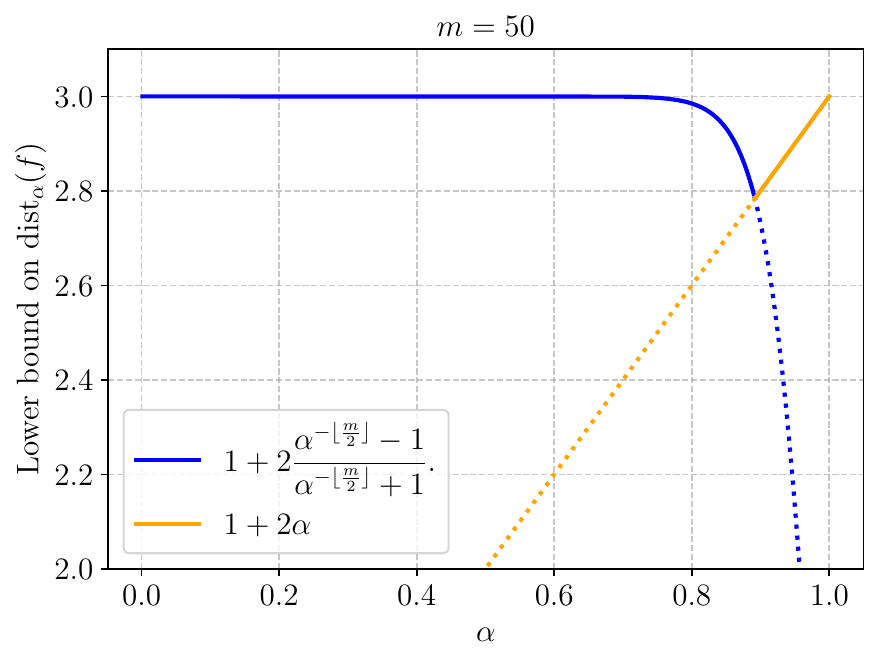}
   
\end{minipage}
   \caption{An illustration of the lower bounds established in \Cref{lem:mand_lb_left,lem:mand_lb_right} for 5, 10, 25 and 50 alternatives.}
   \label{fig:mand_lb2}
\end{figure}


\begin{restatable}{lemma}{LbA}
\label{lem:mand_lb_left}
    The metric distortion of any voting rule $f$ using ranked preferences with intensities ballot format is at least
    $1 + 2 \frac{1 - \alpha^{\lfloor\frac{m}{2}\rfloor}}{1 + \alpha^{\lfloor\frac{m}{2}\rfloor}}\cdot$
\end{restatable}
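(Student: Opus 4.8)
The plan is to construct a single family of elections (parametrized by $m$) together with two metrics that are both $\alpha$-consistent with the elicited profile, such that no matter which alternative the rule selects, one of the two metrics forces a cost ratio of at least $1 + 2\frac{1-\alpha^{\lfloor m/2\rfloor}}{1+\alpha^{\lfloor m/2\rfloor}}$.

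First I would set up the instance. Take two "central" alternatives $a$ and $b$ that will be the only plausible winners, and a chain of ``buffer'' alternatives $c_1, \dots, c_{\lfloor m/2\rfloor - 1}$ on each side, used to create the geometrically decaying $\alpha$-gaps. The point of the mandatory setting is that if an agent reports $\succ$ between two consecutive alternatives, the ratio of the two distances is pinned into the interval $(\alpha, 1]$, and if she reports $\ggc$ it is pinned into $[0,\alpha]$; I will use agents that report a carefully chosen pattern of $\succ$'s and $\ggc$'s so that in the ``left'' metric the distances from an agent to successive alternatives in her ranking scale down by factors of $\alpha$ for a while and then level off, and symmetrically in the ``right'' metric. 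Concretely, split the $n$ agents into two equal groups: group $L$ ranks $a$ first, then walks down the left chain toward $b$, and group $R$ does the mirror image. The reported intensities will be $\ggc$ on the first $\lfloor m/2\rfloor - 1$ consecutive pairs and $\succ$ afterwards (or the boundary variant that makes the algebra cleanest), so that agents are ``moderate-up-to-$1$'' in the sense used later — this is exactly the regime where the $\alpha^{\lfloor m/2\rfloor}$ term comes from: compounding $\lfloor m/2\rfloor$ factors of $\alpha$.

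Next I would exhibit the two metrics. In metric $d_L$, place all of group $L$'s agents essentially on top of $a$ (distance $0$ or negligible), and place group $R$'s agents so that their distance to $b$ is $1$ while their distance to $a$ is as large as $\alpha$-consistency allows — which, after telescoping the $\lfloor m/2\rfloor - 1$ strong-preference gaps of ratio $\alpha$ and one ordinary gap, is $\Theta(\alpha^{-\lfloor m/2\rfloor})$ times larger, or more precisely the bound works out to give $\scs(a)/\scs(b)$ (or the reverse) equal to the claimed quantity. Metric $d_R$ is the mirror image with the roles of $a$ and $b$ swapped. Both must be checked to be genuine metrics (triangle inequality along the chain) and to satisfy $\prefp \scns d_L$ and $\prefp \scns d_R$ — i.e. every reported $\succ$ really has ratio in $(\alpha,1]$ and every reported $\ggc$ really has ratio in $[0,\alpha]$. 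Then: if $f$ outputs $a$ (or anything other than $b$), use $d_R$, in which $b$ is optimal and $\scs(a)/\scs(b) \ge 1 + 2\frac{1-\alpha^{\lfloor m/2\rfloor}}{1+\alpha^{\lfloor m/2\rfloor}}$; if $f$ outputs $b$, use $d_L$ symmetrically; and if $f$ outputs some $c_j$, a short argument shows its cost is at least that of the worse of $a,b$ in the appropriate metric, so the same bound applies. By symmetry of the construction, it suffices to do the computation once.

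**The main obstacle** I anticipate is the bookkeeping needed to make the two metrics simultaneously (i) valid metrics, (ii) $\alpha$-consistent with the chosen $\succ$/$\ggc$ pattern, and (iii) tight enough to yield exactly $\frac{1-\alpha^{\lfloor m/2\rfloor}}{1+\alpha^{\lfloor m/2\rfloor}}$ rather than something weaker; in particular I expect the denominator $1+\alpha^{\lfloor m/2\rfloor}$ to emerge from normalizing so that the two groups' contributions to the ``winner's'' cost are $1$ and $\alpha^{\lfloor m/2\rfloor}$ respectively, and I will have to choose the lengths of the left and right chains (and where exactly the last $\ggc$ sits) so that $\lfloor m/2\rfloor$ gaps fit on each side. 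The handling of a winner among the $c_j$'s is a second, smaller obstacle: I would address it by making every $c_j$ strictly farther (in the relevant metric) from the bulk of agents than $b$ is, which follows from the same telescoping inequalities used to bound $\scs(a)$.
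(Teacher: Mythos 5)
Your construction cannot establish this bound, because you have the source of the $\alpha^{\lfloor m/2\rfloor}$ factor backwards. Under mandatory elicitation, a reported $\qsucc$ \emph{caps} the gap between consecutive distances (their ratio must exceed $\alpha$), while a reported $\qggc$ \emph{forces} a gap. The paper's instance therefore uses two agents with reversed rankings and no $\qggc$ at all: the punishing metric places agent $1$ at distance $1$ from every alternative (a placement that is consistent only because she reports no intensities), lets agent $2$'s distances grow by the maximal permitted factor $1/\alpha$ per position, and caps that growth after $\lfloor m/2\rfloor$ steps via a tight triangle inequality; the term $\alpha^{\lfloor m/2\rfloor}$ measures how much the $\qsucc$-semantics \emph{limits} the adversary below the classical bound of $3$. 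Your profile does the opposite: both groups report $\qggc$ on their first $\lfloor m/2\rfloor-1$ pairs (incidentally this is not ``moderate-up-to-$1$,'' where the first gap is $\qsucc$). Those $\qggc$'s must hold in \emph{every} consistent metric, including the one you use to punish the other side, and they pin each agent to her top choice: for $i\in L$ you are forced to have $d(i,a)\le \alpha\, d(i,c)$ for every $c\neq a$, and symmetrically for $R$.

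This is quantitatively fatal. Take any metric consistent with your profile, any alternative $c\neq a$, and $j^{\ast}=\argmin_{j\in L} d(j,c)$. For $i\in R$ we get $d(i,a)\le d(i,c)+d(j^{\ast},c)+d(j^{\ast},a)\le d(i,c)+(1+\alpha)\,d(j^{\ast},c)$, while for $i\in L$ we get $d(i,a)\le\alpha\, d(i,c)$; summing over agents and using $|L|=|R|$ gives $\scs(a)\le(1+2\alpha)\,\scs(c)$ for every $c$, and the same holds for $b$ by symmetry (telescoping your full $\qggc$ chain even gives $\scs(a)\le(1+2\alpha^{\lfloor m/2\rfloor-1})\,\scs(b)$). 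Hence on your instance the distortion of $a$ (and of $b$) is at most $1+2\alpha$ in every consistent metric, which tends to $1$ as $\alpha\to0$, whereas the bound you must force tends to $3$; so no rule can be pushed above $1+2\frac{1-\alpha^{\lfloor m/2\rfloor}}{1+\alpha^{\lfloor m/2\rfloor}}$ on this profile. The same effect shows up in your metric $d_L$: putting group $L$ on top of $a$ forces $d(a,b)\approx\alpha^{-\lfloor m/2\rfloor}$ for group $R$ to be that far from $a$, so $\scs(b)$ blows up together with $\scs(a)$ and the ratio collapses to roughly $1$. To prove the lemma you should instead have the agents report only $\qsucc$, let the ``punished-side'' agent sit equidistant from all alternatives (legitimate precisely because she reports no intensities), and extract $\alpha^{\lfloor m/2\rfloor}$ from the cap that plain $\qsucc$ imposes on how fast the far agent's distances may grow, as in the paper's two-agent construction.
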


\begin{proof}
First, note that when $\alpha = 1$, the lower bound $1$ is trivial by definition of distortion. Also, we will consider the case $\alpha = 0$ separately. Hence, we may assume that $0 < \alpha < 1$.\\
Consider an instance with two agents and $m$ alternatives, where the first agent submits her vote as $a_1 \succ a_2 \succ \cdots \succ a_m$, and the second agent submits her vote in the reverse order, i.e., $a_m \succ a_{m-1} \succ \cdots \succ a_1$. Then, assume that the rule $f$ selects alternative $a$, and without loss of generality, $a$ appears in the first half of the ranking of the first agent. Now, consider the metric space $d$ in which alternatives are at unit distance from agent $1$, but for $b \in \alts$, $$d(2, b) = \alpha^{-\min(\rank{2}(b) - 1, \lfloor \frac{m}{2}\rfloor)} \cdot \frac{2}{\alpha^{-\lfloor \frac{m}{2}\rfloor} - 1}.$$ We claim that this distance function satisfies the triangle inequality and is consistent with  \(\prefp\).
Regarding consistency, for the second agent, $\frac{1}{\alpha}d(2, \pi_2(i)) = d(2, \pi_2(i+1))$ for $1 \leq i \leq \lfloor\frac{m}{2} \rfloor$. Hence, it is straightforward to verify that $d$ is consistent with $\pref_1$. 
Therefore, we only need to ensure that $d$ satisfies the triangle inequality, which means that for every \(b \neq c\) and \(i \neq j\),
\begin{equation}
\label{tri ineq}
d(i, b) \leq d(i, c) + d(j, c) + d(j, b).
\end{equation}
Note that $d(i, c) + d(j, c) + d(j, b)
 \geq 1+1+\frac{2}{\alpha^{-\lfloor \frac{m}{2}\rfloor } - 1}  =\alpha^{ - \lfloor \frac{m}{2}\rfloor} \cdot \frac{2}{\alpha^{-\lfloor \frac{m}{2}\rfloor} - 1} $, which is the maximum possible value of $d(i,b)$. This implies that $d$ satisfies the triangle inequality. Therefore \(\prefp \scns d\).

Now, we use this metric to give a bound on the distortion, we have:
\begin{align*}
\dist_\alpha(a, \elc) & \geq \frac{\scs(a)}{\scs(a_{m})} = \frac{d(1, a) + d(2, a)}{d(1, a_{m}) + d(2, a_{m})} \geq \;\;  \frac{1 + d(2, a_{\lfloor \frac{m+1}{2}\rfloor})}{1 + d(2, a_{m})} & \left(\text{Since } \rank{1}(a) \leq \lfloor \frac{m+1}{2}\rfloor \right)\\
&=  \frac{1 + \frac{2 \alpha^{-\lfloor \frac{m}{2}\rfloor}}{\alpha^{-\lfloor \frac{m}{2}\rfloor} - 1}}{1 + \frac{2}{\alpha^{-\lfloor \frac{m}{2}\rfloor} - 1}} = \frac{3 + \frac{2}{\alpha^{-\lfloor \frac{m}{2}\rfloor} - 1}}{1 + \frac{2}{\alpha^{-\lfloor \frac{m}{2}\rfloor} - 1}} = 1 + 2 \frac{1 - \alpha^{\lfloor\frac{m}{2}\rfloor}}{1 + \alpha^{\lfloor\frac{m}{2}\rfloor}}\cdot  
\end{align*}
Therefore, $\dist_\alpha(f)$ is at least $1 + 2 \frac{1 - \alpha^{\lfloor\frac{m}{2}\rfloor}}{1 + \alpha^{\lfloor\frac{m}{2}\rfloor}}\cdot$\\
Now, assume $\alpha = 0$. Consider an instance with two agents $\{1, 2\}$ and two alternatives $\{a, b\}$. First agent submits $a \succ b$ and the second one submits $b \succ a$. Without loss of generality assume that $f$ selects $a$ as the winner. Now, consider the metric $d$ such that $d(1, a) = d(1, b) = 1$, and $d(2, b) = 0, d(2, a) = 2$. Clearly, $d$ satisfies triangle inequality as well as intensity constraint. Also, $\frac{\scs(a)}{\scs(b)} = 3$. Therefore, $f$ has distortion at least $3$, as desired. 
\end{proof}

\begin{restatable}{lemma}{mandLbRight}
    \label{lem:mand_lb_right}
    The metric distortion of any voting rule $f$ using ranked preferences with intensities ballot format is at least
    $1 + 2\alpha$.
\end{restatable}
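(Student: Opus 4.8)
The plan is to construct a small explicit instance in which any deterministic rule is forced to pick an alternative whose social cost is at least $1+2\alpha$ times the optimum. A natural choice is an instance with two agents and two alternatives $\{a,b\}$, where agent $1$ submits $a \succ b$ (a \emph{weak} preference, so $\intns_1(1) = \qsucc$) and agent $2$ submits $b \succ a$ (also weak). Since the profile is symmetric between $a$ and $b$, whichever alternative $f$ selects, we may assume without loss of generality that it picks $a$. Then I want to exhibit a metric $d$ with $\prefp \scns d$ (mandatory $\alpha$-consistency) that makes $\scs_d(a)/\scs_d(b)$ as large as possible.

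The key constraints on the metric are: agent $1$ weakly prefers $a$, so $d(1,a) \le d(1,b)$ but $d(1,a) > \alpha \cdot d(1,b)$ (the mandatory condition, since agent $1$ did \emph{not} report $\qggc$); symmetrically $d(2,b) \le d(2,a)$ with $d(2,b) > \alpha \cdot d(2,a)$; and the triangle inequality must hold. To push the distortion up, I would make agent $2$ coincide with $b$ as closely as the constraint allows: set $d(2,b) = \alpha$ and $d(2,a) = 1$ (so the weak-preference gap constraint $d(2,b) > \alpha d(2,a)$ is met in the limiting/open sense — one takes $d(2,b) = \alpha + \varepsilon$ and lets $\varepsilon \to 0$, or one notes the bound is a supremum). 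For agent $1$, placing them symmetrically at distance $1$ from $b$ and $\alpha$ from $a$ does not obviously help; instead I would look for the placement maximizing $\big(d(1,a)+d(2,a)\big)/\big(d(1,b)+d(2,b)\big) = \big(d(1,a)+1\big)/\big(d(1,b)+\alpha\big)$ subject to $\alpha\, d(1,b) < d(1,a) \le d(1,b)$ and triangle inequality through agent $2$ / point $b$. Taking $d(1,b)$ large is limited by the triangle inequality $d(1,b) \le d(1,a) + d(2,a) + d(2,b) = d(1,a)+1+\alpha$; combined with $d(1,a) \le d(1,b)$ this is slack, so the binding consideration is that to maximize the ratio we want $d(1,a)$ as large as possible relative to $d(1,b)$, i.e. $d(1,a) = d(1,b) =: x$, giving ratio $(x+1)/(x+\alpha)$, which is decreasing in $x$, so we want $x$ small — but $x = d(1,a)$ must be at least $d(2,a) - d(2,b) = 1-\alpha$ by triangle inequality, forcing $x \ge 1-\alpha$. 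Plugging $x = 1-\alpha$ yields $(2-\alpha)/1 = 2-\alpha$, which is not quite $1+2\alpha$; so I expect the optimal configuration is slightly different — likely one where the triangle inequality $d(1,a) \ge d(2,a)-d(2,b)$ is the active constraint with $d(1,a) = 1-\alpha$, $d(1,b)$ chosen so that $\alpha d(1,b) < d(1,a)$, i.e. $d(1,b) < (1-\alpha)/\alpha$, and we then \emph{maximize} over $d(1,b)$: ratio $= (1-\alpha + 1)/(d(1,b) + \alpha) = (2-\alpha)/(d(1,b)+\alpha)$, decreasing in $d(1,b)$, so take $d(1,b) = d(1,a) = 1-\alpha$ again. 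I would therefore reexamine whether the correct instance instead places agent $1$'s two alternatives both far away — e.g. $d(1,a) = d(1,b)$ both equal to some value and agent $1$ contributes symmetrically — or whether one should instead use an instance where the \emph{optimal} alternative is $b$ with $\scs_d(b)$ made tiny by having \emph{both} agents' reported gap constraints exploited in the same direction after the symmetry-breaking. The cleanest route is probably: agent $1$ at distance $d(1,a)=1$, $d(1,b)=1/\alpha - \varepsilon$ (so the no-$\qggc$ constraint $1 > \alpha d(1,b)$ barely holds), agent $2$ at $b$ with $d(2,b)=0$... but then agent $2$'s own gap constraint $d(2,b) > \alpha d(2,a)$ fails unless $d(2,a)=0$ too. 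Hence agent $2$ must sit at distance $\approx \alpha \cdot d(2,a)$ from $b$; scaling so $d(2,a)=1$ gives $d(2,b)=\alpha$, and then $\scs(a) = 1 + 1 = 2$, $\scs(b) = (1/\alpha) + \alpha$, ratio $2\alpha/(1+\alpha^2) < 1$ — wrong direction, meaning $b$ would be chosen. So I must make $a$ the forced-bad choice: swap roles and accept ratio $\scs(b)/\scs(a)$...

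Given the above false starts, the honest plan is: write down the general two-agent two-alternative LP — variables $d(1,a),d(1,b),d(2,a),d(2,b)$, constraints being the triangle inequalities, the two mandatory weak-gap constraints (strict, handled as suprema), and normalization — and solve $\max \scs(a)/\scs(b)$ by hand; by the $a\leftrightarrow b$ symmetry of the profile the value is the same whichever $f$ picks, so it lower-bounds $\dist_\alpha(f)$. The main obstacle is pinning down which triangle inequality and which gap constraint are tight at the optimum so that the arithmetic produces exactly $1+2\alpha$; I expect the extremal metric to be the one where agent $2$ is essentially located at $b$ up to the forced factor-$\alpha$ slack, agent $1$'s distances satisfy $d(1,b) = d(1,a) + d(2,a) + d(2,b)$ with equality (triangle inequality tight through the $b$–agent-2 path) and $d(1,a)$ at its minimum $1-\alpha$ forced by $d(1,a) \ge |d(2,a)-d(2,b)|$, yielding after normalization $\scs(a)/\scs(b) = 1+2\alpha$ in the limit. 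Once the right configuration is identified, verifying $\prefp \scns d$ (the two gap conditions and all four triangle inequalities) and computing the ratio is routine; combining this lemma with \Cref{lem:mand_lb_left} via the trivial observation $\max(x,y)$ of the two bounds gives the statement of \Cref{thm:mand_lb}.
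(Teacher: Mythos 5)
There is a genuine gap: the instance you fix at the outset (two alternatives, both agents reporting only $\qsucc$) provably cannot yield the bound $1+2\alpha$. Under mandatory elicitation, the absence of $\qggc$ imposes $d(1,a) > \alpha\, d(1,b)$ and $d(2,b) > \alpha\, d(2,a)$, constraints that become \emph{more} restrictive as $\alpha \to 1$; so the worst-case ratio of your instance shrinks toward $1$ as $\alpha$ grows, while the target $1+2\alpha$ grows toward $3$. Concretely, solving the LP you propose (maximize $\bigl(d(1,a)+d(2,a)\bigr)/\bigl(d(1,b)+d(2,b)\bigr)$ subject to the two no-gap constraints and the bipartite triangle inequalities) gives supremum exactly $\frac{3-\alpha}{1+\alpha}$, approached at $d(2,a)=1$, $d(2,b)\to\alpha$, $d(1,a)=d(1,b)=\frac{1-\alpha}{2}$; this is strictly below $1+2\alpha$ for all $\alpha>\sqrt2-1$, so your plan would only re-derive the $m=2$ case of \Cref{lem:mand_lb_left}, not \Cref{lem:mand_lb_right}. (Two smaller errors along the way: $d(1,a)\ge d(2,a)-d(2,b)$ is not a valid constraint, since only agent--alternative distances are specified and the correct bipartite inequality is $d(2,a)\le d(2,b)+d(1,b)+d(1,a)$; and your guessed extremal configuration with $d(1,b)=d(1,a)+d(2,a)+d(2,b)$ tight actually makes the ratio less than $1$.)

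The missing idea is that the adversarial \emph{profile} must itself contain $\qggc$ reports: the bound $1+2\alpha$ is large for large $\alpha$ precisely because a reported $\ggc$ only forces an $\alpha$-factor gap, which is weak when $\alpha$ is close to $1$. The paper's construction uses two agents with reversed rankings over $m$ alternatives, each placing her single $\qggc$ at position $k\le\lfloor m/2\rfloor$, so the chosen alternative $a$ lies after the $\ggc$ of at least one agent (say agent $1$); the metric puts agent $1$ at distance $0$ from her top-$k$ and $1+\alpha^{-1}$ from the rest, and agent $2$ at distance $1$ from her top-$k$ and $\alpha^{-1}$ from the rest, giving $\scs(a)/\scs(a_1) \ge (2+\alpha^{-1})/\alpha^{-1} = 1+2\alpha$. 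In the two-alternative specialization this is the profile $a \ggc b$, $b \ggc a$ with $d(1,a)=1$, $d(1,b)=\alpha^{-1}$, $d(2,b)=0$, $d(2,a)=1+\alpha^{-1}$. You actually brushed against this at the end: you noticed that placing agent $2$ at $b$ violates her no-gap constraint, but the fix is to change her ballot to $b \ggc a$ (which the lower-bound designer is free to do), not to move her away from $b$; as written, your proof attempt does not reach $1+2\alpha$.
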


\begin{proof}
Fix $k$ and $m$ such that $k \leq \lfloor \frac{m}{2}\rfloor$ and consider an election with two agents and $m$ alternatives. 
Agents report intensity only in the $k\textsuperscript{\text{th}} $ position of their preferences. The first agent submits her vote as $a_1 \succ a_2 \succ \ldots \succ a_k \;\ggc\; a_{k+1} \succ \ldots \succ a_m$, and the second agent submits her vote in the reverse order, i.e., $a_m \succ a_{m-1} \succ \ldots \succ a_{m-k+1} \;\ggc\; a_{m-k} \succ \ldots \succ a_1$.

Let $f$ be a deterministic social choice rule and define $a := f(\prefp)$. Regarding the preferences, for at least one agent, $a$ appears after $\qggc$. Without loss of generality, we can assume that this happens for the first agent. Consider the following metric:
\begin{equation*}
d(1 ,b)  =
\begin{cases}
        0  &  \rank{1}(b) \leq k\\
        1 + \alpha^{-1} & \rank{1}(b) > k\\
\end{cases}, \text{ and }
d(2 ,b)  =
\begin{cases}
        1  & \rank{2}(b) \leq k\\
        \alpha^{-1} & \rank{2}(b) > k\\
\end{cases}.
\end{equation*}

Since $k \leq \lfloor \frac{m}{2}\rfloor$ and metric has only four distinct values, it is straightforward to verify that $d$ satisfies the triangle inequality and furthermore, $\prefp \scns d$.
\\
Note that $\scs(a_1) = d(1, a_1) + d(2, a_1) = \alpha^{-1}$, and $\scs(a) = d(1, a) + d(2, a) = 1 + \alpha^{-1} + d(2, a_l) \geq 2 + \alpha^{-1}$. Hence, we have:
\begin{align*}
    \dist_\alpha(a, \elc) & \geq 
    \frac{\scs(a)}{\scs(a_1)} \geq \frac{2 + \alpha^{-1}}{\alpha^{-1}} = 1 + 2\alpha.
\end{align*}
Therefore, $\dist_\alpha(f)$ is at least $1 + 2\alpha$. \qedhere
\end{proof}

\subsection{Upper Bound}
To complement our lower bounds, in this section, we present a voting rule that achieves a distortion better than 3. To do so, we start by analyzing the special case of moderate-up-to-$k$ preferences. 


\begin{definition}
    \label{def:k-indec}
We say a preference \( \pref = (\pi, \intns) \) is moderate-up-to-\(k\), if it exhibits the first intensity at the \( k+1\textsuperscript{\text{th}} \) position, i.e., \( \intns(k+1) = \qggc \) and \( \forall i \in [k]: \intns(i) = \qsucc \). If no \( \qggc \) appears in \( \intns \), the preference is considered as moderate-up-to-\((m-1)\). A preference profile is called moderate-up-to-\(k\) if all agents have moderate-up-to-\(k\) preferences.

\end{definition}

Our first goal is to prove an upper bound on the optimal metric distortion over moderate-up-to-$k$ preference profiles. Define 
\begin{align}
  w_{k} &=
  \begin{cases}
    \frac{\alpha+1}{3\alpha+1} & k = 1 \\
    1-\frac{2\alpha}{(1-\alpha)  t_{k-1} + 2  w_{k-1} + 2  \alpha} & k > 1
  \end{cases}, \text{ and} \label{first rec} \\
    t_{k} &=
    \begin{cases}
        \frac{1 - \alpha}{3\alpha + 1} & k=1\\
        w_{k} + (1 - w_{k})  t_{k-1} & k>1
    \end{cases}\label{second rec}.
 \end{align}
 \noindent Now we can state the main theorem as follows:

\begin{theorem}
\label{thm:mand_ub}
    There exists a rule $f$ that for $k \in [m - 1]$, given a moderate-up-to-$k$ preference profile, finds an alternative with metric distortion of at most $2+\max(\alpha, t_k).$
\end{theorem}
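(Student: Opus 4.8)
The plan is to generalize the Plurality Matching rule of \citet{GHS20} into a family of \emph{Positional Scoring Matching (PSM)} rules and then argue that, on moderate-up-to-$k$ profiles, some member of this family achieves distortion $2 + \max(\alpha, t_k)$. First I would define, for a fixed positional scoring vector $(s_1,\dots,s_k)$ (to be chosen later from the recursion in \eqref{first rec}–\eqref{second rec}), a notion of a fractional assignment of agent ``mass'' to alternatives: each agent contributes weight $s_j$ toward the alternative in position $j$ of her ranking, but only for positions $j \le k$, i.e.\ positions \emph{before} her first $\qggc$. The rule then looks for an alternative $a$ that ``dominates'' in a matching/flow sense — every other alternative $b$ can route its incoming mass to agents who (weakly) prefer $a$ to $b$ with the appropriate intensity slack. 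The key structural fact I would prove is an analogue of the Plurality-Matching existence lemma: for the right choice of scores there always exists such a dominating alternative, and its distortion is bounded by analyzing the worst-case metric.

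The heart of the argument is the distortion bound for a single dominating alternative $a$. Here I would fix an adversarial metric $d$ with $\prefp \scns d$ and the true optimum $b = \opt_d$, and bound $\scs_d(a)/\scs_d(b)$. The matching guarantees that the agents who prefer $a$ to $b$ can be charged against those who prefer $b$ to $a$; for an agent $i$ who ranks $a$ in position $j \le k$ and prefers $b$, we only know $d(i,a) \le \alpha^{-1} d(i, \pi_i(j+1)) \le \cdots$, but crucially since positions $1,\dots,k$ carry only $\qsucc$, the gap between $d(i,a)$ and $d(i,b)$ is controlled by a product of $\alpha^{-1}$ factors bounded because $b$ sits within the first few positions too — actually the clean bound comes from $\qsucc$ at position $j$ giving $d(i,\pi_i(j)) > \alpha \, d(i,\pi_i(j+1))$, so consecutive distances in the ``moderate prefix'' are within a factor $\alpha$ of each other. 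Combining these per-agent inequalities with the matching structure and triangle inequality $d(i,a) \le d(i,b) + d(j', b) + d(j', a)$ along matched edges yields a recursive bound; I expect the recursions \eqref{first rec} and \eqref{second rec} to emerge exactly as the solution of this layered charging argument, where $w_k$ is (roughly) the optimal score weight to put on the worst ``layer'' and $t_k$ is the resulting distortion contribution from the moderate prefix, with the $\max(\alpha, \cdot)$ and the additive $2$ coming from the triangle-inequality step and the contribution of the $\qggc$ tail.

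For the base case $k=1$ I would verify directly: a moderate-up-to-$1$ profile means every agent has $\qggc$ in the second position, so we only get to use plurality scores $s_1$, and the analysis should reproduce $t_1 = \tfrac{1-\alpha}{3\alpha+1}$, giving distortion $2 + \max(\alpha, \tfrac{1-\alpha}{3\alpha+1})$ — note at $\alpha = 1$ this is $2$ and at $\alpha = 0$ it is $3$, as sanity checks. For the inductive step $k > 1$, I would peel off the last scored position: either the dominating alternative already appears in the first $k-1$ positions for enough agents (reduce to the $k-1$ case with distortion governed by $t_{k-1}$), or we must charge through position $k$, in which case the extra $\alpha^{-1}$ slack there produces the term $\tfrac{2\alpha}{(1-\alpha)t_{k-1} + 2 w_{k-1} + 2\alpha}$ appearing in $w_k$. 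The main obstacle I anticipate is establishing the \emph{existence} of a dominating alternative for the chosen scores — this is where \citet{GHS20} needed a non-trivial Hall-type / LP-duality argument, and here it must be adapted to weighted positional scores and to the $\alpha$-intensity-constrained preference structure; I would most likely phrase it as the value of a two-player zero-sum game (as the introduction promises) and invoke the minimax theorem, then check that the game value corresponds to the claimed distortion. The remaining steps — verifying the recursions are well-defined, monotone, and bounded by the claimed quantity — should be routine calculus.
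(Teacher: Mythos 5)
Your high-level plan is the right one and matches the paper's architecture (Positional Scoring Matching rules, a matching-based dominating alternative, and a zero-sum game whose value is $t_k$), but as written it has three concrete gaps. First, the existence of the dominating alternative needs no adaptation at all: the domination graph depends only on the rankings, and the Ranking--Matching Lemma of Gkatzelis et al.\ (Theorem~\ref{thm:rml}) already holds for \emph{arbitrary} distributions $\vec p\in\Delta(\ags)$, $\vec q\in\Delta(\alts)$, so it applies verbatim once $\vec q$ is built from positional scores; the intensities never enter that step. Consequently, the zero-sum game in the paper is \emph{not} an existence device: it is the optimization of the score vector over $\Delta([k+1])$ against the per-position coefficients $\max(1,\alpha^{j-i})$ that come out of the distortion analysis (Lemma~\ref{lem:scoring-matching}). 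That charging lemma is the technical heart of the proof — splitting agents by the position $l$ of the optimum $b$, using $d(i,\pi_i(j))\le \alpha^{-(j-l)}d(i,b)$ inside the moderate prefix and $d(i,\pi_i(j))\le \alpha\, d(i,b)$ for agents with $\rank{i}(b)\ge k+2$ (past the $\qggc$), which is where the additive $2$ and the $\max(\alpha,\cdot)$ arise — and your proposal defers it entirely (``I expect the recursions to emerge''). Likewise, invoking the minimax theorem only gives that \emph{some} optimal score vector exists; to get the closed-form value $t_k$ and the recursions \eqref{first rec}–\eqref{second rec} you must exhibit and verify an explicit equilibrium, which the paper does by induction on $k$ using the block structure of the payoff matrix and its symmetry about the minor diagonal (Lemma~\ref{lem:mixed}).

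Second, your score support is off by one in a way that breaks the base case. Under the paper's definition, a moderate-up-to-$k$ agent has $\intns(j)=\qsucc$ for $j\in[k]$ and $\intns(k+1)=\qggc$, so the moderate prefix consists of positions $1,\dots,k+1$ and the optimal scoring vector has $k+1$ nonzero entries. For $k=1$ the optimizer is $(w_1,1-w_1)=\bigl(\tfrac{\alpha+1}{3\alpha+1},\tfrac{2\alpha}{3\alpha+1}\bigr)$, not plurality: if you place all weight on $s_1$, the coefficient for agents who rank $b$ second is $1-r_2+r_1=2$, so the analysis only yields distortion $3$ rather than $2+\max(\alpha,t_1)$. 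So ``we only get to use plurality scores'' in the base case is exactly the choice that fails, and the same issue propagates to your inductive step if scores stop at position $k$: agents ranking $b$ at position $k+1$ (inside the prefix but outside the support) force the bound back to $3$. Fixing this requires scoring all $k+1$ prefix positions and then carrying out the two deferred computations above.
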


To prove this theorem, we adopt the method used by \citet{GHS20} to establish the existence of a deterministic rule with a metric distortion of 3. Their approach involves the $(\vec p, \vec q)$-domination graph, which is a bipartite graph defined for each alternative as follows.

\begin{definition}
\label{pq Graph}
Given an election, \(\elc = (\ags, \alts, \prefp)\), and vectors $\vec p\,\in\,\Delta(\ags)$ and $\vec q\,\in\,\Delta(\alts)$, the $(\vec p,\vec q)-$domination graph of an alternative $a$ is the vertex-weighted bipartite graph $$G_{\vec p, \vec q}^{\elc}(a) = (\ags, \alts, E_{a}, \vec p, \vec q),$$ where $(i, c) \in E_{a}$ if and only if $a \succ_i c$. The vertex corresponding to $i \in \ags$ has weight $p(i)$ and vertex $c \in \alts$ has weight $q(c)$. 
\end{definition}

\begin{definition}
\label{frac match}
We say $G_{\vec p, \vec q}^{\elc}(a)$ admits a fractional perfect matching, if there exists a weight function $w: E_{a} \to \mathbf{R}_{\ge0}$ such that the total weight of the edges incident to each vertex equals the weight of the vertex, i.e., for each $i \in \ags$ we have $\sum_{c: (i, c)\in E_{a}} w(i,c) = p(i)$, and for each $c \in \alts$ we have $\sum _{i: (i,c) \in E_{a}} w(i,c) = q(c)$.    
\end{definition}

By proving the following theorem, known as the Ranking-Matching Lemma, \citet{GHS20} show that such a matching exists. This forms the core of the proof of existence of a voting rule with a metric distortion 3.

\begin{theorem}{\label{thm:rml} (Ranking-Matching Lemma)}
For any election $\elc = (\ags, \alts, \prefp)$, and vectors $\vec p \in \Delta (\ags)$ and 
$\vec q \in \Delta (\alts)$, there exists an alternative a whose $(\vec p,\vec q)-$domination graph $G_{\vec p, \vec q}^{\elc}(a)$ admits a fractional perfect matching.
\end{theorem}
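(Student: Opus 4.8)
The plan is to prove the Ranking--Matching Lemma by reducing it to a Hall-type feasibility condition and then running an extremal argument over the alternatives. First I would record the reduction: since $\vec p\in\Delta(\ags)$ and $\vec q\in\Delta(\alts)$ give $\sum_{i}p(i)=\sum_{c}q(c)=1$, a max-flow/min-cut computation (equivalently, the defect form of Hall's theorem, which has no integrality gap in the fractional setting) shows that $G_{\vec p,\vec q}^{\elc}(a)$ admits a fractional perfect matching if and only if $q(S)\le p(N_a(S))$ for every $S\subseteq\alts$, where $N_a(S)$ is the set of agents adjacent to $S$ in $G_{\vec p,\vec q}^{\elc}(a)$ and $p(T):=\sum_{i\in T}p(i)$, $q(S):=\sum_{c\in S}q(c)$. (I use the usual convention that the vertex $a$ is adjacent to every agent, i.e.\ $a$ weakly dominates itself, without which $a$ would be isolated and no perfect matching could exist.) Thus it suffices to produce a single alternative $a$ satisfying this condition for all $S$.

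Next I would argue by contradiction. Suppose every alternative fails the condition; then each $a\in\alts$ has a \emph{deficient set} $S_a\subseteq\alts$ with $q(S_a)>p(N_a(S_a))$, which we may take inclusion-minimal among $a$'s deficient sets and with $a\notin S_a$ (if $a\in S_a$ then $N_a(S_a)=\ags$, forcing $q(S_a)>1$). Note $S_a\neq\emptyset$, since $q(\emptyset)=0$. The crucial consequence to record is that every agent $i\notin N_a(S_a)$ ranks \emph{every} member of $S_a$ strictly above $a$, and the total $\vec p$-weight of these agents is $1-p(N_a(S_a))>1-q(S_a)$; in particular $p(\{i:b\succ_i a\})>1-q(S_a)$ for each $b\in S_a$.

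To close the contradiction I would ``follow pointers''. In the model case where every $S_a$ is a singleton $\{b(a)\}$, the map $a\mapsto b(a)$ has no fixed point, so its functional graph contains a cycle $a_1\to a_2\to\dots\to a_r\to a_1$ of distinct alternatives; adding the inequalities $p(\{i:a_{j+1}\succ_i a_j\})>1-q(a_{j+1})$ around the cycle yields $\sum_{j}p(\{i:a_{j+1}\succ_i a_j\})>r-\sum_{j}q(a_j)\ge r-1$, while each agent, holding a strict order on $\{a_1,\dots,a_r\}$, satisfies at most $r-1$ of the $r$ cyclic comparisons $a_{j+1}\succ a_j$ (a full cyclic chain would give $a_1\succ_i a_1$), so the left-hand side is at most $(r-1)\sum_i p(i)=r-1$, a contradiction. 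For general, possibly large, deficient sets I would instead pick an alternative $a^*$ together with $S_{a^*}$ that is extremal across all alternatives (minimizing $|S_{a^*}|$, breaking ties by $q(S_{a^*})$), choose some $b\in S_{a^*}$, and use the inclusion-minimality of $S_{a^*}$ together with $b$'s own minimal deficient set $S_b$ to exhibit a deficient set for $a^*$ strictly inside $S_{a^*}$ (contradicting minimality), or to violate the weight bound recorded above directly.

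I expect the main obstacle to be exactly this last step for non-singleton deficient sets: when $S_a$ is large, $1-q(S_a)$ can be arbitrarily small, so the naive summation no longer beats $r-1$, and one must exploit the finer structure of inclusion-minimal deficient sets (e.g.\ disjointness of the sets of agents whose \emph{unique} $S_a$-member weakly below $a$ is a fixed $c\in S_a$) to force the shrinking or the contradiction. The Hall reduction in the first step and the structural bookkeeping about agents outside $N_a(S_a)$ are routine; everything hinges on making the extremal/cyclic argument go through uniformly.
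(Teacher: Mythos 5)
Your reduction and your model case are fine, but the proof as proposed has a genuine gap exactly where you flag it. The fractional-Hall/max-flow reformulation (with the convention that $a$ is adjacent to every agent, which is indeed how the domination graph must be read) is correct, and the singleton-deficient-set argument is valid: the pointer map has no fixed point, and summing $p(\{i: a_{j+1}\succ_i a_j\}) > 1-q(a_{j+1})$ around a cycle of length $r$ while each strict order can satisfy at most $r-1$ of the $r$ cyclic comparisons gives a contradiction. But this only rules out the configuration in which \emph{every} alternative has a singleton deficient set; in the contradiction setup nothing forces the inclusion-minimal deficient sets to be singletons, and Hall's condition on singletons is strictly weaker than Hall's condition on all subsets. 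For the general case you only describe an intention (``pick an extremal $a^*$, use inclusion-minimality of $S_{a^*}$ and $S_b$ to exhibit a smaller deficient set or violate the weight bound'') without any construction, and, as you yourself observe, the quantitative engine of the cycle argument breaks down because the slack $1-q(S_a)$ can be arbitrarily small when $q(S_a)$ is large. Handling arbitrary deficient sets simultaneously is not a finishing detail; it is the entire content of the Ranking--Matching Lemma, and the step you leave open is precisely the hard part.

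For calibration: the paper itself does not prove \Cref{thm:rml}; it imports it from \citet{GHS20}, where the proof of this lemma is the technical core of that work and is substantially more involved than a cycle-chasing or naive extremal argument (a later, genuinely different route is the constructive Plurality Veto argument of \citet{Kizilkaya2022}, in which voters sequentially decrement candidate scores and the surviving candidate is shown to pass the Hall condition). So your proposal should be regarded as a correct reduction plus a correct special case, with the main claim unproven: closing the gap would require an argument of the depth of one of these known proofs, not just bookkeeping about inclusion-minimal deficient sets.
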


\citet{GHS20} apply \Cref{thm:rml} with $\vec{p} = (1/n, \ldots, 1/n)$ and $q(a) = \mathsf{Plu}(a)/n$ to show that the Plurality Matching rule achieves a distortion of 3. However, to prove \Cref{thm:mand_ub}, we need a generalized form of their result. To achieve this, we introduce a new class of voting rules called \textbf{Positional Scoring Matching rules}. These rules utilize a scoring vector of length $m$ and calculate the total score of each alternative based on the rankings. We define $\vec{q}$ based on the total score of the alternatives and set $\vec{p}$ to be uniform, and then, identify an alternative whose $(\vec{p}, \vec{q})$-domination graph admits a fractional perfect matching. The definition and analysis of this new class of rules may also be of independent interest for future studies.

\begin{definition}
    Given a unit-sum scoring vector $\vec{s} = (s_1, s_2, \ldots, s_m)$, the Positional Scoring Matching rule $\psm_{\vec{s}}$ is defined as follows:
    For an election $\elc = (\ags, \alts, \prefp)$, set $\vec{p} = (1/n, \ldots, 1/n)$. For each  $a \in \alts$, calculate $q(a) = \sum_{i \in \ags} s_{\rank{i}(a)}/n$. The rule then selects the alternative $a^*$ for which $G_{\vec{p}, \vec{q}}^{\elc}(a^*)$ admits a fractional perfect matching.
\end{definition}

Note that, according to \Cref{thm:rml}, such an alternative always exists. It is also important to mention that the Plurality Matching rule, which was the first deterministic rule to achieve a distortion of 3, is a member of this class with $\vec{s} = (1, 0, \ldots, 0)$. We now analyze the distortion of rules within this class when applied to moderate-up-to-$k$ preference profiles.

\begin{restatable}{lemma}{scoringMatching}
\label{lem:scoring-matching}
    For any unit-sum weight vector $\vec r = (r_1, r_2, \ldots, r_{k+1}, 0, \ldots, 0)$ of length $m$, and moderate-up-to-$k$ preference profile $\prefp$,  $\dist_\alpha(\psm_{\vec r}(\prefp), \prefp)$ is upper bounded by 
    $$ 2+\max\!\left(\!\alpha,\!\max_{j \in [k+1]}\left(\displaystyle\sum\limits_{i \in [k+1] \land i \neq j }
        \!r_{i}\,\max\left(1, \alpha ^{j-i}\right) \right)\!-\!r_{j}\right).$$
\end{restatable}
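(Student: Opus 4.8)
The plan is to fix a moderate-up-to-$k$ profile $\prefp$, let $a^* = \psm_{\vec r}(\prefp)$, and let $d$ be an arbitrary metric with $\prefp \scns d$; I must bound $\scs_d(a^*)/\scs_d(\opt_d)$. By the definition of the rule, the domination graph $G_{\vec p,\vec q}^{\elc}(a^*)$ with $\vec p$ uniform and $q(c) = \sum_{i} r_{\rank{i}(c)}/n$ admits a fractional perfect matching $w$. The standard trick from \citet{GHS20} is to use this matching to charge $d(i,a^*)$ against the distances in $\opt_d$: for each edge $(i,c)$ with weight $w(i,c)$ we have $a^* \succ_i c$, so by consistency $d(i,a^*) \le d(i,c)$, and then the triangle inequality gives $d(i,a^*) \le d(i,c) \le d(i,\opt_d) + d(j,\opt_d) + d(j,c)$ for any agent $j$. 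Summing appropriately over the matching, the $d(i,\opt_d)$ and $d(j,\opt_d)$ terms collect into a bounded multiple of $\scs_d(\opt_d)$ (this is where the ``$2+$'' comes from, exactly as in the distortion-3 argument), and the residual terms are of the form $\sum_{i,c} w(i,c)\, d(i,c)$ where the pair $(i,c)$ satisfies $a^*\succ_i c$.

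The new ingredient is bounding these residual terms using the intensity information rather than just $d(i,c)\le d(i,a^*)$ trivially. Here I would exploit that $\prefp$ is moderate-up-to-$k$: for agent $i$, the first $k$ consecutive preference gaps are ``$\succ$'', so by $\alpha$-consistency each such gap is at most a factor $\alpha^{-1}$; hence if $a^*$ sits at rank $j \le k+1$ in agent $i$'s ranking and $c$ sits at rank $i' < j$, then $d(i,c) \ge \alpha^{\,j-i'} d(i,a^*)$ — wait, more usefully, I want the reverse: if $c$ is ranked \emph{above} $a^*$ that edge is not in $E_{a^*}$, and if $c$ is ranked below $a^*$ we get $d(i,c)$ could be large, but the matching weight on such edges is controlled because $q$ only puts mass on the top $k+1$ positions. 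The key combinatorial point is that the matching weight $w(i,c)$ for a fixed agent $i$ totals $1/n$ and is distributed among alternatives $c$ with $a^*\succ_i c$, while on the $\alts$-side the weight $q(c)$ is $\tfrac1n\sum_i r_{\rank{i}(c)}$. Rearranging the double sum $\sum_{(i,c)} w(i,c) d(i,c)$ by grouping according to the rank $j$ of $a^*$ in $i$'s list and the rank $i'$ of $c$, and using $d(i,c)\le \max(1,\alpha^{j-i'})\,d(i,a^*)$ when $i',j \in [k+1]$ (from the chain of $\succ$-gaps each contracting by at most $\alpha^{-1}$, noting $d(i,a^*)$ itself is the anchor) together with $d(i,a^*)\le \alpha\cdot(\text{next distance})$ when $a^*$ follows a $\ggc$ — this yields the worst case captured by the inner maximum over $j\in[k+1]$ of $\sum_{i\in[k+1],\,i\ne j} r_i \max(1,\alpha^{j-i}) - r_j$. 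The separate $\alpha$ in the outer $\max$ handles the case $j = k+1$, i.e., $a^*$ is ranked just after the first $\ggc$, where the $\alpha$-gap gives an extra contraction.

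Concretely the steps are: (1) invoke Theorem~\ref{thm:rml} / the rule definition to get the fractional perfect matching $w$; (2) write $\scs_d(a^*) = \sum_i d(i,a^*) = n\sum_{(i,c)} w(i,c) d(i,a^*)$ and insert the triangle inequality through each matched edge to split off the $2\,\scs_d(\opt_d)$ term, following \citet{GHS20} verbatim; (3) reduce to bounding $n\sum_{(i,c)}w(i,c)\,d(i,c)$; (4) for each agent $i$, bound $d(i,c)$ in terms of $d(i,a^*)$ using the moderate-up-to-$k$ structure and the $\alpha$-consistency gaps, getting the factor $\max(1,\alpha^{\rank{i}(a^*)-\rank{i}(c)})$ for pairs inside the top $k+1$, and a factor $\le \alpha^{-1}$-type bound otherwise that is absorbed; (5) swap the order of summation, use $\sum_c$ of the edge weights against $q(c)$ and the positional structure of $\vec r$ to identify the coefficient of each $d(i,a^*)$, and maximize over the possible rank $j = \rank{i}(a^*)$; (6) collect, subtracting the $r_j\,d(i,a^*)$ that is ``already counted'' on the left-hand side, to arrive at the stated bound.

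The main obstacle I anticipate is step (5): correctly accounting for how the fractional matching can distribute weight so that, simultaneously over all agents, the residual $\sum w(i,c) d(i,c)$ is maximized — one must argue that the worst case decouples agent-by-agent into the single-agent optimization $\max_j \big(\sum_{i\ne j} r_i\max(1,\alpha^{j-i})\big) - r_j$, which requires showing that the global matching constraints (the $q(c)$ caps on the $\alts$-side) do not help the adversary beyond this per-agent bound, or equivalently that there is a metric realizing the per-agent worst case simultaneously. Getting the indices and the off-by-one in the ``$k+1$'' vs ``$k$'' bookkeeping right, and verifying that the contributions from ranks below $k+1$ (where $\vec r$ is zero but $d(i,c)$ could be huge) are genuinely killed by $r_{\rank{i}(c)} = 0$ in the definition of $q$, will be the delicate part.
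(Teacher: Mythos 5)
Your skeleton (invoke the fractional perfect matching for $a^*=\psm_{\vec r}(\prefp)$, use $d(i,a^*)\le d(i,c)$ on matched edges, peel off social-cost terms via the triangle inequality, then exploit the moderate-up-to-$k$ gap structure) is the right family of argument, but the core charging step in your items (4)--(6) is misdirected and, as written, fails. You propose to bound the residual $\sum_{(i,c)}w(i,c)\,d(i,c)$ agent-by-agent in terms of $d(i,a^*)$, indexing the inner maximum by $j=\rank{i}(a^*)$ and ``subtracting the $r_j\,d(i,a^*)$ already counted on the left-hand side.'' Two problems. First, matched edges $(i,c)$ only require $a^*\succ_i c$ and $q(c)>0$; the support of $q$ in the top $k{+}1$ positions is a property of the agents who \emph{score} $c$, not of the matched agent $i$, so the matching can put weight on edges where $c$ lies below the $\qggc$ in $i$'s own ranking. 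There the mandatory semantics give only $d(i,\pi_i(k+2))\ge \frac{1}{\alpha}d(i,\pi_i(k+1))$, i.e.\ a lower bound, so $d(i,c)$ is \emph{not} controlled by any multiple of $d(i,a^*)$ and your hoped-for factor $\max(1,\alpha^{j-i'})$ does not exist for such edges; the zeros of $\vec r$ do not ``kill'' them. Second, even where the per-agent comparison is available, charging to $d(i,a^*)$ yields an inequality of the shape $\scs(a^*)\le 2\,\scs(\opt_d)+C\,\scs(a^*)$, which (after rearranging) gives a multiplicative bound of the form $\frac{2}{1-C}$, not the stated additive form $2+\max(\alpha,\beta)$; and the $-r_j$ in the statement cannot arise from terms on the left-hand side.

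The paper's proof routes the residual through the optimum instead: after $d(i,c)\le d(i,b)+d(b,c)$ with $b:=\opt_d$, the matching constraints turn the residual into $\sum_c n\,q(c)\,d(b,c)=\sum_i\sum_{j=1}^{k+1}r_j\,d(b,\pi_i(j))$, i.e.\ a sum over the \emph{scoring} agents and their top $k{+}1$ alternatives. One then bounds $d(b,\pi_i(j))\le d(i,b)+d(i,\pi_i(j))$ and compares $d(i,\pi_i(j))$ to $d(i,b)$ according to the rank $l=\rank{i}(b)$ of the optimum in $i$'s list: factor $1$ if $j<l$, factor $\alpha^{-(j-l)}$ if $l<j\le k+1$ (chained $\qsucc$ gaps), and factor $\alpha$ uniformly if $l\ge k+2$ (using the $\qggc$ at position $k{+}1$), the $j=l$ term vanishing because $d(b,b)=0$ — that is where the $-r_l$ comes from, and the standalone $\alpha$ branch of the outer max is the $l\ge k+2$ case, not ``$a^*$ ranked just after the $\qggc$.'' Everything is thus charged to $d(i,b)$, the bound is per-agent with a uniform coefficient $1+\max(\alpha,\beta)$, and the decoupling issue you worried about never arises. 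To fix your proposal you would need to adopt this reindexing (inner max over the rank of $b$, residual expressed via $q$ and the scoring agents), which is essentially the paper's proof.
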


\begin{proof}
     Let $\vec p$ be the uniform probability distribution that assigns probability $1/n$ to each agent in $N$. We define $a := \psm_{\vec r}(\prefp)$ and $b := \opt_{d}$ for an arbitrary metric $d$. Recall that $\vec{p}$ and $\vec{q}$ are related to $\psm_{\vec{r}}$, and $a$ is the alternative whose $(\vec{p}, \vec{q})$-domination graph admits a fractional perfect matching, with $w$ as its weight function. For each vertex $x \in \ags \cup \alts $ in $G_{\vec p, \vec q}^{\elc}(a)$, we denote by $R(x)$ the set of vertices adjacent to $x$ in the undirected version of $G_{\vec p, \vec q}^{\elc}(a)$. We have:
    \begin{align} 
    \sum\limits_{i \in \ags} d(i,a)\, & \leq\,\sum\limits_{i \in \ags}\,
    \sum\limits_{c \in  R(i)} 
    \frac{w(i,c)}{\sum\limits_{c' \in R(i)} w(i,c')}\,d(i,c)\ & \left(d(i, a) \leq d(i,c) \text{ for all $c \in R(i)$}\right)
    \nonumber  
    \\ 
    & =\,\sum\limits_{c\in\alts}
    \sum\limits_{i \in R(c)} 
    \frac{w(i,c)}{\sum\limits_{c' \in R(i)} w(i,c')}\,d(i,c)\ & \left(\text{rearranging the sum}\right)
    \nonumber
    \\ 
    & =\,\sum\limits_{c \in \alts }\,
    \sum\limits_{i \in R(c)}
    \frac{w(i,c)}{1/n}\,d(i,c) 
    \nonumber
    & \left(\sum\limits_{c \in R(i)} w(i,c) = p(i) = \frac{1}{n}\right)
    \\ 
    & \leq\,\sum\limits_{c \in \alts}\,
    \sum\limits_{i \in R(c)}
    \frac{w(i,c)}{1/n}\,\left(d(i,b)+d(b,c)\right) 
    \nonumber
    & \left(\text{triangle inequality}\right)
    \\ 
    \nonumber
    & = \sum\limits_{c \in \alts}\,
    \sum\limits_{i \in R(c)}
    \frac{w(i,c)}{1/n}\, d(i,b) + 
    \sum\limits_{c \in \alts}\,
    \sum\limits_{i \in R(c)}  n\,w(i,c)\,d(b,c)
    \\ 
    \nonumber
    & =\sum\limits_{i \in \ags} \,
    \sum\limits_{c \in R(i)}
    \frac{w(i,c)}{1/n}\, d(i,b) + 
    \sum\limits_{c \in \alts}\,
    \sum\limits_{i \in R(c)}  n\,w(i,c)\,d(b,c)
    & \left(\text{rearranging the first sum}\right)
    \\ 
    \label{upp partition}
    & =\sum\limits_{i \in \ags} d(i,b) + 
    \sum\limits_{c \in \alts}\,
    \sum\limits_{i \in R(c)}  n\,w(i,c)\,d(b,c)
    & \left(\sum_{c \in R(i)} w(i,c) = p(i) = \frac{1}{n}\right)
    \\ 
    \nonumber
    & =\sco(b) + 
    \sum\limits_{c \in \alts} n\, q(c)\,d(b,c)
    & \left(\sum_{i \in R(c)} w(i,c) = q(c)\right)
    \\ 
    \nonumber
    & =\sco(b) + \sum\limits_{c \in \alts}\,
    \sum\limits_{i \in \ags}r_{\rank{i}(c)}\,d(b,c)
    & \left(q(c) = \sum_{i \in \ags} \frac{r_{\rank{i}(c)}}{n} \right) 
    \\ 
    & =\sco(b) + \sum\limits_{i \in \ags}\,
    \sum\limits_{j=1}^{k+1}r_{j}\,d(b,\pi_i(j)).
    \nonumber
    & \left(\text{rearranging the sum}\right)
    \end{align}
    Now, we partition the summation over all agents into separate sums based on the position of $b$ in their preferences:
    \begin{align*}
    \sum\limits_{i \in \ags} d(i,a)\,& \leq \sco(b) +\sum\limits_{l=1}^{k+1}\, \sum\limits_{\substack{i:\\\pi_i(l) = b}}\,
\sum\limits_{j=1}^{k+1}r_{j}\,d(b,\pi_i(j))
    + \sum\limits_{\substack{i:\\ \rank{i}(b)\geq k+2}}\,
\sum\limits_{j=1}^{k+1}r_{j}\,d(b,\pi_i(j))
    \\ 
    & = 
    \sco(b) + \sum\limits_{l=1}^{k+1}\,
    \sum\limits_{\substack{i:\\ \pi_i(l) = b}}\,
    \sum\limits_{j=1}^{l-1}
    r_{j}\, d(b,\pi_i(j)) +
    \sum\limits_{l=1}^{k+1}
    \sum\limits_{\substack{i:\\ \pi_i(l) = b}}\!
    \sum\limits_{j=l+1}^{k+1}
    r_{j}  d(b,\pi_i(j))
    \\ &
    +\!\sum\limits_{\substack{i:\\ \rank{i}(b)\geq k+2}} \!
    \sum\limits_{j=1}^{k+1}
    r_{j}  d(b,\pi_i(j))
    & \left( \text{$d(b, b) = 0$} \right)
    \\ 
    & \leq
    \sco(b) +
    \sum\limits_{l=1}^{k+1}\,
    \sum\limits_{\substack{i:\\ \pi_i(l) =b}}\,
    \sum\limits_{j=1}^{l-1}
    r_{j}\left( d(i, b)+ d(i, \pi_i(j) )\right) 
    \\ & \qquad \quad \;+
    \sum\limits_{l=1}^{k+1}\,
    \sum\limits_{\substack{i:\\ \pi_i(l) =b}}\,
    \sum\limits_{j=l+1}^{k+1}
    r_{j}\left( d(i, b)+ d(i,\pi_i(j))\right) 
    \\ & \qquad \quad \;+
    \sum\limits_{\substack{i:\\ \rank{i}(b)\geq k+2}}\,
    \sum\limits_{j=1}^{k+1}
    r_{j} \, \left( d(i, b)+ d(i,\pi_i(j))\right)
    & \left( \text{triangle inequality}\right)
    \\ 
    & \leq
    \sco(b) +
    \sum\limits_{l=1}^{k+1}\,
    \sum\limits_{\substack{i:\\ \pi_i(l) = b}}\,
    \sum\limits_{j=1}^{l-1}
    2r_{j}\,d(i, b) & \left(\text{$\pi_i(j)\succ_i b$}\right)
    \\ & \qquad \quad \;+
    \sum\limits_{l=1}^{k+1}\,
    \sum\limits_{\substack{i:\\ \pi_i(l) =b}}\,
    \sum\limits_{j=l+1}^{k+1}
    r_{j}\left(1+\frac{1}{\alpha^{j-l}}\right)d(i, b)
    \\ & \qquad \quad \;
    +\sum\limits_{\substack{i:\\ \rank{i}(b)\geq k+2}}\,
    \sum\limits_{j=1}^{k+1}
    r_j \, (1+\alpha)\, d(i, b)
    & \left(\text{$\qsucc$ and $\qggc$ constraint}\right) 
    \\ & = 
    \sco(b) +
    \sum\limits_{l=1}^{k+1}\,
    \sum\limits_{\substack{i:\\ \pi_i(l)=b }}\,
    \left(
    \sum\limits_{j=1}^{l-1}
    2r_{j}\,d(i, b) 
    +
    \sum\limits_{j=l+1}^{k+1}
    r_{j}\left(1+\frac{1}{\alpha^{j-l}}\right)d(i, b)
    \right)
    \\ & \qquad \quad \;+
    (1+\alpha) \sum\limits_{\substack{i:\\ \rank{i}(b)\geq k+2}}d(b,i)
    & \left(\text{$\sum_{i = 1}^{k + 1} r_i = 1$}\right)
    \\ 
    & =\sco(b) +(1+\alpha)\,\sum\limits_{\substack{i:\\ \rank{i}(b) \geq k+2}}d(i, b) 
    \\ & \qquad \quad \; + \sum\limits_{l=1}^{k+1}\, \sum\limits_{\substack{i:\\\pi_i(l) = b}}\,
    d(b,i) \sum\limits_{\substack{j: j \neq l \\ 1\leq j \leq k+1}}
    r_{j}\left(1+\alpha^{\min\left(0,l-j\right)}\right) \\
    & =\sco(b) +(1+\alpha)\,\sum\limits_{\substack{i:\\ \rank{i}(b) \geq k+2}}d(i, b) 
    \\ & \qquad \quad \;
    +\sum\limits_{l=1}^{k+1}\, \sum\limits_{\substack{i:\\\pi_i(l) = b}}\,
    d(i, b)
    \left( 1- r_{l} +  \sum\limits_{\substack{j: j \neq l \\ 1\leq j \leq k+1}}
    r_{j}\,\alpha^{\min\left(0,l-j\right)}
    \right).
    \end{align*}
    If we choose $\beta$ such that
    \begin{equation}
     \label{beta}
     \max_{l \in [k+1]} \displaystyle\sum\limits_{\substack{j: j \neq l \\ 1\leq j \leq k+1}}
        r_{j}\,\alpha^{\min\left(0,l-j\right)} - r_{l} = \max_{l \in [k+1]} \displaystyle\sum\limits_{\substack{j: j \neq l \\ 1\leq j \leq k+1}}
        r_{j}\,{\max\left(1,\alpha^{l-j}\right)} - r_{l}=\beta,   
    \end{equation}
    then we have:
    \begin{align}
        \sco(a) \leq 
        \sco(b) +\sum\limits_{l=1}^{k+1}\, \sum\limits_{\substack{i:\\ \pi_i(l) = b}}\,
        d(i, b)
        \left(1+ \beta \right)
        +(1+\alpha)\,\sum\limits_{\substack{i:\\\rank{i}(b)\geq k+2}}d(i, b) 
        \nonumber 
        \\ 
        \leq  
        \sco(b) + 
        \left( 1+\,\max \left (\beta,\,\alpha \right)\right)
        \displaystyle\sum\limits_{i \in \ags} d(i, b) 
        \leq 
        \left( 2+\,\max \left (\beta,\,\alpha \right)\right)
        \sco(b).
        \nonumber
    \end{align}
    Hence, the distortion of $a$ is at most $2 + \max \left(\beta, \alpha \right)$.
\end{proof}

This analysis gives us the power to optimize the scoring vector $\vec r$ to get lower distortion. We can formulate our optimization problem as following:
$$\minimize_{\vec r  \in \Delta( [k+1])} \max_{j \in [k+1]}\left(\displaystyle\sum\limits_{i \in [k+1] \land i \neq j }
        r_{i}\,\max\left(1, \alpha ^{j-i}\right)\right) - r_{j},$$ which in the matrix form can be written as 
        \begin{equation}
        \minimize_{\vec r \in \Delta([k + 1])} \max_{j \in [k + 1]} \left( \vec {r}^{\intercal} M\right)_{j},\label{eq:opt-min-max}\end{equation}
        
      \noindent  where 
      \vspace{-2mm}
      $$M=
    \begin{bmatrix}
    -1 & 1 & \dots &  1 \\
    \frac{1}{\alpha} & -1 & \dots & 1\\
     \vdots &\vdots  &\ddots  & \vdots & \\
    \frac{1}{\alpha^{k}} & \frac{1}{\alpha^{k-1}} &  \dots \frac{1}{\alpha}  & -1 
\end{bmatrix}.$$

Optimizing Objective \eqref{eq:opt-min-max} is analogous to finding the value of a two-player zero-sum game with payoff matrix $M$ for the second player. 
    Assuming $\vec{x}$ and $\vec{y}$ represent the mixed strategies of the first (row) player and the second (column) player, respectively, then $\vec{x}^{\intercal} M \vec{y}$ represents the payoff of the second player. In Lemma \ref{lem:mixed}, we introduce an equilibrium of this game and the corresponding payoff.

\begin{restatable}{lemma}{lemMixed}
\label{lem:mixed}
For each $k\!\in\![m\!-\!1]$ consider the mixed strategy $\vec r^k = (w_k, w_{k-1}(1-w_k), \ldots,\!w_1\Pi_{j =2}^k (1-w_j),\!\Pi_{j =1}^k\!(1-w_j))$. Then $(\!\vec\!r^k,\!(\!\vec r^k\!)^R)$ is an equilibrium of the zero-sum game with payoff matrix $M$, and the payoff of this equilibrium, which is equal to the optimal value of Objective \ref{eq:opt-min-max}, is $t_{k}$ ($r^{R}$ is vector $r$ with reverse order of indices).
\end{restatable}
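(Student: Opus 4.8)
The plan is to verify directly that the pair $(\vec r^k, (\vec r^k)^R)$ is a Nash equilibrium of the zero-sum game with payoff matrix $M$ (for the column player), and to compute its value. By the minimax theorem, it suffices to show that (i) against the row player's strategy $\vec r^k$, every pure column strategy $j \in [k+1]$ yields the column player payoff exactly $t_k$, i.e. $(\vec r^k)^\intercal M$ is the constant vector $t_k \cdot \mathbf{1}$; and (ii) against the column player's strategy $(\vec r^k)^R$, every pure row strategy $i$ yields the column player payoff \emph{at most} $t_k$, i.e. $M (\vec r^k)^R \le t_k \cdot \mathbf{1}$ entrywise. Condition (i) already forces the value to be $t_k$ and shows $(\vec r^k)^R$ is a best response indifference condition for the column player is automatic; then (ii) certifies optimality of $\vec r^k$ for the row player. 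By the symmetry $M_{ij} = M^R_{k+2-j,\,k+2-i}$ (the matrix $M$ has a persymmetry: reflecting across the anti-diagonal transposes it), checking (i) for $\vec r^k$ is equivalent to checking (ii) for $(\vec r^k)^R$ with the inequality replaced by equality, so in fact both conditions reduce to the single computation that $(\vec r^k)^\intercal M = t_k\cdot\mathbf 1$; I would still remark on why this constant-vector property is what we need.

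The core is therefore an inductive computation. Writing $\vec r^k = (w_k,\ w_{k-1}(1-w_k),\ \dots,\ w_1\prod_{j=2}^k(1-w_j),\ \prod_{j=1}^k(1-w_j))$, I would observe the recursive structure: $\vec r^k$ restricted to its last $k$ coordinates is $(1-w_k)\cdot \vec r^{k-1}$ shifted by one index, and the first coordinate is $w_k$. This lets one relate the vector $(\vec r^k)^\intercal M$ for the $(k+1)\times(k+1)$ matrix to $(\vec r^{k-1})^\intercal M'$ for the $k\times k$ analogue, peeling off the first row and column of $M$. Concretely, for a pure column strategy $j$, the payoff $(\vec r^k)^\intercal M e_j$ splits into the contribution of the first row of $M$ (coefficient $w_k$, entries $-1$ or $1$ depending on whether $j=1$) plus $(1-w_k)$ times a shifted inner product that, by the induction hypothesis, equals a value expressible via $t_{k-1}$ and $w_{k-1}$. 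Carrying this through, for $j=1$ one gets one expression and for $j\ge 2$ another; setting them equal is exactly the defining recurrence \eqref{first rec} for $w_k$, and the common value is exactly $t_k$ as defined by \eqref{second rec}, namely $t_k = w_k + (1-w_k)t_{k-1}$. The base case $k=1$ is the explicit $2\times 2$ check with $M = \begin{bmatrix} -1 & 1 \\ 1/\alpha & -1\end{bmatrix}$, $\vec r^1 = (w_1, 1-w_1)$, $w_1 = \frac{\alpha+1}{3\alpha+1}$, giving both components of $(\vec r^1)^\intercal M$ equal to $t_1 = \frac{1-\alpha}{3\alpha+1}$.

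I expect the main obstacle to be bookkeeping the index shifts and the two cases ($j=1$ versus $j\ge 2$, and within $j\ge2$ tracking the powers $\alpha^{1-j}$ coming from the first row versus the recursively-handled block) so that the algebra collapses cleanly onto recurrences \eqref{first rec}–\eqref{second rec}; a clean way to organize this is to prove by induction the stronger statement that $(\vec r^k)^\intercal M$ is the constant vector $t_k\cdot\mathbf 1$ \emph{and} that $\sum_i r^k_i \alpha^{-(i-1)}$ and similar "moment" sums have closed forms, so that the first-row contribution in the inductive step is immediately available. One should also double-check the edge case $\alpha = 0$ (where $1/\alpha$ is not defined): there the game/matrix degenerates, and I would either exclude it as in Lemma \ref{lem:mand_lb_left} or verify the recurrences extend by continuity. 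Finally, I would note that a Nash equilibrium of a zero-sum game exists and its value equals $\min_{\vec r}\max_j(\vec r^\intercal M)_j$ by von Neumann's minimax theorem, so exhibiting one equilibrium and its payoff $t_k$ identifies the optimal value of Objective \eqref{eq:opt-min-max}, completing the proof.
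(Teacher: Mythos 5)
Your proposal is correct and follows essentially the same route as the paper's proof: an induction exploiting the block structure of $M$ (peeling off the first row and column, with the $j=1$ versus $j\ge 2$ case split collapsing onto recurrences \eqref{first rec}--\eqref{second rec}) to show $(\vec r^k)^\intercal M = t_k\mathbf{1}^\intercal$, followed by the persymmetry of $M$ across its anti-diagonal to transfer this to $M(\vec r^k)^R = t_k\mathbf{1}$, which yields the equilibrium and identifies $t_k$ as the value of Objective \eqref{eq:opt-min-max}. The base-case computation and the handling of the column-player side match the paper's argument, so no substantive difference remains.
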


\begin{proof}
We denote by $M^k$ the payoff matrix $M$ corresponding to the indecisive-up-to-$k$ preferences. First, we prove by induction that $({\vec {r}^k})^\intercal  M^k = t_k \mathbf{1}^\intercal$, where $\mathbf{1}$ is the vector with all entries equal to one. One can easily check that the base case holds trivially. Hence, we assume that $({\vec {r}^{k - 1}})^\intercal  M^{k-1} = t_{k-1} \mathbf{1}^\intercal$, and aim to show the equality for $k$. Note that matrix $M^k$ has the following form:  
\begin{equation}
 \label{M^k: 1}
     M^{k} = 
     \begin{bmatrix}
         -1 & \mathbf{1}^\intercal \\
         \vec{\alpha}^{k} & M^{k-1}
     \end{bmatrix},
 \end{equation}
 where $\vec{\alpha}^{k}$ is the vector with $i\textsuperscript{\text{th}}$ element equal to $\frac{1}{\alpha ^i}$. Moreover, based on the definition of $\vec{r}^k$, we have:
\begin{equation}
  \label{r^k: 1}
  (\vec{r}^{k})^\intercal = [w_{k} \;\; (1-w_{k})(\vec{r}^{k-1})^{\intercal}].
\end{equation}
Now, for every $2 \leq i \leq k + 1$, we can write ($M^{k}_{i}$ is the $i\textsuperscript{\text{th}}$ column of $M^{k}$): 
\begin{align}
\nonumber
(\vec{r}^{k})^\intercal M^{k}_{i} &= w_{k} + (1 - w_{k}) (\vec{r}^{k-1})^\intercal M^{k-1}_{i-1} & \text{(Equations \eqref{M^k: 1} and \eqref{r^k: 1})} \\
\nonumber
&= w_{k} + (1 - w_{k})t_{k-1} & \text{(Induction hypothesis)}\\
\nonumber
&= t_{k}. & \text{(Equation \eqref{second rec})}
\end{align}
Therefore, the elements from the second to the $(k+1)\textsuperscript{\text{th}}$ of $(\vec{r}^{k})^\intercal M^{k}$ are equal to $t_k$. Hence, we only need to check its first element, i.e., $(\vec{r}^{k})^\intercal M^{k}_{1}$. Note that we can write the first column of $M^k$ as follows ($\vec{e_i}^{\intercal}$ is a vector whose only non-zero element is the $i\textsuperscript{\text{th}}$ element, which is equal to one): 
 \begin{equation}
 \label{M^k: 2}
     \begin{split}
         (M^{k}_{1})^\intercal &= [ -1 \;\;\; \frac{1}{\alpha} (M^{k-1}_{1})^\intercal + \frac{2}{\alpha} \vec{e_1}^{\intercal}].
     \end{split}
 \end{equation}
 We have:
 \begin{align}
    \nonumber
     (\vec{r}^{k})^\intercal  M^{k}_{1} & = -w_{k} + (1 - w_{k})   \frac{1}{\alpha} (\vec{r}^{k-1})^\intercal  M^{k-1}_{1} + \frac{2}{\alpha} (1 - w_{k}) w_{k-1} & (\text{Equations ~\eqref{r^k: 1} and \eqref{M^k: 2}})
     \nonumber
     \\& = -w_{k} + (1- w_{k})(\frac{1}{\alpha}t_{k-1} + \frac{2}{\alpha}w_{k-1}) & \nonumber (\text{Induction hypothesis}) \\
     \nonumber
     &= w_k + (1 - w_k)t_{k-1} - 2 + (1 - w_k)((\frac{1}{\alpha} - 1)t_{k-1} + \frac{2}{\alpha} w_{k-1} + 2) \\
     &= w_k + (1 - w_k)t_{k-1}  &\nonumber (\text{Equation \eqref{first rec}})\\
     &= t_k. & \nonumber (\text{Equation \eqref{second rec}})
 \end{align}
 Now, we prove that $(\vec{r}^k, (\vec{r}^k)^R)$ is an equilibrium of the zero-sum game with payoff matrix $M^k$. Observe a key property of $M^k$: for every $1 \leq i \leq j \leq k+1$, $M^k(i, j) = M^k(k + 2 - j, k + 2 - i)$. This means that $M^k$ is symmetric with respect to its minor diagonal. For such a matrix, we can deduce that $(\vec {r}^k)^\intercal  M^k = (M^k {(\vec {r}^k)^R})^R.$ But we already know that $(\vec {r}^k)^\intercal  M^k = t_k \mathbf{1} ^ \intercal$. Hence, $M^k {(\vec {r}^k)^R} = t_k \mathbf{1}$. Now, if the first player plays ${\vec {r}^k}$ and the second one plays $({\vec {r}^k})^R$, neither player will be willing to  change their strategy, since for every $\vec{x}, \vec{y} \in \Delta([k+1])$, we have:
 \[(\vec {r}^k)^\intercal M^k \vec{y} = t_k \mathbf{1}^\intercal \vec{y} = t_k,
 \]
 \[
 \vec{x} M^k ({\vec {r}^k})^R = \vec{x} t_k \mathbf{1} = t_k.
 \]
 As a result, $(\vec{r}^k, (\vec{r}^k)^R)$ forms an equilibrium of the game, and $t_k$ is the value of the game, as desired. 
\end{proof}

Now we are ready to prove \Cref{thm:mand_ub}.
\begin{proof}[Proof of \Cref{thm:mand_ub}]
Let $\vec s = (r^k_1, \ldots, r^k_{k+1}, 0, \ldots,0)$, where $\vec r^k = (r^k_1, \ldots, r^k_{k+1})$ is the optimal vector we found in Lemma \ref{lem:mixed}, and consider the rule $\psm_{\vec s}$.  Applying Lemma \ref {lem:scoring-matching} gives us an upper bound on $\dist_{\alpha}(\psm_{\vec s})$, and Lemma \ref{lem:mixed} simplifies this bound to $2 + \max(\alpha, t_{k})$.
\end{proof}

Thus far, our results apply to instances with moderate-up-to-\(k\) preference profiles. However, in the general case, agents may have their first \(\qggc\) at different ranks. Let \(\ell_i\) denote the intensity rank of agent \(i\), meaning agent \(i\) is moderate-up-to-\(\ell_i\). The following theorem generalizes the result of \Cref{thm:mand_ub} to accommodate agents with varying intensity ranks. The bound provided depends on \(\ell_{\max}\), the maximum intensity rank among all agents, i.e., \(\ell_{\max} := \max_{i \in \ags} \ell_i\). Note that the bound provided in \Cref{gen upp} is independent of $m$. You can find an illustration of the bound provided in \Cref{gen upp} for different values of $\alpha$ and $\ell_{\max}$ in \Cref{fig:mand_k_1,fig:mand_k_2}.

\begin{restatable}{theorem}{genUp}
\label{gen upp}
 There exists a rule $f$ such that, given any preference profile, finds an alternative with metric distortion of at most $2+\max(\alpha, t_{\ell_{\max}})$.
\end{restatable}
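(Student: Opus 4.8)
The plan is to mimic the proof of \Cref{thm:mand_ub}, but instead of using a single scoring vector tuned to one value $k$, I would use the scoring vector $\vec s = (r^{\ell_{\max}}_1, \ldots, r^{\ell_{\max}}_{\ell_{\max}+1}, 0, \ldots, 0)$ coming from \Cref{lem:mixed} with $k = \ell_{\max}$, and run the rule $\psm_{\vec s}$. The key point is that every agent is moderate-up-to-$\ell_i$ with $\ell_i \le \ell_{\max}$, so every agent's first $\qggc$ appears at position $\ell_i + 1 \le \ell_{\max}+1$. Thus for \emph{every} agent the consecutive preferences at positions $1, \ldots, \ell_i$ are plain $\qsucc$ and the one at $\ell_i+1$ is $\qggc$; in particular, for each agent, positions $1$ through $\ell_{\max}$ carry at least the information ``$\qsucc$ between consecutive pairs up to that point, then a $\qggc$ somewhere at or before $\ell_{\max}+1$''. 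I would re-run the chain of inequalities from the proof of \Cref{lem:scoring-matching}, where the only place the moderate-up-to-$k$ hypothesis is used is in the two steps bounding $d(i,\pi_i(j))$ in terms of $d(i,b)$: the step using the ``$\qggc$ constraint'' to get the $(1+\alpha)$ factor for alternatives ranked $\ge k+2$, and the step using the ``$\qsucc$ constraint'' to get the $(1 + \alpha^{-(j-l)})$ factors for closer-but-still-dominated alternatives.

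The main technical step is to check that these two bounds still go through when agents have heterogeneous intensity ranks $\ell_i \le \ell_{\max}$. Fix an agent $i$ with intensity rank $\ell_i$ and suppose $b = \opt_d$ sits at rank $l$ in agent $i$'s list. For any alternative $c = \pi_i(j)$ with $l < j \le \ell_{\max}+1$: if $j \le \ell_i$, all signs between positions $l$ and $j$ are $\qsucc$, so $d(i,c) \le \alpha^{-(j-l)} d(i,b)$ exactly as before; if $j > \ell_i$, then there is a $\qggc$ at position $\ell_i+1 \le j$, which gives an even \emph{stronger} bound — in fact $d(i,\pi_i(\ell_i+1)) \le \alpha\, d(i,\pi_i(\ell_i)) \le \alpha\, d(i,b)$ if $\ell_i \ge l$, hence $d(i,c) \le \alpha\,\alpha^{-(j - \ell_i - 1)} d(i,b) \le \alpha^{-(j-l)} d(i,b)$ again (using $\ell_i \ge l$). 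The only subtlety is when $\ell_i < l$, i.e.\ $b$ is ranked below the first $\qggc$ of agent $i$; but then every $c$ with $j < l$ satisfies $c \succ_i b$, and for $c$ with $j > l$ we still need a bound — here the relevant signs between $l$ and $j$ are all $\qsucc$ (since the unique $\qggc$ appeared earlier, at $\ell_i+1 \le l$), so $d(i,c) \le \alpha^{-(j-l)}d(i,b)$. Symmetrically, for $c$ ranked at position $j \ge \ell_{\max}+2$: the gap at position $\ell_i + 1 \le \ell_{\max}+1 < j$ is a $\qggc$, and I claim this forces $d(i, c) \le$ (something)$\cdot d(i,b)$; concretely, since $b$ is at rank $l$ and $d$ is consistent, if $l \le \ell_i$ then $d(i, \pi_i(\ell_i+1)) \le \alpha\, d(i, \pi_i(\ell_i)) \le \alpha\, d(i,b)$, and since $c$ is ranked below $\pi_i(\ell_i+1)$ we need an additional handle — but note that in the argument of \Cref{lem:scoring-matching} the term for far alternatives is multiplied by $r_{\rank{i}(c)} = 0$ whenever $\rank{i}(c) \ge \ell_{\max}+2$, so those alternatives contribute nothing to $q(c)$ and hence the far-alternative sum only ranges over $b$'s at rank $\ge \ell_{\max}+2$, for which the bound $d(i,c) \le d(i,b) + d(i,c) \le (1+\alpha)d(i,b)$ must be re-derived using the $\qggc$ at $\ell_i+1$: one has $d(i, b) \ge d(i, \pi_i(\ell_i+1)) \ge \alpha^{-1} d(i, \pi_i(\ell_i))$, which after chasing the triangle inequality as in the original proof yields exactly the $(1+\alpha)$ factor.

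The conclusion then follows verbatim: the recomputed inequality reads $\sco(a) \le \sco(b) + (1 + \max(\beta, \alpha))\sco(b)$ with $\beta$ the value of \eqref{beta} for $k = \ell_{\max}$, which by \Cref{lem:mixed} equals $t_{\ell_{\max}}$, giving distortion at most $2 + \max(\alpha, t_{\ell_{\max}})$. The hard part, and the part I would write most carefully, is the case analysis above showing that heterogeneous intensity ranks never produce a weaker per-agent bound than the uniform moderate-up-to-$\ell_{\max}$ case — intuitively obvious because an earlier $\qggc$ only \emph{sharpens} the distance estimates, but it requires handling the position of $b$ relative to each agent's own $\ell_i$ separately. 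Everything else (the matching existence via \Cref{thm:rml}, the zero-sum game value via \Cref{lem:mixed}) is reused unchanged.
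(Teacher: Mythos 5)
Your plan has a genuine gap, and it sits exactly at the step you yourself flag as the main technical one. Under mandatory elicitation, a $\qggc$ at position $p$ of agent $i$'s list means $d(i,\pi_i(p))\le \alpha\, d(i,\pi_i(p+1))$: it is a \emph{lower} bound on the distance to the alternative ranked after the gap and gives no upper bound at all. Your inequality ``$d(i,\pi_i(\ell_i+1)) \le \alpha\, d(i,\pi_i(\ell_i))$'' is mis-indexed (the $\qggc$ at position $\ell_i+1$ relates $\pi_i(\ell_i+1)$ to $\pi_i(\ell_i+2)$) and, as written, would make a lower-ranked alternative strictly closer than a higher-ranked one, contradicting consistency of the ranking with the metric. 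So the guiding intuition that an earlier $\qggc$ ``only sharpens the distance estimates'' is backwards: alternatives ranked \emph{after} an agent's $\qggc$ can be arbitrarily far relative to those ranked before it. Concretely, take an agent $i$ with $\ell_i<\ell_{\max}$ whose optimum $b=\opt_d$ is ranked at some $l\le \ell_i$, and a position $j$ with $\ell_i+1<j\le \ell_{\max}+1$; the entry $r^{\ell_{\max}}_j$ is positive (for $\alpha\in(0,1)$), so re-running \Cref{lem:scoring-matching} requires $d(i,\pi_i(j))\le \alpha^{-(j-l)}d(i,b)$, but agent $i$'s constraints only force $d(i,\pi_i(j))\ge \alpha^{-1}d(i,\pi_i(\ell_i+1))$ and leave the ratio $d(i,\pi_i(j))/d(i,b)$ unbounded, so the per-agent multiplier blows up. The same issue breaks your claim that the $(1+\alpha)$ factor is recovered when $b$ is ranked below position $\ell_{\max}+1$: for positions $\ell_i+2\le j\le \ell_{\max}+1$ you only get $d(b,\pi_i(j))\le 2\,d(i,b)$, not $(1+\alpha)d(i,b)$. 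Hence the bound for $\psm_{\vec s}$ with the single vector $\vec s=(r^{\ell_{\max}}_1,\dots,r^{\ell_{\max}}_{\ell_{\max}+1},0,\dots,0)$ does not follow from this argument.

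The paper's proof avoids this precisely by not using one scoring vector: it sets $q(c)=\sum_{i}r^{\ell_i}_{\rank{i}(c)}/n$, so each agent contributes through \emph{her own} optimal vector $r^{\ell_i}$ and places positive weight only on positions up to $\ell_i+1$, where the $\qsucc$-chain (or, when $b$ is ranked below the agent's $\qggc$, the $\qggc$ constraint applied in the correct direction) yields the needed upper bounds. It then invokes \Cref{thm:rml} with this heterogeneous $\vec q$ (which takes the rule outside the PSM class as defined), bounds the contribution of each group $N_h$ of agents with intensity rank $h$ using $\sum_{j\ne k} r^{h}_j(1+\alpha^{-\max(0,j-k)})\le 1+t_h$ from \Cref{lem:mixed}, and finally uses the monotonicity of $t_h$ in $h$ to reach $2+\max(\alpha,t_{\ell_{\max}})$. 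To repair your plan you would have to make the same move—replace $r^{\ell_{\max}}$ by the agent-specific vectors $r^{\ell_i}$—rather than argue that heterogeneous intensity ranks can only help a single-vector analysis.
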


\begin{proof}
Denote by $\ags_h$ the set of agents with intensity rank $h$. Furthermore, let $H \subset \mathbb{N}$ be a finite set such that $\bigcup_{h \in H} N_h = N$. Also,
let $\vec{r^{\ell_i}}$ be the optimal vector for $\ell_i$-indecisive elections, which we present in Lemma \ref{lem:mixed}. Let $\vec{p}$ be the uniform probability distribution that assigns probability $1/n$ to each agent in $N$. Furthermore, for alternative  $c \in A$, we define:\[ q(c) = \sum_{i \in \alts} \frac{{r^{l_i}}_{\rank{i}(c) } }{n} \cdot \] Using Theorem \ref{thm:rml}, there is an alternative $a$ whose $\left(\vec{p},\vec{q}\right)$-domination graph admits a fractional perfect matching. Now, we want to prove an upper bound for $\sum_{i \in N} d(i, a)$. We define $b := \opt_d$  for an arbitrary metric $d$. For each vertex $x \in \ags \cup \alts $ in $G_{\vec p, \vec q}^{\elc}(a)$, we denote by $R(x)$ the set of vertices adjacent to $x$ in the undirected version of $G_{\vec p, \vec q}^{\elc}(a)$. We can use a similar argument to the one in the proof of Lemma \ref{lem:scoring-matching} up to Equation \eqref{upp partition}. We have:
\begin{align}
    \nonumber
    \sum\limits_{i \in \ags} d(i,a)&\leq \sum\limits_{i\,\in\, \ags} d(i,b) + 
    \sum\limits_{c\,\in\,\candid}\,
    \sum\limits_{i\,\in\,R(c)}  n\,w(i,c)\,d(b,c)
    \\ 
    \nonumber
    & =\sco(b) + 
    \sum\limits_{c\,\in\,\candid} n\, q(c)\,d(b,c)
    & \left(\sum_{i \in R(c)} w(i,c) = q(c)\right)
    \\ 
    \nonumber
    & =\sco(b) + \sum\limits_{c \in \candid}\,
    \sum_{i \in \ags}{r^{l_i}}_{\rank{i}(c) } d(b,c)
    & \left(q(c) = \sum_{i \in \ags} \frac{{r^{\ell_i}}_{\rank{i}(c) } }{n} \right) 
     \\ 
    & =\sco(b) + \sum\limits_{i \in \ags}\, \sum\limits_{j=1}^{l_i+1}{r^{\ell_i}}_{j}\,d(b,\pi_i(j))
    \nonumber
    & \left(\text{rearranging the sum}\right)
    \nonumber\\
    & \leq \sco(b) +\sum\limits_{h \in H}\sum\limits_{k=1}^{h+1}\, \sum\limits_{\substack{i: \pi_i(k)= b\\ i \in N_h}}\,
    \sum\limits_{j=1}^{h+1}r^h_j\,d(b,\pi_i(j))
    \nonumber 
    \\ & \qquad \quad \;+
    \sum\limits_{h\in H} \sum\limits_{\substack{i:i \in N_h\\ \rank{i}(b)\geq h+2}}\,
    \sum\limits_{j=1}^{h+1}r^h_j\,d(b,\pi_i(j))
    \nonumber
\\
    & = 
    \sco(b) +\sum\limits_{h \in H} \sum\limits_{k=1}^{h+1}\,
    \sum\limits_{\substack{i: \pi_i(k)= b\\ i \in N_h}}\,
    \sum\limits_{j=1}^{k-1}
    r^h_j\, d(b,\pi_i(j))
    \nonumber 
    \\ & \qquad \quad \;+
    \sum\limits_{h \in H} \sum\limits_{k=1}^{h+1}\,
    \sum\limits_{\substack{i: \pi_i(k)= b\\ i \in N_h}}\,
    \sum\limits_{j=k+1}^{i+1}
    r^h_j \, d(b,\pi_i(j))
    \nonumber 
    \\& \qquad \quad \;+
    \sum\limits_{h \in H} \sum\limits_{\substack{i: i \in N_h\\ \rank{i}(b)\geq h+2}}\,
    \sum\limits_{j=1}^{h+1}
    r^h_j \, d(b,\pi_i(j))
    & \left( \text{$d(b, b) = 0$} \right)
    \nonumber
    \\ 
    & \leq
    \sco(b) +
    \sum\limits_{h \in H}\sum\limits_{k=1}^{h+1}\,
    \sum\limits_{\substack{i:\pi_i(k) = b\\ i \in N_h}}\,
    \sum\limits_{j=1}^{k-1}
    r^h_j\left( d(i, b)+ d(i, \pi_i(j) )\right) 
    \nonumber
    \\ & \qquad \quad \;+
    \sum\limits_{h \in H} \sum\limits_{k=1}^{h+1}\,
    \sum\limits_{\substack{i: \pi_i(k)= b\\ i \in N_h}}\,
    \sum\limits_{j=k+1}^{h+1}
    r^h_j\left( d(i, b)+ d(i,\pi_i(j))\right) 
    \nonumber
    \\ & \qquad \quad \;+
    \sum\limits_{h \in H} \sum\limits_{\substack{i: i \in N_h\\ \rank{i}(b)\geq h+2}}\,
    \sum\limits_{j=1}^{h+1}
    r^h_j \, \left( d(i, b)+ d(i,\pi_i(j))\right)
    & \left( \text{triangle inequality}\right)
    \nonumber
    \\ 
    & \leq
    \sco(b) +
    \sum\limits_{h \in H}\sum\limits_{k=1}^{h+1}\,
    \sum\limits_{\substack{i:\pi_i(k) = b\\ i \in N_h}}\,
    \sum\limits_{i=1}^{k-1}
    2r^h_j\,d(i, b) & \left(\text{$\pi_i(h)\succ_i b$}\right)
    \nonumber
    \\ & \qquad \quad \;+
    \sum\limits_{h \in H} \sum\limits_{k=1}^{h+1}\,
    \sum\limits_{\substack{i: \pi_i(k)= b\\ i \in N_h}}\,
    \sum\limits_{j=k+1}^{i+1}
    r^h_j\left(1+\frac{1}{\alpha^{h-j}}\right)d(i, b) 
    \\ & \qquad \quad \;+
    \sum\limits_{h \in H} \sum\limits_{\substack{i: i \in N_h\\ \rank{i}(b)\geq h+2}}\,
    \sum\limits_{j=1}^{h+1}
    {r^h}_j \, (1+\alpha)\, d(i, b)
    & \left(\text{$\qsucc$ and $\qggc$ constraints}\right)
    \nonumber
    \\ 
    & =
    \sco(b) +\sum\limits_{h \in H}\sum\limits_{k=1}^{h+1}\,
    \sum\limits_{\substack{i:\pi_i(k) = b\\ i \in N_h}}\,
    \left(\sum\limits_{j=1}^{k-1}
    2r^h_j \,d(i, b) \right.
    + \nonumber
    \\ & \left.\qquad \qquad \qquad  \qquad\qquad\qquad
    \sum\limits_{j=k+1}^{h+1} r^h_j\left(1+\frac{1}{\alpha^{h-j}}\right)d(i, b)
    \right)
    \nonumber
    \\ & \qquad \quad \; +
    (1+\alpha) 
    \sum\limits_{h \in H} \sum\limits_{\substack{i: i \in N_h\\ \rank{i}(b)\geq h+2}}\,
    d(i,b)
    & \left(\text{$\sum_{j = 1}^{h + 1} {r^h}_i = 1$}\right)
    \nonumber
    \\ 
    &
    =\sco(b) + (1+\alpha) \sum\limits_{h \in H} \sum\limits_{\substack{i: i \in N_h\\ \rank{i}(b)\geq h+2}}\,d(i,b) 
    \nonumber
    \\ & \qquad \quad \;
    +\sum\limits_{h \in H}\sum\limits_{k=1}^{h+1}\,
    \sum\limits_{\substack{i:\pi_i(k) = b\\ i \in N_h}}\,
    d(i,b) \sum\limits_{\substack{j: j \neq k \\ 1\leq j \leq h+1}}
    r^h_j\left(1+\frac{1}{\alpha^{\max\left(0,j-k\right)}}\right).
    \nonumber
    \end{align}

From the proof of  Lemma \ref{lem:mixed} we know that for each $k\in [h+1] $,
$\sum\limits_{\substack{j: j\neq k\\ 1\leq j\leq k+1} } {r^i}_j \left(1+ \frac{1}{\alpha^{max\{0, j-k\}} } \right) \leq t_h + 1$ .
Then we have:
\begin{align}
\sum_{i \in N} d(i,a) 
& \leq sc(b) + 
\sum\limits_{h \in H}\,
\sum\limits_{k=1}^{h+1}
\sum\limits_{\substack{i:i \in N_h\\ \pi_i(k) =b}}\,
d(i, b) \left(t_h +1 \right) + \left(1+ \alpha \right)
\sum_{h \in H}  \sum\limits_{\substack{i:\\
\rank{i}(b)\geq h+2}} d(i,b)
\nonumber
\\
& \leq 
sc(b) + \sum_{h \in H} \sum_{i \in N_h} d(i,b) \left(1+ \max\{\alpha , t_h\}  \right).
\nonumber
\end{align}
We know that $t_{h}$ is increasing in $h$. Hence, if we set $ \ell_{\max} =\max_{h \in H}(h)$, we have:
\[\sum_{i \in N_{h}} d(i,a) \leq (2 + \max(t_{\ell_{\max}}, \alpha)) \sum_{i \in N_{h}} d(i,b), \]
for each $h$, and finally, we have:
\[
\scs(a) = \sum_{i \in \bigcup_{h \in H} N_{h}} d(i,a) \leq (2 + \max(t_{\ell_{\max}}, \alpha))\; \scs(b).
\qedhere \]
\end{proof}

\begin{figure}[t]
  \centering 
  \begin{minipage}{0.45\textwidth}
  \includegraphics[width=\textwidth]{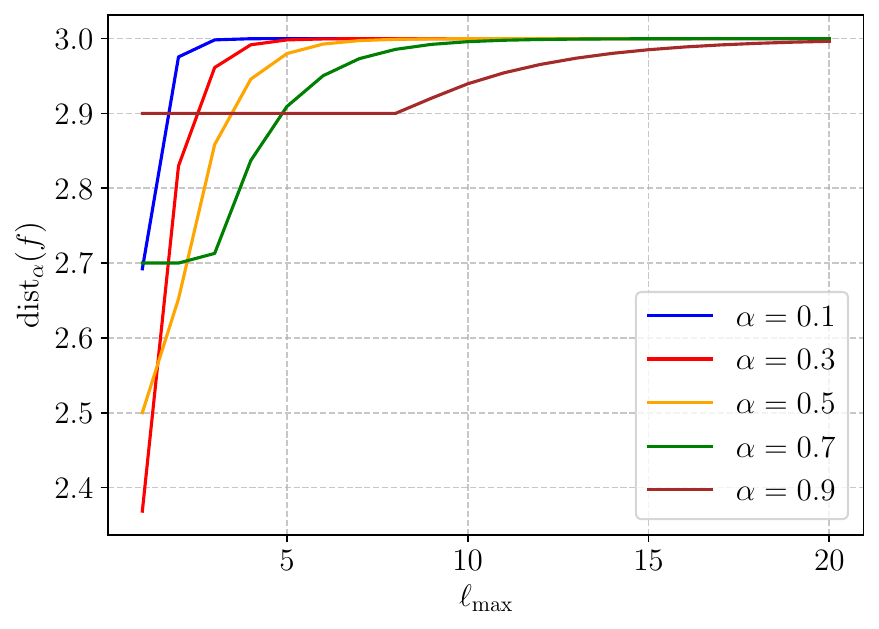}
  \caption{The distortion of the rule described in \Cref{gen upp} as a function of $\ell_{\max}$ for different  $\alpha$.}
  \label{fig:mand_k_1}
  \end{minipage}
  \hfill
  \begin{minipage}{0.45\textwidth}
  \centering
  \includegraphics[width=.95\textwidth]{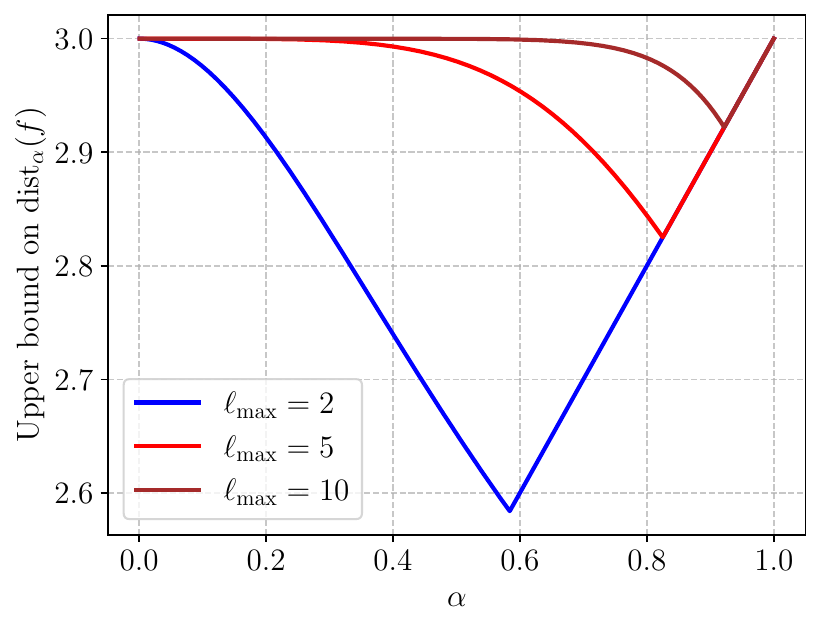}
   \caption{The distortion of the rule described in \Cref{gen upp} as a function of $\alpha$ for different $\ell_{\max}$.}
   \label{fig:mand_k_2}
\end{minipage}
\end{figure}

\Cref{gen upp} gives a strong upper bound on the distortion as a function of $\ell_{\max}$. 
The problem with this result is that it does not take into account the number of agents with high-intensity rank. For example, one agent has intensity rank of $m-1$ and all other agents report intensity in their first position.
In such cases, although the bound is still lower than $3$, it becomes dependent on $t_{m-1}$, which is undesirable. To address this, we demonstrate the robustness of our results, showing that the upper bound only slightly degrades if a constant portion of agents have a high-intensity rank.

\begin{restatable}{theorem}{robustness}
    \label{thm:robustness}
    Consider a voting rule $f$ that achieves distortion $D_\ell$ on preference profiles with maximum intensity rank of $\ell$ among the agents. We can design a rule $f_\ell$ such that for any $\beta \in [0, 1/2]$, the distortion of $f_\ell$ on any preference profile (with the same number of alternatives) in which the intensity rank of at most $\beta n$ agents is more than $\ell$ is at most
    $D_\ell + \frac{\beta}{1-\beta} (1+D_\ell).$
\end{restatable}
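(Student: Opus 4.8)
The plan is to reduce an arbitrary profile to one that the rule $f$ can handle, by "truncating" the intensities of the at most $\beta n$ high-intensity-rank agents. Concretely, given a profile $\prefp$ in which a set $B \subseteq \ags$ with $|B| \le \beta n$ has intensity rank greater than $\ell$, I would build a modified profile $\prefp'$ by leaving the rankings untouched and replacing, for each $i \in B$, the agent's intensity vector by one that places a $\qggc$ at position $\ell+1$ (keeping the $\qsucc$'s before it) — so every agent in $\prefp'$ is moderate-up-to-$\ell'$ for some $\ell' \le \ell$, i.e. $\prefp'$ has maximum intensity rank $\le \ell$. Then define $f_\ell(\prefp) := f(\prefp')$. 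The key point is that this reports a \emph{weaker} intensity than the truth for agents in $B$: if an agent truly has an $\alpha$-gap only at some position $> \ell$, asserting a (possibly false) $\qggc$ at position $\ell+1$ is a constraint that the true metric need not satisfy. So I cannot claim $\prefp' \scns d$ for the original $d$; instead I must argue about distortion directly.

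\textbf{Main steps.} First, fix any metric $d$ with $\prefp \scns d$ and let $b := \opt_d$; let $a := f_\ell(\prefp) = f(\prefp')$. The agents in $\ags \setminus B$ satisfy all the intensity constraints of $\prefp'$ under $d$ (their intensities are unchanged and they are moderate-up-to-$\le \ell$ already, since everyone outside $B$ has intensity rank $\le \ell$ — wait, that's not automatic either). Let me restructure: the honest approach is to split the social cost. Run the whole matching/bounding argument of Lemma~\ref{lem:scoring-matching}/Theorem~\ref{gen upp} but observe that for agents in $B$ the $\qggc$ at position $\ell+1$ of $\prefp'$ may be a \emph{lie}, so the only valid bound we get for such an agent $i$ when $b$ is ranked low is the trivial one $d(b,\pi_i(j)) \le d(i,b) + d(i,\pi_i(j)) \le 2\,d(i,b)$ using only the ranking (since $\pi_i(j) \succ_i b$), rather than the $(1+\alpha)$-type bound. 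Carrying this through, one gets $\scs_d(a) \le D_\ell \cdot \scs_d(b) + (\text{extra slack from } B)$, where the extra slack is at most something like $\sum_{i \in B}\bigl(1 + D_\ell\bigr) d(i,b)$ after accounting for the worst contribution of a $B$-agent. So $\scs_d(a) \le D_\ell \scs_d(b) + (1+D_\ell)\sum_{i\in B} d(i,b)$.

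\textbf{Closing the bound.} Now I need $\sum_{i \in B} d(i,b) \le \frac{\beta}{1-\beta}\,\scs_d(b)$... but that's false in general (a single far agent can dominate). The right move is instead: the rule $f_\ell$ should not simply apply $f$ to a fixed truncation; it should choose, among all $\binom{n}{\le \beta n}$ ways of which agents to "protect", or rather it applies $f$ to the profile obtained by a \emph{uniform} reweighting that down-weights the suspect agents — mirroring the $\vec p$-reweighting already available in the PSM framework. I would set $p(i) = \frac{1}{(1-\beta)n}$ for $i \notin B$ and $p(i)=0$ for $i\in B$ (using $|B|\le\beta n$ so $\vec p$ is a valid distribution after this, up to the obvious normalization), apply Theorem~\ref{thm:rml} with this $\vec p$ and the scoring $\vec q$ for moderate-up-to-$\ell$, and bound: the matched mass only touches agents outside $B$, giving $\sum_{i \notin B} d(i,a) \le D_\ell \sum_{i\notin B} d(i,b) \le D_\ell\,\scs_d(b)$ for those agents, while for $i \in B$ we use $d(i,a) \le d(i,b) + d(b,a)$ and $d(b,a) \le$ (cost comparison) to fold them in, yielding the claimed $D_\ell + \frac{\beta}{1-\beta}(1+D_\ell)$ after the algebra. \textbf{The hard part} will be exactly this last bookkeeping: making the $\vec p$-reweighted matching argument go through cleanly when some agents carry zero weight (so their $d(i,a)$ is genuinely unconstrained by the matching and must be absorbed via triangle inequality through $b$ and a global comparison $\scs_d(a)$ vs $\scs_d(b)$), and verifying that the resulting self-referential inequality $\scs_d(a) \le D_\ell\,\scs_d(b) + \frac{\beta}{1}(\ldots)$ rearranges to precisely $D_\ell + \frac{\beta}{1-\beta}(1+D_\ell)$ rather than something weaker.
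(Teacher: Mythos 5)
Your final construction is essentially the paper's: ignore the at most $\beta n$ high-intensity-rank agents, invoke the $D_\ell$ guarantee on the remaining agents $N_1$, and absorb the ignored agents $B$ by triangle inequality. Two remarks. First, the paper does this black-box: $f_\ell$ simply runs $f$ on the sub-profile of $N_1$ and outputs $a := f(\prefp|_{N_1})$, so $\sum_{i\in N_1} d(i,a) \le D_\ell \min_c \sum_{i\in N_1} d(i,c) \le D_\ell \sum_{i\in N_1} d(i,b)$ for $b := \opt_d$; your version instead re-derives this inside the PSM machinery with $p(i)=0$ on $B$, which works for the specific rule of \Cref{gen upp} but, as stated, the theorem is for an arbitrary $f$ with distortion $D_\ell$, and the black-box restriction handles that with no extra work. (Your earlier truncation idea is rightly discarded: the inserted $\qggc$ can be inconsistent with the true metric, and your second observation that $\sum_{i\in B} d(i,b)$ cannot be charged to a $\beta$-fraction of $\scs_d(b)$ is also correct.) Second, the step you flag as ``the hard part'' does go through, and with exactly the claimed constant; you should not leave it as a vague ``cost comparison.'' The paper routes through a single agent: with $j := \argmin_{i\in N_1}(d(i,a)+d(i,b))$ one has $d(j,a)+d(j,b) \le \frac{1}{(1-\beta)n}\sum_{i\in N_1}(d(i,a)+d(i,b)) \le \frac{1+D_\ell}{(1-\beta)n}\sum_{i\in N_1} d(i,b)$, and then $d(i,a)\le d(i,b)+d(j,b)+d(j,a)$ for $i\in B$ gives the bound directly. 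Your self-referential variant also closes: since $d(a,b)\le d(i,a)+d(i,b)$ for every agent, $n\,d(a,b)\le \scs_d(a)+\scs_d(b)$, hence
\begin{align*}
\scs_d(a) &\le D_\ell\!\!\sum_{i\in N_1}\! d(i,b) + \sum_{i\in B} d(i,b) + |B|\,d(a,b) \le D_\ell\,\scs_d(b) + \beta\bigl(\scs_d(a)+\scs_d(b)\bigr),
\end{align*}
using $D_\ell\ge 1$ and $|B|\le\beta n$, which rearranges (as $\beta\le 1/2<1$) to $\scs_d(a) \le \frac{D_\ell+\beta}{1-\beta}\,\scs_d(b) = \bigl(D_\ell + \frac{\beta}{1-\beta}(1+D_\ell)\bigr)\scs_d(b)$, exactly the stated bound. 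So the plan is sound; what is missing is only this final bookkeeping, which you anticipated but did not verify, and the cleaner black-box phrasing of the rule.
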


\begin{proof}
    Denote the agents with maximum intensity rank of $\ell$ as $N_1$ and the remaining agents as $N_2$. Suppose that $f$ outputs $a$ when is given the preference profiles of agents in $N_1$. Then we define the output of $f_\ell$ when is given all instances to be $a$. For an arbitrary metric $d$, we define $b := \opt_d$. Then, we have 
    \begin{equation}
        \label{dist on N_1}
        \sum\limits_{i \in N_1} d(i,a) \leq D_{\ell} \sum\limits_{i \in N_1} d(i,b).
    \end{equation} 
    \\ 
    We define $j := \text{arg min}_{i \in N_1}\left(d(i,b) + d(i,a)\right).$ Hence, we can write 
    \begin{align}
    \nonumber
         d(j,b) + d(j,a) &\leq \frac{1}{|N_1|} \sum\limits_{i \in N_1}\left(d(i,b) + d(i,a)\right)  & \text{(property of $j$)} \\ 
         \label{property of j}
         &\leq \frac{1}{\left(1-\beta\right)n} \sum\limits_{i \in N_1}\left(d(i,b) + d(i,a)\right). & \left(\left(1-\beta\right)n \leq |N_1|\right)
     \end{align}
    Now, we have 
    \begin{align}
        \nonumber
        \sum\limits_{i \in N_2} d(i,a) &\leq \sum\limits_{i \in N_2} \left(d(i, b) + d(j, b) + d(j, a)\right) & \text{(triangle inequality)}\\
        \nonumber
        & \leq \sum\limits_{i \in N_2} d(i,b) + \sum\limits_{i \in N_2} \frac{1}{\left(1-\beta\right)n} \sum\limits_{i' \in N_1}\left(d(i',b) + d(i',a)\right)  & \left(\text{Inequality \eqref{property of j}}\right)\\
        \label{bounds on dist_{N_2}}
        &\leq \sum\limits_{i \in N_2} d(i,b) + \frac{\beta}{1-\beta} \sum\limits_{i \in N_1}\left( d(i,b) + d(i,a) \right). & \left(|N_2| \leq \beta n\right)
    \end{align}
    Therefore, we can conclude that 
    \begin{align}
        \nonumber
        \sum\limits_{i \in N} d(i,a) &= \sum\limits_{i \in N_1} d(i, a) + \sum\limits_{i \in N_2} d(i, a)\\
        \nonumber
        &\leq \sum\limits_{i \in N_1} d(i, a) + \sum\limits_{i \in N_2} d(i,b) + \frac{\beta}{1-\beta} \sum\limits_{i \in N_1}\left( d(i,b) + d(i,a) \right) & \left(\text{Inequality \eqref{bounds on dist_{N_2}}}\right)\\
        \nonumber
        &\leq D_{\ell} \sum\limits_{i \in N_1} d(i, b) + 
         \sum\limits_{i \in N_2} d(i, b)
         + \frac{\beta}{1-\beta}\left(1 + D_{\ell}\right) \sum\limits_{i \in N_1} d(i, b) & \left(\text{Inequality \eqref{dist on N_1}}\right)\\
        \nonumber
        &\leq \left(D_{\ell} + \frac{\beta}{1 - \beta}\left(1 + D_{\ell}\right)\right) \sum\limits_{i \in N_1} d(i, b) + \left(D_{\ell} + \frac{\beta}{1 - \beta}\left(1 + D_{\ell}\right)\right) \sum\limits_{i \in N_2} d(i, b) & \left(D_{\ell} \geq 1\right)\\
        \nonumber 
        &= \left(D_{\ell} + \frac{\beta}{1 - \beta}\left(1 + D_{\ell}\right)\right) \sum\limits_{i \in N} d(i, b).
        \end{align}
Hence, the distortion of $a$ is at most $D_{\ell} + \frac{\beta}{1 - \beta}\left(1 + D_{\ell}\right)$, as desired. 
\end{proof}


\section{Distortion in Line Metric}
\label{sec:line}
To create a complete picture of how intensities influence metric distortion, in this section we consider the line metric, which is an important and well-studied special case of the general metric, with the goal of providing more refined bounds. 

We start by considering the special case of the elections with two alternatives. We establish a lower bound on the distortion in this setting assuming mandatory elicitation of the intensities, which naturally also serves as a lower bound for any election in the line metric under the same assumption. Then we analyze the structure of the elections in this setting and based on that present a rule which we conjecture to have optimal distortion.
An illustration of this bound is provided in Figure~\ref{illustrations-line:lb}.

\begin{theorem}
    \label{thm:metric_lb}
In the $1$-Euclidean space with two alternatives, the metric distortion of any voting rule $f$ using ranked preferences with intensities ballot format with parameter $\alpha \in (0, 1]$ and assuming mandatory elicitation of the intensities is at least $\max\left( \frac{3 - \alpha}{1 + \alpha}, 2\alpha + 1 \right).$
\end{theorem}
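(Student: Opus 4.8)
The plan is to exhibit two separate line-metric instances, each with $n=2$ agents and two alternatives $\{a,b\}$, one witnessing the bound $\tfrac{3-\alpha}{1+\alpha}$ and one witnessing $2\alpha+1$; since the maximum of two valid lower bounds is a lower bound, combining them gives the claim. In each instance I place the two points and the two candidates on the real line, let $f$ pick some winner (WLOG $a$ by symmetry of the construction), and then choose the specific embedding consistent with the reported preferences that makes $\scs(a)/\scs(b)$ as large as possible.

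First I would handle the $2\alpha+1$ bound. This should come from an instance where both agents report the \emph{strong} preference $a \ggc b$ (resp. $b \ggc a$), i.e. both agents are forced, under mandatory elicitation, to have their distance to the preferred candidate at most $\alpha$ times the distance to the other. Concretely, take agent $1$ and agent $2$ positioned so that they are on opposite sides but each close to their own favorite; if $f$ selects $a$, I would move the agent who prefers $b$ to sit essentially on top of $b$, at distance $0$ from $b$ and distance proportional to $1$ from $a$ (scaled so the $\alpha$-gap constraint $d(i,b)\le \alpha\,d(i,a)$ holds, which is trivially satisfied when $d(i,b)=0$), while the agent preferring $a$ sits at distance $\alpha$ from $a$ and $1$ from $b$ (so $d(i,a)=\alpha\le\alpha\cdot 1=\alpha\,d(i,b)$, tight). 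On a line one checks these four distances are realizable (they are, by placing points at $0,\alpha,1,1$ or similar). Then $\scs(b)=0+1=1$ while $\scs(a)=\alpha+1$, giving ratio... wait, that gives $1+\alpha$, not $1+2\alpha$; so instead I would not zero out the $b$-agent but keep both agents at distance $\alpha$ from their favorite and $1$ from the other, forcing the winner $a$ to cost $\alpha+1$ against $\opt$'s cost — the key is to arrange the \emph{positions} on the line so that $\scs(a)=1+\alpha$ and $\scs(b)=\alpha^{-1}\cdot$(something); more carefully, mirror \Cref{lem:mand_lb_right}'s four-value metric ($d(1,b)=0$ or small, $d(1,a)=1+\alpha^{-1}$, $d(2,b)=1$, $d(2,a)=\alpha^{-1}$) and verify it embeds into the line — with only two candidates and two agents, any metric on $4$ points with the right betweenness embeds on $\mathbb{R}$, so one just has to confirm the triangle inequalities collapse to equalities giving a line. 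That yields $\scs(a)\ge 2+\alpha^{-1}$, $\scs(b)=\alpha^{-1}$, ratio $1+2\alpha$.

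Second, for the $\tfrac{3-\alpha}{1+\alpha}$ bound I would use an instance where the agents report only the \emph{mild} preference $a\succ b$ (no intensity), so under mandatory elicitation the metric must satisfy $d(i,a) > \alpha\, d(i,b)$, i.e. the gap is strictly at most a factor $\alpha$. Take the classic distortion-$3$-style instance (one agent favoring $a$, one favoring $b$) but exploit the no-intensity constraint to cap how far $\opt$ can be: if $f$ outputs $a$, place the $a$-agent at distance $1$ from both $a$ and $b$ (a legal mild tie-ish config, $1>\alpha\cdot 1$), and the $b$-agent at the point making $d(2,b)$ as small as possible subject to $d(2,a) \le$ (triangle bound) and the mild constraint $d(2,a) > \alpha\, d(2,b)$ wait it's the agent preferring $b$ so $d(2,b) < d(2,a)$ and $d(2,b) > \alpha\, d(2,a)$; pushing $d(2,b)$ toward its lower limit $\alpha\,d(2,a)$ and $d(2,a)$ toward its triangle-inequality ceiling $d(2,b)+2$ (via the $a$-agent) gives $d(2,b) = \tfrac{2\alpha}{1-\alpha}$, $d(2,a)=\tfrac{2}{1-\alpha}$, so $\scs(a)=1+\tfrac{2}{1-\alpha}=\tfrac{3-\alpha}{1-\alpha}$ and $\scs(b)=1+\tfrac{2\alpha}{1-\alpha}=\tfrac{1+\alpha}{1-\alpha}$, ratio $\tfrac{3-\alpha}{1+\alpha}$. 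All four points again lie on a line by direct placement (e.g. $a=0$, agent $1$ at $1$, $b$ at $d(2,b)$ below... a short betweenness check). Finally I would note both constructions are symmetric in $a,b$ so the WLOG is justified, conclude $\dist_\alpha(f)\ge\max(\tfrac{3-\alpha}{1+\alpha},2\alpha+1)$, and remark the first bound dominates for small $\alpha$ and the second for $\alpha$ near $1$ (they cross where $3-\alpha=(2\alpha+1)(1+\alpha)$).

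The main obstacle is not the ratio arithmetic but certifying line-embeddability: I must check that the chosen four-point (semi)metrics actually arise from points on $\mathbb{R}$ — equivalently that the triangle inequalities hold with the correct ones tight so the configuration is "flat" — and that the betweenness/order of the four points on the line is consistent with exactly the reported preferences (including that the mild-preference instance does \emph{not} accidentally create an $\alpha$-gap, and the strong-preference instance \emph{does}). I would dispatch this by writing down explicit coordinates for all four points in each instance and reading off the distances, rather than arguing abstractly.
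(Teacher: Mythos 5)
Your proposal is correct and follows essentially the same route as the paper: two two-agent, two-alternative instances, one with only mild preferences pushed to the boundary of the $\alpha$-gap constraint (yielding $\tfrac{3-\alpha}{1+\alpha}$; your metric is the paper's $d_1$ up to a scaling factor of $\tfrac{1}{\alpha}+1$) and one with mutually opposed strong preferences (yielding $2\alpha+1$; your four-point configuration is the paper's $d_2$ up to relabeling), with the WLOG handled by the same symmetry argument. The only blemish is the parenthetical distance list in the strong-preference instance, which swaps agent $2$'s two distances relative to a profile-consistent assignment; your stated totals $\scs(a)\ge 2+\alpha^{-1}$ and $\scs(b)=\alpha^{-1}$ correspond to the correct, line-embeddable placement, so the slip is purely notational.
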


\begin{proof}
We present two lower bounds using two instances that are very similar. Consider elections $\elc =(\ags, \alts, \prefp)$ and $\elc' =(\ags, \alts, \prefp')$ where $\ags = \{1, 2\}$ and $\alts = \{a_1, a_2\}$. The preferences of the agents are
\[
\pref_{i}  =
\begin{cases}
        a_1 \succ a_2 & i = 1\\
        a_2 \succ a_1 & i = 2\\
\end{cases}, \text{ and }
\pref'_{i}  =
\begin{cases}
        a_1 \;\ggc\; a_2 & i = 1\\
        a_2 \;\ggc\; a_1 & i = 2\\
\end{cases}.
\]
Let $f$ be a deterministic rule and w.l.o.g. assume that $a_1 = f(\prefp)$. Now consider metric space $d_1$ as illustrated in \Cref{fig:line_lb_2}. In this metric, the distance between $a_1$ and $a_2$ is $\frac{2}{\alpha} + 2$, and the distances between agents and alternatives are as follows:

\begin{align*}
     &d_1(1, a_1) = \frac{1}{\alpha} + 1, d_1(2, a_1) =\frac{2}{\alpha} + 2 + \frac{2\alpha + 2}{1 - \alpha},\\
     & d_1(1, a_2) = \frac{1}{\alpha} + 1, d_1(2, a_2) = \frac{2\alpha + 2}{1 - \alpha}\cdot
\end{align*}

We have: 
\begin{align*}
\dist_\alpha(a_1, \elc) &\ge \frac{\scsd{d_1}(a_1)}{\scsd{d_1}(a_2)} = \frac{d_1(1, a_1) + d_1(2, a_1)}{d_1(1, a_2) + d_1(2, a_2)}   \\
    &= \frac{\frac{1}{\alpha} + 1 + \frac{2}{\alpha} + 2 + \frac{2\alpha + 2}{1 - \alpha}}{\frac{1}{\alpha} + 1 + \frac{2\alpha + 2}{1 - \alpha}} = \frac{\frac{3}{\alpha^2} + \frac{2}{\alpha} - 1}{\frac{1}{\alpha^2} + \frac{2}{\alpha} + 1}= \frac{3 - \alpha}{1 + \alpha} \cdot
    \end{align*}

For the second election, again w.l.o.g (due to symmetry), assume that $f(\prefp') = a_1$ and consider metric $d_2$ where the distance between $a_1$ and $a_2$ is $\frac{1}{\alpha} + 1$. The first agent lies between $a_1$ and $a_2$ and the second agent is in the same location as $a_2$. The distances between agents and alternatives are as follows:
$$
    d_2(1, a_1) = 1, d_2(2, a_1) = 1+\frac{1}{\alpha}, d_2(1, a_2) = \frac{1}{\alpha}, d_2(2, a_2) = 0.
$$
We have:
\begin{align*}
\dist_\alpha(a_1, \elc') &\ge \frac{\scsd{d_2}(a_1)}{\scsd{d_2}(a_2)} = \frac{d_2(1, a_1) + d_2(2, a_1)}{d_2(1, a_2) + d_2(2, a_2)}  = \frac{ 1 + 1 + \frac{1}{\alpha}}{\frac{1}{\alpha} +0} = 2\alpha + 1 \cdot
    \end{align*}

\begin{figure}[t]
\centering
\begin{minipage}{0.49\textwidth}
  \centering
  \includegraphics[width=0.95\textwidth]{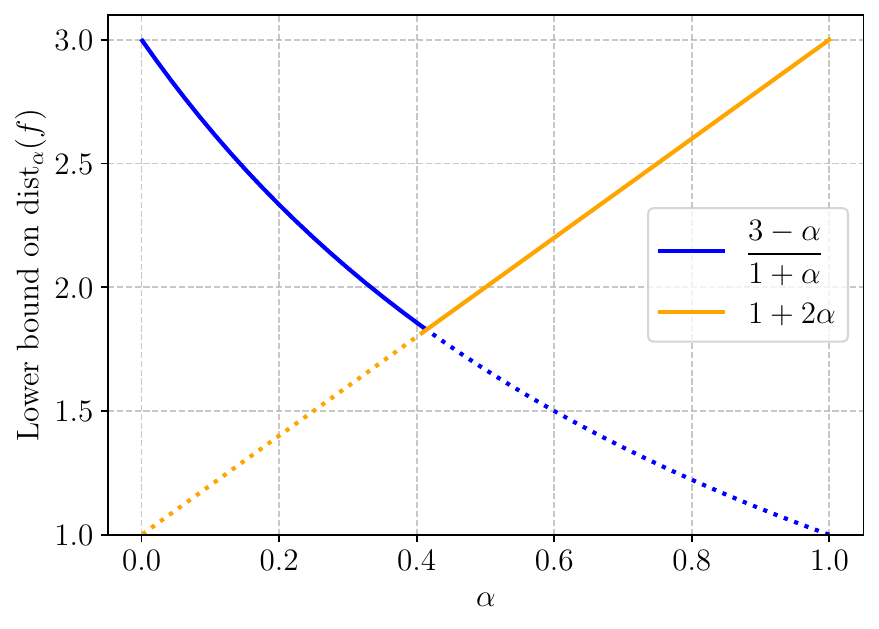}
   \caption{An illustration of the lower bounds established in \Cref{thm:metric_lb}.}
   \label{illustrations-line:lb}
\end{minipage}
\hfill
\begin{minipage}[t]{0.49\textwidth}

\tikzset{every picture/.style={line width=0.5pt}} 
\centering
\vspace{-26mm}
\resizebox{.9\textwidth}{!}{
\begin{tikzpicture}[font=\Large,x=0.75pt,y=0.75pt,yscale=-1,xscale=1]

\draw    (80,130) -- (560,128) ;
\draw    (100,120) -- (100,140) ;
\draw    (560,120) -- (560,140) ;
\draw    (248,120) -- (248,140) ;
\draw   (248.6,112.6) .. controls (248.62,107.93) and (246.3,105.59) .. (241.63,105.57) -- (186,105.34) .. controls (179.33,105.31) and (176.01,102.97) .. (176.03,98.3) .. controls (176.01,102.97) and (172.67,105.28) .. (166,105.25)(169,105.27) -- (110.36,105.02) .. controls (105.69,105) and (103.35,107.32) .. (103.33,111.99) ;
\draw   (393.1,109.2) .. controls (393.09,104.53) and (390.75,102.21) .. (386.08,102.22) -- (332.33,102.37) .. controls (325.66,102.39) and (322.32,100.07) .. (322.31,95.4) .. controls (322.32,100.07) and (319,102.41) .. (312.33,102.43)(315.33,102.42) -- (258.58,102.58) .. controls (253.91,102.59) and (251.59,104.93) .. (251.6,109.6) ;
\draw  [dash pattern={on 4.5pt off 4.5pt}]  (248,188) -- (248,150) ;
\draw [shift={(248,148)}, rotate = 90] [color={rgb, 255:red, 0; green, 0; blue, 0 }  ][line width=0.75]    (10.93,-3.29) .. controls (6.95,-1.4) and (3.31,-0.3) .. (0,0) .. controls (3.31,0.3) and (6.95,1.4) .. (10.93,3.29)   ;
\draw  [dash pattern={on 4.5pt off 4.5pt}]  (560,188) -- (560,150) ;
\draw [shift={(560,148)}, rotate = 90] [color={rgb, 255:red, 0; green, 0; blue, 0 }  ][line width=0.75]    (10.93,-3.29) .. controls (6.95,-1.4) and (3.31,-0.3) .. (0,0) .. controls (3.31,0.3) and (6.95,1.4) .. (10.93,3.29)   ;
\draw    (392,120) -- (392,140) ;
\draw   (560.1,107.2) .. controls (560.09,102.53) and (557.75,100.21) .. (553.08,100.22) -- (501.04,100.34) .. controls (494.37,100.36) and (491.04,98.04) .. (491.03,93.37) .. controls (491.04,98.04) and (487.71,100.38) .. (481.04,100.39)(484.04,100.39) -- (408.58,100.57) .. controls (403.91,100.58) and (401.59,102.92) .. (401.6,107.59) ;

\draw (460,50.4) node [anchor=north west][inner sep=0.75pt]    {$\frac{2\alpha \ +\ 2}{1\ -\ \alpha }$};
\draw (153,55.4) node [anchor=north west][inner sep=0.75pt]    {$\frac{1}{\alpha } +\ 1$};
\draw (243,194.4) node [anchor=north west][inner sep=0.75pt]    {$1$};
\draw (555,194.4) node [anchor=north west][inner sep=0.75pt]    {$2$};
\draw (83,128.4) node [anchor=north west][inner sep=0.75pt]    {$a_{1}$};
\draw (374,134.4) node [anchor=north west][inner sep=0.75pt]    {$a_{2}$};
\draw (71,52.4) node [anchor=north west][inner sep=0.75pt]    {$d_{1} :$};
\draw (301,54.4) node [anchor=north west][inner sep=0.75pt]    {$\frac{1}{\alpha } +\ 1$};
\end{tikzpicture}}

\resizebox{.9\textwidth}{!}{
\begin{tikzpicture}[font=\Large,x=0.75pt,y=0.75pt,yscale=-1,xscale=1]

\draw    (80,130) -- (560,128) ;
\draw    (100,120) -- (100,140) ;
\draw    (560,120) -- (560,140) ;
\draw    (248,120) -- (248,140) ;
\draw   (248.6,112.6) .. controls (248.62,107.93) and (246.3,105.59) .. (241.63,105.57) -- (186,105.34) .. controls (179.33,105.31) and (176.01,102.97) .. (176.03,98.3) .. controls (176.01,102.97) and (172.67,105.28) .. (166,105.25)(169,105.27) -- (110.36,105.02) .. controls (105.69,105) and (103.35,107.32) .. (103.33,111.99) ;
\draw   (559.6,111.6) .. controls (559.63,106.93) and (557.32,104.58) .. (552.65,104.55) -- (415.64,103.67) .. controls (408.98,103.63) and (405.66,101.28) .. (405.69,96.61) .. controls (405.66,101.28) and (402.32,103.59) .. (395.65,103.54)(398.65,103.56) -- (258.64,102.66) .. controls (253.98,102.63) and (251.63,104.94) .. (251.6,109.61) ;
\draw  [dash pattern={on 4.5pt off 4.5pt}]  (248,188) -- (248,150) ;
\draw [shift={(248,148)}, rotate = 90] [color={rgb, 255:red, 0; green, 0; blue, 0 }  ][line width=0.75]    (10.93,-3.29) .. controls (6.95,-1.4) and (3.31,-0.3) .. (0,0) .. controls (3.31,0.3) and (6.95,1.4) .. (10.93,3.29)   ;
\draw  [dash pattern={on 4.5pt off 4.5pt}]  (560,188) -- (560,150) ;
\draw [shift={(560,148)}, rotate = 90] [color={rgb, 255:red, 0; green, 0; blue, 0 }  ][line width=0.75]    (10.93,-3.29) .. controls (6.95,-1.4) and (3.31,-0.3) .. (0,0) .. controls (3.31,0.3) and (6.95,1.4) .. (10.93,3.29)   ;

\draw (400,50.4) node [anchor=north west][inner sep=0.75pt]    {$\frac{1}{\alpha }$};
\draw (169,66.4) node [anchor=north west][inner sep=0.75pt]    {$1$};
\draw (243,194.4) node [anchor=north west][inner sep=0.75pt]    {$1$};
\draw (555,194.4) node [anchor=north west][inner sep=0.75pt]    {$2$};
\draw (83,128.4) node [anchor=north west][inner sep=0.75pt]    {$a_{1}$};
\draw (562,131.4) node [anchor=north west][inner sep=0.75pt]    {$a_{2}$};
\draw (71,52.4) node [anchor=north west][inner sep=0.75pt]    {$d_{2} :$};
\end{tikzpicture}}
\caption{Metrics used in the proof of \Cref{thm:metric_lb}.}
\label{fig:line_lb_2}
\end{minipage}
\end{figure}

One can easily check that $\prefp \scns d_1$ and $\prefp' \scns d_2$. By definition, the distortion of $f$ is at least the maximum of its distortion in $\elc$ and $\elc'$, and
hence, $\dist_{\alpha}(f)$ is at least $\max\left( \frac{3 - \alpha}{1 + \alpha}, 2\alpha + 1 \right)$. \qedhere
\end{proof}

In order to design a rule with good distortion in this restricted setting, let us first take a look at the structure of the instances. In an election with two alternatives $\{a_1, a_2\}$, agents have four possible preferences: $a_1 \succ a_2$,  $a_2 \succ a_1$, $a_1 \ggc a_2$, and $a_2 \ggc a_1$. Let $N_{a_1 \succ a_2}$,  $N_{a_2 \succ a_1}$, $N_{a_1 \ggc a_2}$, and $N_{a_2 \ggc a_1}$ indicate the set of agents with these preferences respectively. 
Our goal is to decide between $a_1$ and $a_2$ which one has a lower distortion for a given preference profile $\prefp$. To make this decision, we think of metrics that are consistent with $\prefp$ and maximize the distortion of a specific alternative. For alternative $a_1$, consider metrics $d_1$ and $d_2$ as illustrated in \Cref{fig:metric_worst}. In the following lemma, we prove that these metrics have the maximum distortion for $a_1$.

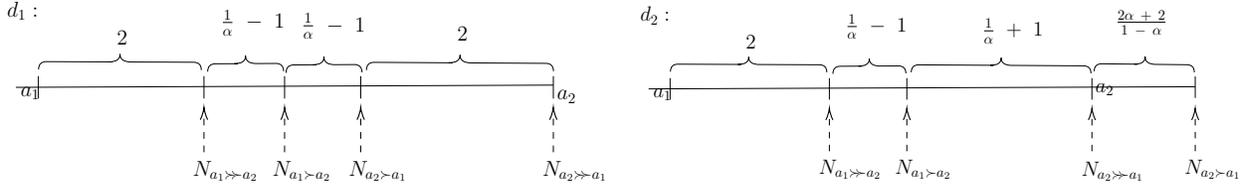
\begin{figure}[t]	
\centering
\begin{minipage}[t]{0.49\textwidth}
\resizebox{\textwidth}{!}{
\begin{tikzpicture}[font=\Large,x=0.75pt,y=0.75pt,yscale=-1,xscale=1]

\draw    (80,130) -- (560,128) ;
\draw    (100,120) -- (100,140) ;
\draw    (560,120) -- (560,140) ;
\draw    (248,120) -- (248,140) ;
\draw   (245.6,114.6) .. controls (245.62,109.93) and (243.3,107.59) .. (238.63,107.57) -- (183,107.34) .. controls (176.33,107.31) and (173.01,104.97) .. (173.03,100.3) .. controls (173.01,104.97) and (169.67,107.28) .. (163,107.25)(166,107.27) -- (107.36,107.02) .. controls (102.69,107) and (100.35,109.32) .. (100.33,113.99) ;
\draw   (560.1,115.6) .. controls (560.13,110.93) and (557.81,108.59) .. (553.14,108.56) -- (486.64,108.16) .. controls (479.97,108.12) and (476.65,105.77) .. (476.68,101.1) .. controls (476.65,105.77) and (473.31,108.08) .. (466.64,108.05)(469.64,108.06) -- (400.14,107.65) .. controls (395.47,107.62) and (393.13,109.94) .. (393.1,114.61) ;
\draw  [dash pattern={on 4.5pt off 4.5pt}]  (248,188) -- (248,150) ;
\draw [shift={(248,148)}, rotate = 90] [color={rgb, 255:red, 0; green, 0; blue, 0 }  ][line width=0.75]    (10.93,-3.29) .. controls (6.95,-1.4) and (3.31,-0.3) .. (0,0) .. controls (3.31,0.3) and (6.95,1.4) .. (10.93,3.29)   ;
\draw  [dash pattern={on 4.5pt off 4.5pt}]  (560,188) -- (560,150) ;
\draw [shift={(560,148)}, rotate = 90] [color={rgb, 255:red, 0; green, 0; blue, 0 }  ][line width=0.75]    (10.93,-3.29) .. controls (6.95,-1.4) and (3.31,-0.3) .. (0,0) .. controls (3.31,0.3) and (6.95,1.4) .. (10.93,3.29)   ;
\draw  [dash pattern={on 4.5pt off 4.5pt}]  (320,188) -- (320,150) ;
\draw [shift={(320,148)}, rotate = 90] [color={rgb, 255:red, 0; green, 0; blue, 0 }  ][line width=0.75]    (10.93,-3.29) .. controls (6.95,-1.4) and (3.31,-0.3) .. (0,0) .. controls (3.31,0.3) and (6.95,1.4) .. (10.93,3.29)   ;
\draw    (320,120) -- (320,129.6) -- (320,140) ;
\draw   (319.1,115.6) .. controls (319.07,110.93) and (316.73,108.61) .. (312.06,108.64) -- (295.17,108.74) .. controls (288.51,108.78) and (285.16,106.47) .. (285.13,101.8) .. controls (285.16,106.47) and (281.85,108.82) .. (275.18,108.86)(278.18,108.84) -- (258.29,108.96) .. controls (253.62,108.99) and (251.3,111.33) .. (251.33,116) ;
\draw  [dash pattern={on 4.5pt off 4.5pt}]  (388,188) -- (388,150) ;
\draw [shift={(388,148)}, rotate = 90] [color={rgb, 255:red, 0; green, 0; blue, 0 }  ][line width=0.75]    (10.93,-3.29) .. controls (6.95,-1.4) and (3.31,-0.3) .. (0,0) .. controls (3.31,0.3) and (6.95,1.4) .. (10.93,3.29)   ;
\draw    (388,120) -- (388,132) -- (388,140) ;
\draw   (389.1,116.6) .. controls (389.07,111.93) and (386.73,109.61) .. (382.06,109.64) -- (365.17,109.74) .. controls (358.51,109.78) and (355.16,107.47) .. (355.13,102.8) .. controls (355.16,107.47) and (351.85,109.82) .. (345.18,109.86)(348.18,109.84) -- (328.29,109.96) .. controls (323.62,109.99) and (321.3,112.33) .. (321.33,117) ;

\draw (473,74.4) node [anchor=north west][inner sep=0.75pt]    {$2$};
\draw (169,78.4) node [anchor=north west][inner sep=0.75pt]    {$2$};
\draw (237,194.4) node [anchor=north west][inner sep=0.75pt]    {$N_{a_1 \ggc a_2}$};
\draw (549,194.4) node [anchor=north west][inner sep=0.75pt]    {$N_{a_2 \ggc a_1}$};
\draw (83,128.4) node [anchor=north west][inner sep=0.75pt]    {$a_{1}$};
\draw (562,131.4) node [anchor=north west][inner sep=0.75pt]    {$a_{2}$};
\draw (71,52.4) node [anchor=north west][inner sep=0.75pt]    {$d_{1} :$};
\draw (308,194.4) node [anchor=north west][inner sep=0.75pt]    {$N_{a_1 \succ a_2}$};
\draw (262,59.4) node [anchor=north west][inner sep=0.75pt]    {$\frac{1}{\alpha } \ -\ 1$};
\draw (376,194.4) node [anchor=north west][inner sep=0.75pt]    {$N_{a_2 \succ a_1}$};
\draw (334,62.4) node [anchor=north west][inner sep=0.75pt]    {$\frac{1}{\alpha } \ -\ 1$};

\end{tikzpicture}}
\end{minipage}
\hfill
\begin{minipage}[t]{0.49\textwidth}
\resizebox{\textwidth}{!}{
\begin{tikzpicture}[font=\Large,x=0.75pt,y=0.75pt,yscale=-1,xscale=1]

\draw    (80,130) -- (588,128) ;
\draw    (100,120) -- (100,140) ;
\draw    (492,120) -- (492,140) ;
\draw    (248,120) -- (248,140) ;
\draw   (245.6,114.6) .. controls (245.62,109.93) and (243.3,107.59) .. (238.63,107.57) -- (183,107.34) .. controls (176.33,107.31) and (173.01,104.97) .. (173.03,100.3) .. controls (173.01,104.97) and (169.67,107.28) .. (163,107.25)(166,107.27) -- (107.36,107.02) .. controls (102.69,107) and (100.35,109.32) .. (100.33,113.99) ;
\draw   (491.1,115.6) .. controls (491.13,110.93) and (488.81,108.59) .. (484.14,108.56) -- (417.64,108.16) .. controls (410.97,108.12) and (407.65,105.77) .. (407.68,101.1) .. controls (407.65,105.77) and (404.31,108.08) .. (397.64,108.05)(400.64,108.06) -- (331.14,107.65) .. controls (326.47,107.62) and (324.13,109.94) .. (324.1,114.61) ;
\draw  [dash pattern={on 4.5pt off 4.5pt}]  (248,188) -- (248,150) ;
\draw [shift={(248,148)}, rotate = 90] [color={rgb, 255:red, 0; green, 0; blue, 0 }  ][line width=0.75]    (10.93,-3.29) .. controls (6.95,-1.4) and (3.31,-0.3) .. (0,0) .. controls (3.31,0.3) and (6.95,1.4) .. (10.93,3.29)   ;
\draw  [dash pattern={on 4.5pt off 4.5pt}]  (492,188) -- (492,150) ;
\draw [shift={(492,148)}, rotate = 90] [color={rgb, 255:red, 0; green, 0; blue, 0 }  ][line width=0.75]    (10.93,-3.29) .. controls (6.95,-1.4) and (3.31,-0.3) .. (0,0) .. controls (3.31,0.3) and (6.95,1.4) .. (10.93,3.29)   ;
\draw  [dash pattern={on 4.5pt off 4.5pt}]  (320,188) -- (320,150) ;
\draw [shift={(320,148)}, rotate = 90] [color={rgb, 255:red, 0; green, 0; blue, 0 }  ][line width=0.75]    (10.93,-3.29) .. controls (6.95,-1.4) and (3.31,-0.3) .. (0,0) .. controls (3.31,0.3) and (6.95,1.4) .. (10.93,3.29)   ;
\draw    (320,120) -- (320,129.6) -- (320,140) ;
\draw   (319.1,115.6) .. controls (319.07,110.93) and (316.73,108.61) .. (312.06,108.64) -- (295.17,108.74) .. controls (288.51,108.78) and (285.16,106.47) .. (285.13,101.8) .. controls (285.16,106.47) and (281.85,108.82) .. (275.18,108.86)(278.18,108.84) -- (258.29,108.96) .. controls (253.62,108.99) and (251.3,111.33) .. (251.33,116) ;
\draw  [dash pattern={on 4.5pt off 4.5pt}]  (588,188) -- (588,150) ;
\draw [shift={(588,148)}, rotate = 90] [color={rgb, 255:red, 0; green, 0; blue, 0 }  ][line width=0.75]    (10.93,-3.29) .. controls (6.95,-1.4) and (3.31,-0.3) .. (0,0) .. controls (3.31,0.3) and (6.95,1.4) .. (10.93,3.29)   ;
\draw    (588,118.4) -- (588,128) -- (588,138.4) ;
\draw   (587.1,114.6) .. controls (587.15,109.93) and (584.85,107.57) .. (580.18,107.52) -- (551.42,107.21) .. controls (544.75,107.14) and (541.44,104.78) .. (541.49,100.11) .. controls (541.44,104.78) and (538.09,107.07) .. (531.42,107)(534.42,107.03) -- (501.18,106.67) .. controls (496.51,106.62) and (494.15,108.92) .. (494.1,113.59) ;

\draw (389,63.4) node [anchor=north west][inner sep=0.75pt]    {$\frac{1}{\alpha } \ +\ 1$};
\draw (169,78.4) node [anchor=north west][inner sep=0.75pt]    {$2$};
\draw (237,194.4) node [anchor=north west][inner sep=0.75pt]    {$N_{a_1 \ggc a_2}$};
\draw (481,196.4) node [anchor=north west][inner sep=0.75pt]    {$N_{a_2 \ggc a_1}$};
\draw (83,128.4) node [anchor=north west][inner sep=0.75pt]    {$a_{1}$};
\draw (494,123.4) node [anchor=north west][inner sep=0.75pt]    {$a_{2}$};
\draw (71,52.4) node [anchor=north west][inner sep=0.75pt]    {$d_{2} :$};
\draw (308,194.4) node [anchor=north west][inner sep=0.75pt]    {$N_{a_1 \succ a_2}$};
\draw (262,59.4) node [anchor=north west][inner sep=0.75pt]    {$\frac{1}{\alpha } \ -\ 1$};
\draw (577,194.4) node [anchor=north west][inner sep=0.75pt]    {$N_{a_2 \succ a_1}$};
\draw (513,55.4) node [anchor=north west][inner sep=0.75pt]    {$\frac{2\alpha \ +\ 2}{1\ -\ \alpha }$};
\end{tikzpicture}}
\end{minipage}
\vspace{-1cm}
\caption{Metrics that provide worst possible distortion.}
\label{fig:metric_worst}
\end{figure}

\begin{lemma}
\label{lem:line-tech}
Consider an election $\elc$ with two alternatives $\alts = \{a_1, a_2\}$, and metric spaces $d_1$ and $d_2$ as illustrated in \Cref{fig:metric_worst}, we have:
    $$\dist_\alpha(a_1, \elc) = \max \left(\frac{\scsd{d_1}(a_1)}{\scsd{d_1}(a_2)}, \frac{\scsd{d_2}(a_1)}{\scsd{d_2}(a_2)}, 1 \right).$$
\end{lemma}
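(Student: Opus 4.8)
The plan is to prove the two inequalities separately; write $\rho_j := \scsd{d_j}(a_1)/\scsd{d_j}(a_2)$ for $j\in\{1,2\}$, so the target is $\dist_\alpha(a_1,\elc)=\max(\rho_1,\rho_2,1)$. The lower bound $\dist_\alpha(a_1,\elc)\ge\max(\rho_1,\rho_2,1)$ is the easy direction: $\dist_\alpha(a_1,\elc)\ge 1$ holds trivially since with two alternatives $\min_b\scs_d(b)\le\scs_d(a_2)$ for every metric $d$; and it suffices to check that $d_1$ and $d_2$ (or metrics arbitrarily close to them, as a couple of the extremal positions sit on the open boundary of a constraint) are $\alpha$-consistent with $\prefp$ under mandatory elicitation --- the triangle inequality is automatic on a line, and for each of the four groups $N_{a_1\succ a_2},N_{a_2\succ a_1},N_{a_1\ggc a_2},N_{a_2\ggc a_1}$ the prescribed location realises exactly the required $\alpha$-gap (strict for the $\qsucc$ groups, met with equality or beyond for the $\qggc$ groups). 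Then $\dist_\alpha(a_1,\elc)\ge\rho_j$ follows from the definition of distortion as a supremum over consistent metrics.

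For the upper bound I would fix an arbitrary line metric $d$ with $\prefp\scns d$ and, using scale-invariance of the ratio, place $a_1$ at $0$, $a_2$ at $1$, and agent $i$ at a coordinate $x_i\in\mathbb R$, so $\scs_d(a_1)=\sum_i|x_i|$ and $\scs_d(a_2)=\sum_i|x_i-1|$. Translating each preference into a constraint on $x_i$, the feasible set of $x_i$ is for each of the four types the union of an interval (between $a_1$ and $a_2$) with a half-line (beyond the closer alternative), so it has only constantly many extreme points. The key exchange step: for a fixed target $\lambda\ge 1$ and fixed positions of the other agents, $\sum_i(|x_i|-\lambda|x_i-1|)$ is affine in $x_i$ along each feasible piece and bounded above on the half-lines, hence maximised at an extreme point of $x_i$'s feasible set. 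Computing which extreme point wins as a function of $\lambda\ge 1$ gives: the midpoint for $i\in N_{a_1\succ a_2}$; the point where the $\alpha$-gap toward $a_1$ is tight for $i\in N_{a_1\ggc a_2}$; the location of $a_2$ itself for $i\in N_{a_2\ggc a_1}$; and, for $i\in N_{a_2\succ a_1}$, the point just beyond $a_2$ when $\lambda\le 1/\alpha$ but the point just short of the $\qggc$ threshold (between $a_1$ and $a_2$) when $\lambda\ge 1/\alpha$. These are precisely the positions used in $d_2$ and in $d_1$ respectively, and the other three groups are placed identically in $d_1$ and $d_2$.

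To conclude, define $g(\lambda):=\sup\bigl\{\scs_{d'}(a_1)-\lambda\,\scs_{d'}(a_2):\ \prefp\scns d',\ d'(a_1,a_2)=1\bigr\}$, a continuous non-increasing piecewise-linear function; by the previous paragraph $g$ agrees with $\scsd{d_2}(a_1)-\lambda\,\scsd{d_2}(a_2)$ on $[1,1/\alpha]$ and with $\scsd{d_1}(a_1)-\lambda\,\scsd{d_1}(a_2)$ on $[1/\alpha,\infty)$ (the $d_j$ rescaled to unit inter-alternative distance, which does not change the ratios). Since the ratio is scale-invariant, $\dist_\alpha(a_1,\elc)=\max(1,\lambda^\ast)$ where $\lambda^\ast:=\sup\{\lambda:g(\lambda)\ge 0\}$ is the root of $g$ (or $\lambda^\ast<1$, in which case $\dist_\alpha(a_1,\elc)=1$ and one checks $\rho_1,\rho_2\le 1$). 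If $\lambda^\ast\le 1/\alpha$ then $0=g(\lambda^\ast)=\scsd{d_2}(a_1)-\lambda^\ast\scsd{d_2}(a_2)$, so $\lambda^\ast=\rho_2$; symmetrically $\lambda^\ast=\rho_1$ when $\lambda^\ast\ge1/\alpha$. Either way $\lambda^\ast\le\max(\rho_1,\rho_2)$, so $\dist_\alpha(a_1,\elc)=\max(1,\lambda^\ast)\le\max(\rho_1,\rho_2,1)$, and combined with the lower bound this is an equality.

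The main obstacle is the extreme-point case analysis for the four preference types, and in particular pinning down the switch at $\lambda=1/\alpha$ for the $N_{a_2\succ a_1}$ group, which is exactly why two extremal metrics are needed rather than one, together with verifying that the far-side half-lines never beat the in-segment extreme points. A few degenerate cases should be handled separately: $\alpha=1$ (where some feasible intervals collapse); instances with no $a_2$-preferring agent, where the value is simply $1$; and instances in which every agent strongly prefers $a_2$, where $\scs_d(a_2)$ can be driven to $0$ so that $\dist_\alpha(a_1,\elc)=\infty=\rho_1$.
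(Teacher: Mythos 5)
Your plan is correct, and it reaches the same two extremal configurations as the paper, but by a genuinely different route. The paper argues directly on a ratio-maximizing metric: it fixes a line metric $d$ maximizing $\scsd{d}(a_1)/\scsd{d}(a_2)$ and moves each agent, type by type, showing the ratio can only grow until each agent sits at an extremal position (midpoint for $N_{a_1\succ a_2}$, the in-segment $\alpha$-threshold for $N_{a_1\ggc a_2}$, the location of $a_2$ for $N_{a_2\ggc a_1}$, and one of two threshold positions for $N_{a_2\succ a_1}$), which yields $d_1$ or $d_2$. You instead run a parametric (Dinkelbach-style) argument: normalize $d(a_1,a_2)=1$, maximize the separable objective $\sum_i\bigl(|x_i|-\lambda|x_i-1|\bigr)$ per agent over the (piecewise-interval) feasible sets, and locate the root of $g(\lambda)$. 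The payoff of your route is that it cleanly handles points the paper glosses over: the supremum is over metrics satisfying strict $\qsucc$ constraints, so $d_1,d_2$ are only boundary limits (you note this explicitly, the paper treats the sup as attained); the fact that \emph{all} agents of type $N_{a_2\succ a_1}$ switch simultaneously between the in-segment and far-side positions follows in your setup from the single threshold $\lambda=1/\alpha$, whereas the paper implicitly assumes they co-locate; and you isolate the degenerate cases ($\lambda^\ast<1$, all agents in $N_{a_2\ggc a_1}$ giving $\dist_\alpha=\infty=\rho_1$, and $\alpha\in\{0,1\}$) that the paper does not discuss. The cost is more machinery and a four-type extreme-point computation that you sketch rather than carry out, but the claims you state for each type (including the switch at $\lambda=1/\alpha$ and the fact that the far half-lines never beat the in-segment extreme points for $\lambda\ge 1$) check out against the positions encoded in \Cref{fig:metric_worst}, so completing the computations is routine.
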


\begin{proof}
     Assume that $d$ is a line metric that maximizes the ratio between the social cost of $a_1$ and that of $a_2$. For each agent we find a location for her in $d$ that maximizes her contribution to the distortion using the fact that $d(i, a_1)$ and $d(i, a_2)$ appear in the numerator and denominator of $\frac{\scsd{d}(a_1)}{\scsd{d}(a_2)}$ respectively. We have to consider four types of agents. First let us put $a_1$ to the left and $a_2$ to the right. For an agent $i \in N_{a_2 \ggc a_1}$,
     if $i$ is not at the same location as $a_2$, it either lies between the two alternatives or is located to the right of $a_2$. If it is in between by moving it towards $a_2$ we can decrease $d(i, a_2)$ and increase $d(i, a_1)$ which makes $\frac{\scsd{d}(a_1)}{\scsd{d}(a_2)}$ larger. In the other case, if we move $i$ closer to $a_2$, denominator and numerator will become equally smaller. Yet, because the distortion of $a_1$ is greater than one, the ratio will become larger. Therefore, in both cases, we are able to make $\frac{\scsd{d}(a_1)}{\scsd{d}(a_2)}$ larger, which shows that $i$ must be at the same location as $a_2$ to make the maximum contribution to the distortion.
     
     For an agent $i \in N_{a_1 \succ a_2}$, assume that $d(i, a_1) < d(i, a_2)$. If $i$ lies between the two alternatives, we can move it closer to the middle of the segment connecting alternatives (farther from $a_1$), and make the numerator larger, and the denominator smaller. This gives us a larger distortion. If $i$ lies to the left of $a_1$, we move it to the middle of the segment between alternatives. By doing so, the denominator will become smaller ($d(i, a_2)$ was at least as large as the length of the segment), and the numerator will probably become smaller, but the value of the numerator will not change as much as the value of denominator will. Therefore, the distortion will become larger, as it was greater than one. Hence, setting $d(i, a_1) = d(i, a_2)$ makes $i$ make the maximum contribution to the distortion.

     For $i \in N_{a_1 \ggc a_2}$, first assume that $i$ lies outside the segment connecting the alternatives. With an argument similar to the previous case, we can easily show that if we move $i$ to a location between the two alternatives where $d(i, a_1) \ge \alpha d(i, a_2)$, we will obtain a larger distortion. Hence, $i$ must be between the two alternatives. Here, the closer $i$ gets to $a_2$, the larger the distortion will be. Due to intensity constraints, $i$ can at most lie at a location where $d(i, a_1) = \alpha d(i, a_2)$, and that is where it makes the maximum contribution to the distortion.
     
     Finally, for $i \in N_{a_2 \succ a_1}$, there are two possible locations. If $i$ lies between the alternatives, with similar arguments, it has to lie at the closest possible location to $a_2$, which, due to intensity constraints, is a location where $\alpha d(i, a_1) = d(i, a_2)$. If $i$ lies outside the segment connecting the alternatives, by move it closer to $a_2$, we can make both numerator and denominator equally smaller, which gives us a larger distortion. That means, it has to be located at the closest possible location to $a_2$, which is a locations where $\alpha d(i, a_1) = d(i, a_2)$ due to intensity constraints. In this case $i$ can lie in two possible locations. We can now conclude that the metric $d$ which maximizes the distortion of $a_1$ is similar to either $d_1$ or $d_2$, based on the location of agents in $N_{a_2 \succ a_1}$, which completes the proof.
\end{proof}

Note that we can define metrics $d'_1$ and $d'_2$ that maximize the distortion of $a_2$ by changing the location of $a_1$ and $a_2$ in $d_1$ and $d_2$ and relabel the agents respectively. This observation gives us the tool to find the alternative with minimum distortion in elections with two alternatives.

We can see that 

\[
\frac{\scsd{d_1}(a_1)}{\scsd{d_1}(a_2)}\!=\!\frac{(\frac{1}{\alpha} + 1)  n_{a_1 \succ a_2} + \frac{2}{\alpha}  n_{a_2 \succ a_1} + 2  n_{a_1 \ggc a_2} + (2 + \frac{2}{\alpha})  n_{a_2 \ggc a_1}}{(\frac{1}{\alpha} + 1)  n_{a_1 \succ a_2} + 2  n_{a_2 \succ a_1} + \frac{2}{\alpha}  n_{a_1 \ggc a_2}}
\]
where $n_X := |N_X|$.
Similarly we have
\[
\frac{\scsd{d_2}\!(a_1\!)}{\scsd{d_2}\!(a_2\!)}\!=\!\frac{(\frac{1}{\alpha} + 1)  n_{a_1 \succ a_2}\!+\!\frac{2 + \frac{2}{\alpha}}{1 - \alpha}  n_{a_2 \succ a_1}\!+\!2  n_{a_1 \ggc a_2}\!+\!(2 + \frac{2}{\alpha})  n_{a_2 \ggc a_1}}{(\frac{1}{\alpha} + 1)  n_{a_1 \succ a_2} + \frac{2\alpha + 2}{1 - \alpha}  n_{a_2 \succ a_1} + \frac{2}{\alpha}  n_3}\cdot
\]
For convenience let
\begin{align*}
D_{\alpha}(n_1, n_2, n_3, n_4)=\max \Bigg(&\frac{(\frac{1}{\alpha} + 1)  n_1 + \frac{2}{\alpha}  n_2 + 2  n_3 + (2 + \frac{2}{\alpha})  n_4}{(\frac{1}{\alpha} + 1)  n_1 + 2  n_2 + \frac{2}{\alpha}  n_3},\\
&\frac{(\frac{1}{\alpha} + 1)  n_1 + \frac{\frac{2}{\alpha}(1 + \frac{1}{\alpha})}{\frac{1}{\alpha} - 1}  n_2 + 2  n_3 + (2 + \frac{2}{\alpha})  n_4}{(\frac{1}{\alpha} + 1)  n_1 + \frac{2(1 + \frac{1}{\alpha})}{\frac{1}{\alpha} - 1}  n_2 + \frac{2}{\alpha}  n_3} \Bigg).
\end{align*}
That means 
$$\dist_\alpha(a_1, \elc) = D_\alpha(n_{a_1 \succ a_2}, n_{a_2 \succ a_1}, n_{a_1 \ggc a_2}, n_{a_2 \ggc a_1}).$$
With the same argument we can show that 
$$\dist_\alpha(a_2, \elc) = D_\alpha(n_{a_2 \succ a_1}, n_{a_1 \succ a_2}, n_{a_2 \ggc a_1}, n_{a_1 \ggc a_2}).$$
That means a voting rule can simply compute $D_\alpha(n_{a_1 \succ a_2},  n_{a_2 \succ a_1}, n_{a_1 \ggc a_2}, n_{a_2 \ggc a_1})$ and $D_\alpha(n_{a_2 \succ a_1},\allowbreak n_{a_1 \succ a_2}, n_{a_2 \ggc a_1}, n_{a_1 \ggc a_2})$ and select $a_1$ if the former is smaller and $a_2$ otherwise. For a given $\alpha$ we call this rule $f^{\mathsf{tal}}_\alpha$ since it performs well when we have two agents on a line.

\begin{theorem}
    Given an election $\elc = (N, \{a_1, a_2\}, \prefp)$ in the $1$-Euclidean space, and parameter $\alpha \in (0, 1]$ we have 

\[
\dist_\alpha\left(f^{\mathsf{tal}}_\alpha(\prefp), \elc\right) = 
\min\Bigg\{
\begin{aligned}
&D_\alpha(n_{a_1 \succ a_2},\; n_{a_2 \succ a_1},\; n_{a_1 \ggc a_2},\; n_{a_2 \ggc a_1}), \\
&D_\alpha(n_{a_2 \succ a_1},\; n_{a_1 \succ a_2},\; n_{a_2 \ggc a_1},\; n_{a_1 \ggc a_2})
\end{aligned}
\Bigg\}
\]

\label{thm:metric_ub}
\end{theorem}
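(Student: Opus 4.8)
The plan is to obtain \Cref{thm:metric_ub} as a direct consequence of \Cref{lem:line-tech} and its mirror image, combined with the definition of $f^{\mathsf{tal}}_\alpha$. First I would record precisely what \Cref{lem:line-tech} gives for $a_1$: the supremum over line metrics consistent with $\prefp$ of $\scsd{d}(a_1)/\scsd{d}(a_2)$ is attained at one of the two metrics $d_1,d_2$ of \Cref{fig:metric_worst} (or else is the trivial value $1$). To put this in closed form I would expand $\scsd{d_1}(a_1),\scsd{d_1}(a_2),\scsd{d_2}(a_1),\scsd{d_2}(a_2)$ by partitioning the agents into the four classes $N_{a_1\succ a_2}, N_{a_2\succ a_1}, N_{a_1\ggc a_2}, N_{a_2\ggc a_1}$ and reading each class's distances to $a_1$ and to $a_2$ off \Cref{fig:metric_worst}; summing these contributions yields exactly the two rational functions inside $D_\alpha$ — these are precisely the two displays stated just before the theorem. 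Hence $\dist_\alpha(a_1,\elc)=D_\alpha(n_{a_1\succ a_2}, n_{a_2\succ a_1}, n_{a_1\ggc a_2}, n_{a_2\ggc a_1})$.

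Next I would invoke the symmetry remarked on right after \Cref{lem:line-tech}: reflecting the line interchanges the roles of $a_1$ and $a_2$, carries $d_1,d_2$ to the worst-case metrics $d_1',d_2'$ for $a_2$, and relabels each agent's preference type by the swap $a_1\leftrightarrow a_2$. Applying \Cref{lem:line-tech} to $a_2$ in the reflected instance therefore gives $\dist_\alpha(a_2,\elc)=D_\alpha(n_{a_2\succ a_1}, n_{a_1\succ a_2}, n_{a_2\ggc a_1}, n_{a_1\ggc a_2})$. Finally, since with only two alternatives $f^{\mathsf{tal}}_\alpha(\prefp)$ is by construction whichever of $a_1$ and $a_2$ carries the smaller of these two $D_\alpha$-values, the distortion of the returned alternative is exactly the minimum of the two quantities, which is the claim.

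The argument has no deep obstacle: the real content sits in \Cref{lem:line-tech}, which is already established, and in the careful-but-routine bookkeeping that the distance configurations of \Cref{fig:metric_worst} expand to the two stated ratios. The only point needing a little attention is the degenerate regime — e.g.\ when almost all agents share a single strong preference, or when $\alpha$ is close to $1$ — where these ratios can dip below $1$; there the binding bound is the trivial $\dist_\alpha(a_j,\elc)\ge 1$ coming from the ``$1$'' term in the maximum of \Cref{lem:line-tech}, and it is precisely this term that licenses writing the social-cost ratios unconditionally as $D_\alpha$. This does not affect which alternative $f^{\mathsf{tal}}_\alpha$ selects.
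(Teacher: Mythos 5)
Your argument is correct and follows essentially the same route as the paper, which states the theorem as an immediate consequence of \Cref{lem:line-tech}, the class-by-class expansion of the two worst-case ratios into the $D_\alpha$ expressions, the symmetric computation for $a_2$, and the definition of $f^{\mathsf{tal}}_\alpha$ as selecting the alternative with the smaller $D_\alpha$-value. Your caveat about the degenerate regime where the ratios dip below $1$ mirrors the paper's own (implicit) treatment, which likewise equates $\dist_\alpha(a_j,\elc)$ with $D_\alpha$ without carrying an explicit $\max(\cdot,1)$ term.
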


Note that this rule outputs the instance-optimal alternative. Furthermore, using \Cref{thm:metric_ub} we can give the following bound on the distortion of this rule. 
$
\dist_\alpha (f^{\mathsf{tal}}_\alpha) = \max_{n_1, n_2, n_3, n_4} (\min (D_{\alpha}(n_2,\allowbreak  n_1, n_4, n_3), D_{\alpha}(n_1, n_2, n_3, n_4) ))
$
where in this equation distortion is defined over elections with two alternatives and in the $1$-Euclidean metric space. Although we were unable to algebraically simplify the worst-case distortion of this rule, we conjecture that in the worst case the distortion of this rule matches the lower bound in \Cref{thm:metric_lb}.

\begin{conjecture}
Considering elections with two alternatives and assuming that the metric is limited to the $1$-Euclidean space, we have
    $$\dist_\alpha (f^{\mathsf{tal}}_\alpha) = \max\left( \frac{3 - \alpha}{1 + \alpha}, 2\alpha + 1 \right)\cdot$$
\end{conjecture}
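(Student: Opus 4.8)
The plan is to compute $\max_{n_1,n_2,n_3,n_4} \min\bigl(D_\alpha(n_2,n_1,n_4,n_3),\, D_\alpha(n_1,n_2,n_3,n_4)\bigr)$ explicitly, where each $D_\alpha$ is itself a maximum of two ratios (call them the $d_1$-ratio and the $d_2$-ratio). Since $D_\alpha$ is scale-invariant in $(n_1,n_2,n_3,n_4)$, I would normalize $n_1+n_2+n_3+n_4=1$ and treat the $n_i$ as nonnegative reals (the rule's worst case over integer profiles is a supremum that is approached, matching the lower-bound instances which use only two agents). The key structural observation is that the lower bound in \Cref{thm:metric_lb} is witnessed by two concrete profiles: the profile with $n_1=n_2=\tfrac12$ (all ``$\succ$'', giving distortion $\tfrac{3-\alpha}{1+\alpha}$ via $d_1$) and the profile with $n_3=n_4=\tfrac12$ (all ``$\ggc$'', giving distortion $2\alpha+1$). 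So one direction—that $\dist_\alpha(f^{\mathsf{tal}}_\alpha) \ge \max\bigl(\tfrac{3-\alpha}{1+\alpha}, 2\alpha+1\bigr)$—follows by plugging these symmetric profiles into the formula: when $n_1=n_2$ and $n_3=n_4$ the two arguments of the outer $\min$ coincide by symmetry, so the $\min$ equals the common value $D_\alpha$, which one checks equals the claimed bound on each of the two profiles.

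The substantive direction is the upper bound: for \emph{every} profile, $\min\bigl(D_\alpha(n_2,n_1,n_4,n_3), D_\alpha(n_1,n_2,n_3,n_4)\bigr) \le \max\bigl(\tfrac{3-\alpha}{1+\alpha},2\alpha+1\bigr)$. I would argue by contradiction: suppose both $D_\alpha(n_1,n_2,n_3,n_4) > B$ and $D_\alpha(n_2,n_1,n_4,n_3) > B$ where $B := \max\bigl(\tfrac{3-\alpha}{1+\alpha},2\alpha+1\bigr)$. Expanding each $D_\alpha > B$ using that $D_\alpha$ is a max of two ratios, $D_\alpha(n_1,n_2,n_3,n_4) > B$ means the $d_1$-ratio $> B$ \emph{or} the $d_2$-ratio $> B$; cross-multiplying (all denominators are positive for $\alpha\in(0,1)$) turns each possibility into a linear inequality in $(n_1,n_2,n_3,n_4)$. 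So from the two assumptions I get a system: one linear inequality from the first $D_\alpha$ (a disjunction of two) and one from the second (again a disjunction of two), i.e.\ four cases. In each case I would take a suitable nonnegative linear combination of the two strict inequalities to derive $0 > 0$ or a contradiction with $n_i \ge 0$ and $\sum n_i = 1$; the multipliers are where the specific algebraic form of $B$ enters, and the ``swap'' symmetry between the two $D_\alpha$'s—$(n_1,n_2,n_3,n_4)\leftrightarrow(n_2,n_1,n_4,n_3)$—should make the relevant combination symmetric and cause the $n_2$-terms (which carry the ugly $\tfrac{2\alpha+2}{1-\alpha}$ coefficients in the $d_2$-ratio) to cancel or telescope.

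The main obstacle I anticipate is exactly this case analysis of four sign patterns together with the nonlinear dependence of $B$ on $\alpha$: the bound $\tfrac{3-\alpha}{1+\alpha}$ dominates $2\alpha+1$ for $\alpha$ below the root of $2\alpha^2+3\alpha-2=0$ (namely $\alpha=\tfrac12$) and $2\alpha+1$ dominates above, so the tight constraint—and hence which pair of ratios is the binding one in the worst profile—switches at $\alpha=\tfrac12$. I would therefore split the proof at $\alpha=\tfrac12$ and, within each regime, identify which of the four cases is actually feasible and produces the extremal profile (I expect the all-``$\succ$'' profile to be extremal for $\alpha<\tfrac12$ via the $d_1$-ratios, and the all-``$\ggc$'' profile for $\alpha>\tfrac12$), showing the other cases are strictly dominated. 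If the clean linear-combination argument resists a uniform treatment, the fallback is to fix $\alpha$, observe the problem is a finite-dimensional quasiconcave maximization of $\min$ of ratios of linear functions over the simplex, argue the maximum is attained at a vertex of the arrangement of the relevant hyperplanes, enumerate those vertices, and check each—turning the conjecture into a finite computation. Either way, verifying that no asymmetric profile beats the two symmetric ones is the crux.
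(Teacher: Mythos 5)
The statement you are proving is stated in the paper as a \emph{conjecture}: the authors explicitly say they could not establish it algebraically and only support it by simulations (checking all integer profiles with $n_1+n_2+n_3+n_4=1000$). So there is no paper proof to compare against, and your submission does not close the gap either: it is a plan, not a proof. The easy direction ($\geq$) is indeed immediate, either from \Cref{thm:metric_lb} applied to $f^{\mathsf{tal}}_\alpha$ or by plugging in the two symmetric profiles as you do. But the substantive direction --- that for \emph{every} $(n_1,n_2,n_3,n_4)$ the quantity $\min\bigl(D_\alpha(n_1,n_2,n_3,n_4),D_\alpha(n_2,n_1,n_4,n_3)\bigr)$ is at most $\max\bigl(\tfrac{3-\alpha}{1+\alpha},\,2\alpha+1\bigr)$ --- is exactly the part you defer: you describe a four-case cross-multiplication argument ``with multipliers to be found'' and a vertex-enumeration fallback, but neither is carried out, and the cancellation of the $\tfrac{2\alpha+2}{1-\alpha}$ terms under the swap symmetry is asserted as a hope, not verified. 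This is precisely the obstacle the authors could not overcome, so as it stands your proposal establishes nothing beyond what the paper already has. Note also that the fallback is not a genuinely finite check: the vertices of the hyperplane arrangement move with $\alpha$, so one must solve for them symbolically in $\alpha$ and compare rational functions over the whole range, which is again the nontrivial algebra being avoided.

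Two smaller corrections to the parts you did write down. First, the crossover of the two bounds is not at $\alpha=\tfrac12$: setting $\tfrac{3-\alpha}{1+\alpha}=2\alpha+1$ gives $2\alpha^2+4\alpha-2=0$, i.e.\ $\alpha=\sqrt{2}-1\approx 0.414$, not the root of $2\alpha^2+3\alpha-2=0$; if you split the analysis by regime, you must split at $\sqrt{2}-1$. Second, on the all-``$\succ$'' profile ($n_1=n_2=\tfrac12$) the binding ratio is the $d_2$-ratio, not the $d_1$-ratio: the $d_1$-ratio there evaluates to $\tfrac{3+\alpha}{1+3\alpha}$, which is dominated by $\tfrac{3-\alpha}{1+\alpha}$ for all $\alpha\in(0,1)$. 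These slips do not affect the $\geq$ direction, but they matter for the case bookkeeping you propose for the upper bound.
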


Although we could not prove this conjecture algebraically, our simulations support it. Actually, we have computed the value of $\min\!\left(D_{\alpha}(n_2, n_1, n_4, n_3), D_{\alpha}(n_1, n_2, n_3, n_4) \right) $ for all possible values of $n_1, n_2, n_3, n_4$ that sum up to 1000.

We conclude this section by providing a general lower bound on the distortion of voting rules in the line metric.
Recall that in \Cref{lem:mand_lb_left,lem:mand_lb_right}, we established a lower bound on the distortion, but that lower bound does not imply any lower bound for the line metric since the metric that we used in the proof was more general. That means we need a new lower bound for this setting.

\begin{restatable}{theorem}{linelowewrbound}\label{general_lower_bound for line}
In the $1$-Euclidean space, the metric distortion of any voting rule $f$ using ranked preferences with intensities ballot format with parameter $\alpha \in (0, 1]$ and assuming mandatory elicitation of the intensities is at least 
\[
\frac{1 + 3\alpha^{-\lfloor\frac{m}{2}\rfloor}}{3 + \alpha^{-\lfloor\frac{m}{2}\rfloor}}\cdot
\]
\end{restatable}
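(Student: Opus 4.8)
The plan is to run the standard adversarial template for metric-distortion lower bounds, but on a line. I would fix a single preference profile $\prefp$ together with a line embedding, and show that for \emph{every} alternative $a$ a deterministic rule could output, there is a line metric $d$ with $\prefp \scns d$ for which $\scs_d(a) \geq \frac{1 + 3\alpha^{-\lfloor m/2\rfloor}}{3 + \alpha^{-\lfloor m/2\rfloor}}\cdot \scs_d(\opt_d)$. As in \Cref{lem:mand_lb_left} and \Cref{lem:mand_lb_right}, the source of the large ratio is a block of roughly $\lfloor m/2\rfloor$ consecutive alternatives whose distances from some agent grow by a factor just below $1/\alpha$: under mandatory elicitation this is precisely the regime in which the agent still reports $\qsucc$ (there is no $\alpha$-gap), so the profile remains consistent and the rule learns nothing extra, while the accumulated multiplicative growth supplies the factor $\alpha^{-\lfloor m/2\rfloor}$ that makes the bound approach $3$ as $m$ grows or $\alpha\to 0$. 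I would first dispose of $\alpha=1$ (the bound is $1$, trivial), keep $0<\alpha<1$ throughout, and arrange that the small-$\alpha$ behaviour falls out of the same construction so that it smoothly recovers the unrestricted $3$-lower bound.

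The delicate point, and the reason a fresh construction is needed, is the choice of instance. A naive two-agent reversed-ranking profile (the workhorse of \Cref{lem:mand_lb_left,lem:mand_lb_right}) is useless on a line once $m\ge 3$: any alternative lying strictly between the two agents is already optimal, and a case check shows that no alternative other than the two extremes of an agent's ranking can even be placed outside the segment between the agents — so the rule could simply return a ``middle'' alternative and incur distortion $1$. I would therefore use a profile with more agents (arranged cyclically/symmetrically, so that every candidate winner is ranked near the bottom by a sizeable fraction of voters), and for each fixed candidate winner $a$ place the agents and alternatives on $\mathbb{R}$ so that (i) the induced distance orderings realize $\prefp$ exactly; (ii) for each agent the ratio of consecutive reported distances stays strictly below $1/\alpha$, pushed as close to $1/\alpha$ as the ordering allows over a length-$\lfloor m/2\rfloor$ block, so the mandatory reading is all-$\qsucc$; and (iii) $a$ ends up at distance of order $\alpha^{-\lfloor m/2\rfloor}$ (in the base unit) from a large part of the electorate while some alternative near the ``center'' of the configuration stays cheap. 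Since the metric is one-dimensional, the triangle inequality is automatic, so verifying $\prefp \scns d$ reduces to checking (i) and (ii).

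With the metric fixed, the last step is computational: sum the distances to express $\scs_d(a)$ and $\scs_d(\opt_d)$ in terms of $X:=\alpha^{-\lfloor m/2\rfloor}$, and check that the ratio equals (or is at least) $\frac{1+3X}{3+X}$, taking the supremum over the adversary's slack (letting the consecutive-distance ratios tend to $1/\alpha$) to make it tight. I expect the real obstacle to be steps (ii)–(iii): on a line, unlike in the general metric used for \Cref{lem:mand_lb_left}, one cannot make an agent equidistant from many alternatives, so the geometric spreading must be distributed across several agents and positioned so that \emph{every} agent simultaneously meets its ordering constraints and its $\qsucc$ constraints while the winner stays expensive and one alternative stays cheap. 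Balancing these competing requirements is exactly what degrades the attainable ratio from the unrestricted value $\frac{3-\alpha^{\lfloor m/2\rfloor}}{1+\alpha^{\lfloor m/2\rfloor}}$ to $\frac{3+\alpha^{\lfloor m/2\rfloor}}{1+3\alpha^{\lfloor m/2\rfloor}}$, and carrying it out cleanly will require a short case analysis over which alternative the rule outputs.
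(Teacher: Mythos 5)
There is a genuine gap: your write-up is a strategy outline whose decisive step is never carried out. You correctly identify the template (fix a profile, let the adversary pick a consistent line metric after seeing the winner, note that on a line the triangle inequality is free so only the ordering and $\qsucc$/$\qggc$ constraints need checking), and you correctly diagnose why the general-metric constructions of \Cref{lem:mand_lb_left,lem:mand_lb_right} do not transfer (an agent cannot be equidistant from many alternatives on a line). But the actual instance and embedding that realize the ratio $\frac{1+3\alpha^{-\lfloor m/2\rfloor}}{3+\alpha^{-\lfloor m/2\rfloor}}$ are exactly what you leave open ("I expect the real obstacle to be steps (ii)--(iii)"), so nothing is proved. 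Worse, the direction you propose for filling it --- many agents "arranged cyclically/symmetrically, so that every candidate winner is ranked near the bottom by a sizeable fraction of voters" --- is in tension with the setting itself: the profile must be consistent with \emph{some} line metric (under mandatory elicitation), and $1$-Euclidean profiles are severely structured (single-peaked/single-crossing), which cyclic-shift-style profiles generally are not. So even as a plan, the proposed instance family is doubtful, and no candidate profile, placement, or case analysis is exhibited.

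The paper resolves the obstacle you flag much more simply, still with only two agents: it abandons the fully reversed ranking in favor of a block-swapped profile, $a_1\succ\cdots\succ a_{\lfloor m/2\rfloor}\succ b_1\succ\cdots\succ b_{\lfloor m/2\rfloor}$ for agent $1$ and $b_1\succ\cdots\succ b_{\lfloor m/2\rfloor}\succ a_1\succ\cdots\succ a_{\lfloor m/2\rfloor}$ for agent $2$, and then \emph{co-locates} the entire $a$-block at a single point to the left of both agents (consistency only requires weak inequalities, so ties are permitted and the "equidistant" trick is recovered on a line), while the $b$-block is spread geometrically to the right so that agent $2$'s consecutive distances grow by a factor $1/\alpha$. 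Any winner from the $a$-block then costs at least $\scs(a_1)=1+3\alpha^{-\lfloor m/2\rfloor}$ while $\scs(b_1)=3+\alpha^{-\lfloor m/2\rfloor}$, and the symmetric case handles winners from the $b$-block; odd $m$ is handled by appending one strongly dispreferred alternative far away. Your sketch contains neither this (nor any) concrete construction, so as it stands it does not establish the bound.
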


\begin{proof}
First, we assume that $m$ is even. Consider an election \(\elc = (\ags, \alts, \prefp)\) where \(\ags = \{i_1, i_2\}\) and \(\alts = \{a_1, a_2, \ldots, a_{\frac{m}{2}}, b_1, b_2, \ldots, b_{\frac{m}{2}}\}\). The preferences of agents are as follows:
\begin{equation}
\nonumber \pref_{i_l}  =
\begin{cases}
        a_1 \succ a_2 \succ \ldots \succ a_{\frac{m}{2}} \succ b_1 \succ b_2 \succ \ldots \succ b_{\frac{m}{2}}  &\; l = 1\\
        b_1 \succ b_2 \succ \ldots \succ b_{\frac{m}{2}} \succ a_1 \succ a_2 \succ \ldots \succ a_{\frac{m}{2}} &\; l = 2
\end{cases}
\end{equation}
Let $f$ be a deterministic social choice rule and define $c := f(\prefp)$. Note that $c$ is either $a_l$ for some $1 \leq l \leq m/2$ or $b_j$ for some $1 \leq j \leq m/2$. Due to the symmetry, without loss of generality, assume that $c$ is $a_l$.

 Consider the following metric \( d \): candidates \( a_1,a_2,\ldots,a_{m/2}\) are located at a single location. agent \( i_1 \) is placed at a distance of \(\left(\frac{1}{\alpha^{\frac{m}{2}}} + 1\right)\) to the right of \( a_1 \), and \( i_2 \) is positioned at a distance of \(\left(\frac{1}{\alpha^{\frac{m}{2}}} - 1\right)\) to the right of \( i_1 \). Candidates \( b_1, b_2, \ldots, b_{\frac{m}{2}} \) are arranged from left to right, starting at the right of \(i_2\), such that the distance from \( b_1 \) to \( i_2 \) is 2, and the distance between \( b_j \) and \( b_{j+1} \) is \(\left(\frac{2}{\alpha^j}\right) - 2\) for \( 1 \leq j \leq \frac{m}{2} - 1 \).
See Figure \ref{fig:metric d} for an illustration of this metric.

We need to check that \(\prefp \scns d\). First, it is obvious that \(d(i_2, b_j) = \frac{2}{{\alpha}^{(j - 1)}}\) and \(d(i_2, a_j) = \frac{2}{{\alpha}^{\frac{m}{2}}}\) for all \(1 \leq j \leq \frac{m}{2}\). Therefore, \(i_2\) satisfies the intensity constraints.
For \(d(i_1, b_j)\), if $1 \leq j \leq \frac{m}{2}-1$, we have:

\begin{align*}
    \frac{1}{\alpha} d(i_1, b_j) =&\; \left(\frac{1}{\alpha}\right)\left(\frac{1}{{\alpha}^{\frac{m}{2}}} - 1 + \frac{2}{{\alpha}^{(j-1)}}\right) = \frac{1}{\alpha^{\frac{m}{2} + 1}} - \frac{1}{\alpha} + \frac{2}{{\alpha}^{j}} \\
    =& \left(\frac{1}{\alpha}\right)\left(\frac{1}{\alpha^{\frac{m}{2}}} - 1\right) + \frac{2}{{\alpha}^{j}} \ge \frac{1}{\alpha^{\frac{m}{2}}} - 1 + \frac{2}{{\alpha}^{j}} = d(i_1, b_{j + 1}).
\end{align*}

For $d(i_1, b_{\frac{m}{2}})$, we have:
\begin{align*}
\frac{1}{\alpha} d(i_1, b_{\frac{m}{2}}) =&\;\left(\frac{1}{\alpha}\right)\left(\frac{1}{{\alpha}^{\frac{m}{2}}} - 1 + \frac{2}{{\alpha}^{(\frac{m}{2} - 1)}}\right) =\; \left(\frac{1}{{\alpha}^{(\frac{m}{2} + 1)}} - \frac{1}{\alpha} + \frac{2}{{\alpha}^{\frac{m}{2}}}\right) \geq\; \frac{2}{{\alpha}^{\frac{m}{2}}} \geq\; \frac{1}{{\alpha}^{\frac{m}{2}}} + 1 =\; d(i_1, a_1).
\end{align*}
Also, for $d(i_1, a_1)$, if $1 \leq j\leq \frac{m}{2}$, we have: $
d(i_1, a_1) = d(i_1, a_j).$
Hence, $i_1$ also satisfies the intensity constraints. Now, note that $\scs(a_1) \leq \scs(a_l)$, since: 
\begin{align*}
\scs(a_1) =&\; d(i_1, a_1) + d(i_2, a_1) \\\leq &\; d(i_1, a_l) + d(i_2, a_l)  & \text{(since  $a_1 \succ_{i_1} a_l,\;\; a_1 \succ_{i_2} a_l$)} = \scs(a_l).
\end{align*}
As a consequence, the distortion of \(f\) is at least:
\begin{align*}
\frac{\scs(a_l)}{\scs(b_1)} \geq \frac{\scs(a_l)}{\scs(b_1)} = \frac{d(i_1, a_1) + d(i_2, a_1)}{d(i_1, b_1) + d(i_2, b_1)} = 
\frac{1 + \frac{3}{\alpha^{\frac{m}{2}}}}{3 + \frac{1}{\alpha^{\frac{m}{2}}}} = \frac{1 + 3\alpha^{-\frac{m}{2}}}{3 + \alpha^{-\frac{m}{2}}}\cdot
\end{align*}
Now, assume that $m$ is odd. Consider an election \(\elc = (\ags, \alts, \prefp)\) where \(\ags = \{i_1, i_2\}\) and \(\alts = \{a_1, a_2, \ldots, a_{\frac{m-1}{2}},\allowbreak b_1, b_2, \ldots, b_{\frac{m-1}{2}}, c\}\). The preferences of agents are as follows:
\begin{equation}
\nonumber \pref_{i_l}  =
\begin{cases}
        a_1 \succ a_2 \succ \ldots \succ a_{\frac{m-1}{2}} \succ b_1 \succ b_2 \succ \ldots \succ b_{\frac{m-1}{2}} \;\ggc\; c &\; l = 1\\
        b_1 \succ b_2 \succ \ldots \succ b_{\frac{m-1}{2}} \succ a_1 \succ a_2 \succ \ldots \succ a_{\frac{m-1}{2}} \;\ggc\; c &\; l = 2.
\end{cases}
\end{equation}
We drop $c$ from the election since $c$ is the least preferred alternative for each agent. Now, we can apply the metric $d$ we found for the case where $m$ is even to the remaining alternatives and agents, and obtain the following lower bound on the distortion:
\[
\frac{1 + 3\alpha^{-\frac{m-1}{2}}}{3 + \alpha^{-\frac{m-1}{2}}}\cdot
\]
To complete this argument, we have to determine the location of $c$ in the line metric $d$. We put $c$ at a far distance from all other agents and alternatives. It is now easy to check that this placement satisfies the intensity and preference constraints. 
\end{proof}
\begin{figure}[t]	

\centering
\resizebox{.5\textwidth}{!}{

\begin{tikzpicture}[font=\huge,x=0.75pt,y=0.75pt,yscale=-1,xscale=1]

\draw    (100,130) -- (319.6,130) -- (640,130) ;
\draw    (100,120) -- (100,140) ;
\draw    (200,120) -- (200,130) -- (200,140) ;
\draw    (640,120) -- (640,140) ;
\draw    (380,120) -- (380,140) ;
\draw    (270,120) -- (270,127.6) -- (270,140) ;
\draw   (199.6,114.6) .. controls (199.6,109.93) and (197.27,107.6) .. (192.6,107.6) -- (159.6,107.6) .. controls (152.93,107.6) and (149.6,105.27) .. (149.6,100.6) .. controls (149.6,105.27) and (146.27,107.6) .. (139.6,107.6)(142.6,107.6) -- (106.6,107.6) .. controls (101.93,107.6) and (99.6,109.93) .. (99.6,114.6) ;
\draw   (448.6,110.85) .. controls (448.55,106.18) and (446.2,103.87) .. (441.53,103.92) -- (425.14,104.1) .. controls (418.47,104.17) and (415.11,101.87) .. (415.06,97.2) .. controls (415.11,101.87) and (411.81,104.23) .. (405.14,104.3)(408.14,104.27) -- (388.74,104.48) .. controls (384.07,104.53) and (381.77,106.88) .. (381.82,111.55) ;
\draw   (269.6,113.6) .. controls (269.53,108.93) and (267.17,106.63) .. (262.5,106.7) -- (245.34,106.93) .. controls (238.67,107.02) and (235.31,104.74) .. (235.24,100.07) .. controls (235.31,104.74) and (232.01,107.12) .. (225.34,107.21)(228.34,107.17) -- (208.18,107.45) .. controls (203.51,107.51) and (201.21,109.87) .. (201.27,114.54) ;
\draw   (321.6,112.6) .. controls (321.56,107.93) and (319.21,105.62) .. (314.54,105.66) -- (306.41,105.72) .. controls (299.74,105.77) and (296.39,103.47) .. (296.35,98.8) .. controls (296.39,103.47) and (293.08,105.83) .. (286.41,105.88)(289.41,105.86) -- (278.28,105.95) .. controls (273.61,105.98) and (271.3,108.33) .. (271.33,113) ;
\draw  [dash pattern={on 4.5pt off 4.5pt}]  (200,190) -- (200,159.6) -- (200,152) ;
\draw [shift={(200,150)}, rotate = 90] [color={rgb, 255:red, 0; green, 0; blue, 0 }  ][line width=0.75]    (10.93,-3.29) .. controls (6.95,-1.4) and (3.31,-0.3) .. (0,0) .. controls (3.31,0.3) and (6.95,1.4) .. (10.93,3.29)   ;
\draw  [dash pattern={on 4.5pt off 4.5pt}]  (270,190) -- (270,152) ;
\draw [shift={(270,150)}, rotate = 90] [color={rgb, 255:red, 0; green, 0; blue, 0 }  ][line width=0.75]    (10.93,-3.29) .. controls (6.95,-1.4) and (3.31,-0.3) .. (0,0) .. controls (3.31,0.3) and (6.95,1.4) .. (10.93,3.29)   ;
\draw  [dash pattern={on 4.5pt off 4.5pt}]  (320,190) -- (320,152) ;
\draw [shift={(320,150)}, rotate = 90] [color={rgb, 255:red, 0; green, 0; blue, 0 }  ][line width=0.75]    (10.93,-3.29) .. controls (6.95,-1.4) and (3.31,-0.3) .. (0,0) .. controls (3.31,0.3) and (6.95,1.4) .. (10.93,3.29)   ;
\draw  [dash pattern={on 4.5pt off 4.5pt}]  (640,190) -- (640,152) ;
\draw [shift={(640,150)}, rotate = 90] [color={rgb, 255:red, 0; green, 0; blue, 0 }  ][line width=0.75]    (10.93,-3.29) .. controls (6.95,-1.4) and (3.31,-0.3) .. (0,0) .. controls (3.31,0.3) and (6.95,1.4) .. (10.93,3.29)   ;
\draw  [dash pattern={on 4.5pt off 4.5pt}]  (100,190.4) -- (100,160) -- (100,152.4) ;
\draw [shift={(100,150.4)}, rotate = 90] [color={rgb, 255:red, 0; green, 0; blue, 0 }  ][line width=0.75]    (10.93,-3.29) .. controls (6.95,-1.4) and (3.31,-0.3) .. (0,0) .. controls (3.31,0.3) and (6.95,1.4) .. (10.93,3.29)   ;
\draw    (320,120) -- (320,140) ;
\draw   (380.6,111.85) .. controls (380.59,107.18) and (378.25,104.86) .. (373.58,104.87) -- (363.75,104.89) .. controls (357.08,104.91) and (353.74,102.59) .. (353.73,97.92) .. controls (353.74,102.59) and (350.42,104.93) .. (343.75,104.95)(346.75,104.94) -- (330.31,104.98) .. controls (325.64,104.99) and (323.32,107.33) .. (323.33,112) ;
\draw    (450,120) -- (450,140) ;
\draw    (530,120) -- (530,140) ;
\draw  [dash pattern={on 4.5pt off 4.5pt}]  (380,190) -- (380,152) ;
\draw [shift={(380,150)}, rotate = 90] [color={rgb, 255:red, 0; green, 0; blue, 0 }  ][line width=0.75]    (10.93,-3.29) .. controls (6.95,-1.4) and (3.31,-0.3) .. (0,0) .. controls (3.31,0.3) and (6.95,1.4) .. (10.93,3.29)   ;
\draw  [dash pattern={on 4.5pt off 4.5pt}]  (450,190) -- (450,152) ;
\draw [shift={(450,150)}, rotate = 90] [color={rgb, 255:red, 0; green, 0; blue, 0 }  ][line width=0.75]    (10.93,-3.29) .. controls (6.95,-1.4) and (3.31,-0.3) .. (0,0) .. controls (3.31,0.3) and (6.95,1.4) .. (10.93,3.29)   ;
\draw  [dash pattern={on 4.5pt off 4.5pt}]  (530,190) -- (530,152) ;
\draw [shift={(530,150)}, rotate = 90] [color={rgb, 255:red, 0; green, 0; blue, 0 }  ][line width=0.75]    (10.93,-3.29) .. controls (6.95,-1.4) and (3.31,-0.3) .. (0,0) .. controls (3.31,0.3) and (6.95,1.4) .. (10.93,3.29)   ;
\draw   (639.6,117.85) .. controls (639.56,113.18) and (637.21,110.87) .. (632.54,110.92) -- (595.03,111.26) .. controls (588.37,111.32) and (585.02,109.02) .. (584.97,104.35) .. controls (585.02,109.02) and (581.71,111.38) .. (575.04,111.45)(578.04,111.42) -- (537.54,111.79) .. controls (532.87,111.84) and (530.56,114.19) .. (530.6,118.86) ;

\draw (124,42.4) node [anchor=north west][inner sep=0.75pt]    {$\frac{1}{\alpha ^{\frac{m}{2}}} +1$};
\draw (552,52.4) node [anchor=north west][inner sep=0.75pt]    {$\frac{2}{\alpha ^{\frac{m}{2} -1}} -2$};
\draw (211,42.4) node [anchor=north west][inner sep=0.75pt]    {$\frac{1}{\alpha ^{\frac{m}{2}}} -1$};
\draw (331,52.4) node [anchor=north west][inner sep=0.75pt]    {$\frac{2}{\alpha } -2$};
\draw (191,190.4) node [anchor=north west][inner sep=0.75pt]    {$i_{1}$};
\draw (260,190.4) node [anchor=north west][inner sep=0.75pt]    {$i_{2}$};
\draw (311,190.4) node [anchor=north west][inner sep=0.75pt]    {$b_{1}$};
\draw (91,190.4) node [anchor=north west][inner sep=0.75pt]    {$a_{j's}$};
\draw (51,22.4) node [anchor=north west][inner sep=0.75pt]    {$d:$};
\draw (627,190.4) node [anchor=north west][inner sep=0.75pt]    {$b_{\frac{m}{2}}$};
\draw (291,72.4) node [anchor=north west][inner sep=0.75pt]    {$2$};
\draw (391,52.4) node [anchor=north west][inner sep=0.75pt]    {$\frac{2}{\alpha ^{2}} -2$};
\draw (442,190.4) node [anchor=north west][inner sep=0.75pt]    {$b_{3}$};
\draw (371,190.4) node [anchor=north west][inner sep=0.75pt]    {$b_{2}$};
\draw (511,187.4) node [anchor=north west][inner sep=0.75pt]    {$b_{\frac{m}{2} -1}$};
\draw (473,92.4) node [anchor=north west][inner sep=0.75pt]    {$.....$};
\draw (473,182.4) node [anchor=north west][inner sep=0.75pt]    {$.....$};

\end{tikzpicture}
}
\vspace{-10mm}
\caption{Metric used in the proof of Theorem \ref{general_lower_bound for line}.}
\label{fig:metric d}
\end{figure}

 To give a better sense of how this bound looks, we provided and illustration of this bound for different values of $m$ and $\alpha$ in Figures~\ref{illustrations-line:up1} and \ref{illustrations-line:up2}.




\begin{figure}[tb]
\centering

\begin{minipage}{0.48\textwidth}
  \centering
  \includegraphics[width=0.95\textwidth]{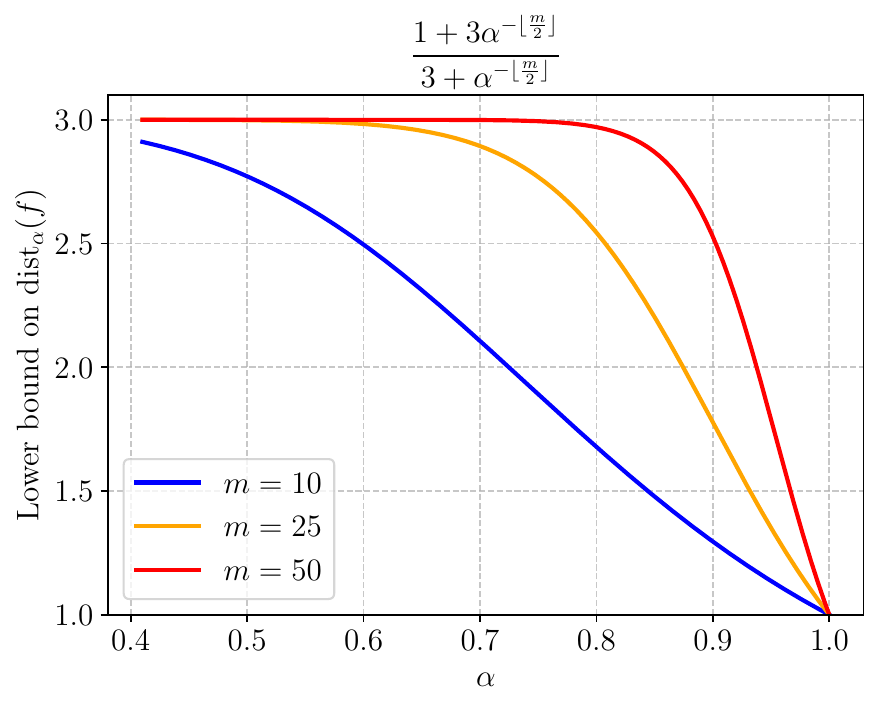}
   \caption{An illustration of the lower bounds established in \Cref{thm:metric_ub} for $10$, $25$, $50$ alternatives.}
   \label{illustrations-line:up1}
\end{minipage}
\hfill
\begin{minipage}{0.48\textwidth}
  \centering
  \includegraphics[width=0.95\textwidth]{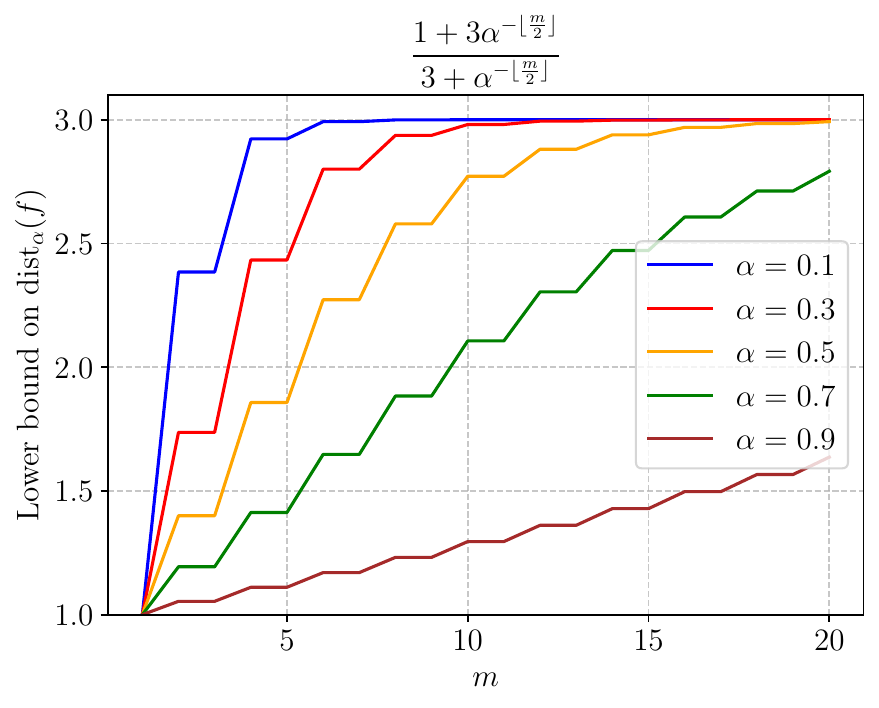}
   \caption{An illustration of the lower bounds established in \Cref{thm:metric_ub} based on the number of alternatives.}
   \label{illustrations-line:up2}
\end{minipage}

\end{figure}

\section{Price of Ignoring Intensities} \label{sec:poi}

In this section, we explore the concept of the \textit{Price of Ignoring Intensities (POII)}, which is a measure for the efficiency loss that rules that do not elicit the intensities incur. This notion was initially introduced by \citet{KLS23}, and is defined as the worst case over all elections, the ratio of the optimal distortion achievable without any information on the intensities to the optimal distortion achievable knowing the intensities. Roughly speaking, the \textit{Price of Ignoring Intensities} measures the increase in distortion when the intensities are not known.

To formally define POII, we have to first distinguish between the optimal alternative with and without information on the intensities. We borrow these definitions from \citet{KLS23}.

\begin{definition}[Intensity-aware optimal]
    For election $\elc = (\ags, \alts, \prefp)$, the intensity-aware optimal alternative, $\optaw(\elc)$, is an alternative that minimizes the worst-case distortion over the metrics that are $\alpha$-consistent with $\prefp$, i.e.,
    \vspace{-1mm}
	 $$\optaw(\elc) := \argmin_{a \in \alts} \dist_\alpha(a, \elc).$$
\end{definition}


Now, we define the POII of an alternative in an election.

\begin{definition}[Price of ignoring the intensities]
	For fixed $\alpha \in [0, 1]$, we define the price of ignoring intensities (POII) of an alternative $a \in \alts$ in election $\elc = (\ags, \alts, \prefp)$ as the ratio between the $\alpha$-distortion of $a$ and that of the intensity-aware optimal alternative, i.e.,
    \vspace{-2mm}
	$$
	\poi(a,\elc,\alpha) := \frac{\dist_{\alpha}(a,\elc)}{\dist_{\alpha}(\optaw(\elc),\elc)}\cdot
	$$
	In addition, given only the ranked preference profile (without intensities) $\vec{\pi}$, POII of alternative $a$ on $\vec{\pi}$ is defined as
	$
	\poi(a,\vec{\pi},\alpha) = \max_{\elc \rhd \vec \pi} \poi(a,\elc,\alpha)
	$, 
	  where for $\elc = (\ags, \alts, \prefp)$ we use $\elc \rhd \vec{\pi}$ to denote that $\prefp = (\vec{\pi},\vec{\intns})$ for some $\vec{\intns}$. This allows us to define the \emph{intensity-oblivious optimal} alternative in $\elc$ as $$\optob(\elc) := \argmin_{a \in \alts} \poi(a,\vec{\pi},\alpha),$$ and the POII on $\elc$ as $\poi(\elc,\alpha) := \poi(\optob(\elc),\elc,\alpha)$. Since $\optob(\elc)$ does not depend on $\vec \intns$, when other components are clear from the context, with a slight abuse of notion we use $\optob(\vec{\pi})$ instead. We are interested in the worst case of this over all $\elc$, termed the \emph{POII for $\alpha$-decisive preferences}: $\poi(\alpha) = \max_{\elc} \poi(\elc,\alpha)$. 
\end{definition}


Our goal is to prove a lower bound on the POII. To do so, we use the following two lemmas. 

\begin{lemma} \label{lb POI}
(\cite{KLS23}. Lemma 1) For $\alpha \in [0, 1]$, any election $\elc$, and alternative $a \in \alts$ we have:
$$\poi(\elc, \alpha) \geq  \frac{\dist_\alpha(\optob(\elc), \elc)}{\dist_\alpha(a, \elc)}\cdot$$
\end{lemma}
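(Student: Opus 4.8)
The plan is to unwind the layered definition of $\poi(\elc,\alpha)$ and then invoke the optimality of $\optaw$. First I would recall that, by definition, $\poi(\elc,\alpha)=\poi(\optob(\elc),\elc,\alpha)$, and that $\poi(b,\elc,\alpha)=\dist_\alpha(b,\elc)/\dist_\alpha(\optaw(\elc),\elc)$ for any $b\in\alts$; substituting $b=\optob(\elc)$ gives
\[
\poi(\elc,\alpha)=\frac{\dist_\alpha(\optob(\elc),\elc)}{\dist_\alpha(\optaw(\elc),\elc)}.
\]
So it suffices to show $\dist_\alpha(\optob(\elc),\elc)/\dist_\alpha(\optaw(\elc),\elc)\ge \dist_\alpha(\optob(\elc),\elc)/\dist_\alpha(a,\elc)$ for the given $a\in\alts$.

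The single ingredient needed for this is that $\optaw(\elc)=\argmin_{b\in\alts}\dist_\alpha(b,\elc)$, so $\dist_\alpha(\optaw(\elc),\elc)\le\dist_\alpha(a,\elc)$. Since every $\alpha$-distortion is at least $1$, the numerator $\dist_\alpha(\optob(\elc),\elc)$ is positive, and shrinking a positive denominator only increases the fraction; this yields the claimed inequality and completes the proof.

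I do not expect any genuine obstacle here: the argument is essentially a one-line consequence of the definitions. The only point that requires care is notational bookkeeping, namely distinguishing $\poi(a,\elc,\alpha)$, $\poi(a,\vec\pi,\alpha)$, and $\poi(\elc,\alpha)$, and noting that the alternative implicitly evaluated in $\poi(\elc,\alpha)$ is exactly $\optob(\elc)$, i.e.\ the same alternative appearing in the numerator on the right-hand side of the claimed bound.
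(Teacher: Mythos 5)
Your proof is correct. The paper itself gives no proof of this lemma (it is imported verbatim from \citet{KLS23}), and your argument is exactly the intended one: by the paper's definitions, $\poi(\elc,\alpha)=\poi(\optob(\elc),\elc,\alpha)=\dist_\alpha(\optob(\elc),\elc)/\dist_\alpha(\optaw(\elc),\elc)$, and since $\optaw(\elc)=\argmin_{b\in\alts}\dist_\alpha(b,\elc)$ gives $\dist_\alpha(\optaw(\elc),\elc)\le\dist_\alpha(a,\elc)$ while the numerator is at least $1$, the claimed bound follows immediately.
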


\begin{restatable}{lemma}{PolarExactUp}
\label{polar-exact-up}
Consider an election $\elc$ with $m$ alternatives, where $m$ is an even number, and two agents with the following preferences:
   $ \pref_1 = a_1  \,\ggc\, \cdots \,\ggc\, a_{\frac{m}{2}+1} \succ a_{\frac{m}{2} + 2} \succ \cdots \succ a_{m},$ and $
    \pref_2 = a_{\frac{m}{2}+1} \succ a_{\frac{m}{2}+2} \succ \cdots \succ a_{m} \succ a_{1} \succ \cdots \succ a_{\frac{m}{2}}\cdot
$
\noindent
We have
$\dist_\alpha(a_{1}, \elc) = \frac{1+2\alpha^{\frac{m}{2}}-\alpha^{m}}{\alpha^{m}+1}\cdot$ In addition,  when $m$ is an odd number, 
there exists a preference profile $\elc'$ such that $\dist_\alpha(a_{1}, \elc') = \frac{1+2\alpha^{ \frac{m-1}{2} }-\alpha^{ m -1 }}{\alpha^{m-1}+1}\cdot$\end{restatable}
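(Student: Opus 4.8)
Since $\dist_\alpha(a_1,\elc)=\sup_{d:\,\prefp\scns d}\scs_d(a_1)/\min_{b\in\alts}\scs_d(b)$, I will show this supremum equals $C:=\frac{1+2\alpha^{m/2}-\alpha^m}{1+\alpha^m}$, proving ``$\le C$'' and ``$\ge C$'' separately. For ``$\le C$'' it is enough to exhibit, for every consistent $d$, one alternative $b$ with $\scs_d(a_1)\le C\,\scs_d(b)$ (then $\scs_d(a_1)\le C\min_b\scs_d(b)$). For ``$\ge C$'' it is enough to exhibit, for every $\eta>0$, a consistent $d$ in which $a_{m/2+1}$ is cost-minimal and $\scs_d(a_1)\ge(C-\eta)\,\scs_d(a_{m/2+1})$.

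\textbf{Upper bound.} First, $C\ge1$ since $\alpha^{m/2}\ge\alpha^m$. In agent $1$'s ranking $a_1$ precedes every $a_j$, so $d(1,a_1)\le d(1,a_j)$; in agent $2$'s ranking $a_1$ sits at position $m/2+1$ and precedes $a_2,\dots,a_{m/2}$ (positions $m/2+2,\dots,m$), hence $\scs_d(a_j)\ge\scs_d(a_1)$ for $2\le j\le m/2$. Likewise $a_{m/2+1}$ precedes $a_{m/2+2},\dots,a_m$ in \emph{both} rankings, so $\scs_d(a_j)\ge\scs_d(a_{m/2+1})$ for $m/2+2\le j\le m$. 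Thus the only nontrivial case is $b=a_{m/2+1}$. Put $t=\alpha^{m/2}$, $d_1=d(1,a_1)$, $D_1=d(1,a_{m/2+1})$, $d_2=d(2,a_1)$, $D_2=d(2,a_{m/2+1})$. Every consistent metric satisfies: (i) $d_1\le tD_1$, by chaining agent $1$'s $m/2$ strong relations $a_1\ggc\cdots\ggc a_{m/2+1}$; (ii) $D_2\ge td_2$, by chaining (in the mandatory interpretation) agent $2$'s $m/2$ weak relations $a_{m/2+1}\succ\cdots\succ a_1$; (iii) $d_2\le D_2+D_1+d_1$, by two uses of the triangle inequality. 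The bound then follows from the identity
\[
C\,\scs_d(a_{m/2+1})-\scs_d(a_1)=\tfrac{2}{1+t^2}\bigl(tD_1-d_1\bigr)+\tfrac{2t}{1+t^2}\bigl(D_2-td_2\bigr)+\tfrac{1-t^2}{1+t^2}\bigl(D_2+D_1+d_1-d_2\bigr)\ge0,
\]
whose coefficients are nonnegative because $t\le1$. (One discovers these coefficients by solving the $4$-variable LP ``maximise $d_1+d_2$ subject to (i)--(iii) and $D_1+D_2=1$''; they are its dual optimum.)

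\textbf{Lower bound.} Realise $d$ on a line: agent $1$ at $0$; $a_j$ at $-\alpha^{m/2-j+1}$ for $1\le j\le m/2$ (so $d(1,a_1)=t$ and agent $1$'s strong chain is saturated); $a_{m/2+1}$ at $+1$ (so $D_1=1$); agent $2$ at $1+D_2$ with $D_2=tu+\eta$, $u=\tfrac{1+t}{1-t}$, and $\eta>0$ a small slack; and $a_{m/2+2},\dots,a_m$ placed just right of $a_{m/2+1}$, their distances to agent $1$ barely exceeding $1$ and their distances to agent $2$ increasing from just above $D_2$ toward $u$ --- the slack $\eta$ is exactly what keeps agent $2$'s weak chain $\alpha$-gap-free. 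Then $\prefp\scns d$, and using $u(1-t)=1+t$ one gets $d_2=d(2,a_1)=1+D_2+t=u+\eta$, so $\scs_d(a_1)=2t+1+D_2=t+u+\eta$ and $\scs_d(a_{m/2+1})=1+D_2=1+tu+\eta$; the counting above makes $a_{m/2+1}$ cost-minimal, hence $\dist_\alpha(a_1,\elc)\ge\frac{t+u+\eta}{1+tu+\eta}\to\frac{t+u}{1+tu}=\frac{1+2t-t^2}{1+t^2}=C$ as $\eta\to0^+$.

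\textbf{Odd $m$ and the main obstacle.} For odd $m$, let $\elc'$ be the even instance on the $m-1$ alternatives $a_1,\dots,a_{m-1}$ together with one alternative $c$ ranked last by both agents (the relation before $c$ reported as $\succ$). Since $c$ follows every other alternative in both rankings, $\scs_d(c)\ge\scs_d(a_1)$ always, so $c$ affects neither the upper-bound argument nor the cost-minimiser, while placing $c$ at each agent's current last-alternative distance keeps $\prefp'\scns d$; thus $\dist_\alpha(a_1,\elc')=\frac{1+2\alpha^{(m-1)/2}-\alpha^{m-1}}{\alpha^{m-1}+1}$ by the even case with $m-1$ in place of $m$. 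I expect the real work to be the single tight alternative $a_{m/2+1}$: identifying the precise nonnegative combination of agent $1$'s strong chain, agent $2$'s weak chain, and the triangle inequality (equivalently, checking that the LP's dual stays feasible, i.e.\ that the coefficient $1-t^2$ is nonnegative, rather than settling for the triangle-only bound of $3$). A secondary, purely bookkeeping issue is placing the $m-2$ ``inner'' alternatives consistently in the lower-bound construction, which forces the strict slack $\eta>0$ and hence makes the value a genuine supremum.
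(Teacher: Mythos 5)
Your upper bound is correct: after the (valid) reduction to the single comparison with $a_{m/2+1}$, the three aggregated constraints $d_1\le tD_1$, $D_2\ge t d_2$, $d_2\le D_2+D_1+d_1$ and your explicit nonnegative combination with coefficients $\tfrac{2}{1+t^2},\tfrac{2t}{1+t^2},\tfrac{1-t^2}{1+t^2}$ do give $\scs_d(a_1)\le C\,\scs_d(a_{m/2+1})$ (I checked the identity; the coefficients of $D_1,D_2,d_1,d_2$ come out to $C,C,-1,-1$). This is in substance the same dual certificate the paper builds for its LP \QQ\ (their geometric $\beta$-variables along the two chains plus one triangle variable), just compressed to four variables, and it even avoids their complementary-slackness detour. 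The odd-$m$ reduction (append a universally last alternative) also matches the paper.

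The genuine gap is in the lower bound: your construction is placed on a line, and on a line it is inconsistent and, worse, unfixable for general $m,\alpha$. As written, points ``just right of $a_{m/2+1}$'' (at coordinate $1+\epsilon$, with agent $2$ at $1+D_2$) have distance $D_2-\epsilon<D_2$ to agent $2$, i.e.\ they are \emph{closer} to agent $2$ than her top choice $a_{m/2+1}$, contradicting $\prefp\scns d$; so the stated placement of $a_{m/2+2},\dots,a_m$ is not consistent. Moreover this is not mere bookkeeping: to approach $C$ all three of your constraints must be nearly tight, which forces the collinear order (agent $2$, $a_{m/2+1}$, agent $1$, $a_1$) with $D_1=1$, $d_1\approx t$, $D_2\approx tu$, $d_2\approx u$, and agent $2$'s mandatory chain forces $d(2,a_m)>\alpha\,d(2,a_1)\approx\alpha u$ while agent $1$'s mandatory $\qsucc$ below position $m/2+1$ forces $1\le d(1,a_m)$ and $d(1,a_{m/2+2})<1/\alpha$. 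On the line a point at distance $\approx\alpha u$ from agent $2$ sits at one of two coordinates, $1+D_2\pm\alpha u$, giving $d(1,a_m)\approx 1+(t+\alpha)u$ or $\approx|1-(\alpha-t)u|$; for $m\ge 4$ and $\alpha$ near $1$ (e.g.\ $m=4$, $\alpha=0.9$: the two options are about $17.3$ and $0.14$, against the window $[1,1.11)$) both violate agent $1$'s constraints, so no line metric approaches $C$ there. The distance profile you describe \emph{is} achievable in a general metric --- keep $d(1,a_j)=d(1,a_{m/2+1})$ for all inner $j$ and let $d(2,a_j)$ grow geometrically with ratio $1/\alpha$ up to $d(2,a_1)$, which is exactly the paper's explicit primal solution --- but then one must verify the triangle inequality directly (the case analysis the paper carries out), rather than getting it for free from a line embedding. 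So the fix is to abandon the line realization and do that feasibility check; your $\eta$-slack to respect the strict mandatory $\qsucc$ inequalities is fine and can be kept.
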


\begin{proof}

Let us first give an overview of the proof. To maximize distortion, we have to choose $a_{\frac{m}{2}+1}$ as the best alternative. We then formulate a linear program to maximize the distortion, as introduced in \cite{kempe2020analysis}, with the following additional set of constraints corresponding to the intensities:

\begin{align*}
     d_{i, \pi_{i}(j)} &\leq \alpha d_{i, \pi_{i}(j + 1)} \quad & \forall i, j : \intns_i(j) = \qggc
    \\ 
     \alpha d_{i, \pi_{i}(j + 1)} &\leq d_{i, \pi_{i}(j)} \quad & \forall i, j : \intns_i(j) = \qsucc
\end{align*}

 Next, we write the dual of the LP and show that, for this instance, maximizing the objective in the primal LP requires that all variables be strictly greater than zero. This allows us to apply the conditions of Complementary Slackness to prove that all inequalities in the dual LP must be tight. Consequently, we can find a dual feasible solution equal to the distortion and thus prove the theorem.

The following analysis assumes that $m$ is even. We will address the case where $m$ is odd at the end. We want to determine a metric \( d \) in which \( a_{\frac{m}{2}+1} \) is the optimal alternative, and the ratio 
\[\frac{\scsd{d}(a_1)}{\scsd{d}(a_{\frac{m}{2} + 1})}\]
is maximized. It is important to note that the alternative causing the highest distortion relative to \( a_1 \) is \( a_{\frac{m}{2}+1} \), since for any other alternative $b$, either $d(1, b) \geq d(1, a_1)$ or $d(2, b)\geq d(2, a_{\frac{m}{2}+1})$. The LP formulation is as follows:
\begin{tcolorbox}[center,boxsep=1pt,top=2pt,bottom=5pt, width=1\textwidth]
\centerline{\underline{Linear Program \PP}}
\centering
\begin{align}
    \nonumber
    \text{\small Maximize} \quad  & d_{1, a_{1} } + d_{2, a_{1} }
    \\ 
    \nonumber
    \text{\small subject to} \quad & d_{i,a_j} \leq d_{i,a_{j'}} + d_{i',a_{j'}} + d_{i', a_j} \quad & \text{$1 \leq i \neq i' \leq 2$,\;  $1 \leq j \neq j' \leq m$ (\small triangle inequality)}
    \\ 
    \nonumber
    & d_{i, \pi_{i}(j)} \leq d_{i, \pi_{i}(j+1)} \quad & \text{$i \in \{1, 2\}$,\; $1 \leq j \leq m - 1$ (preference constraint)}
    \\ 
    \nonumber
    & d_{i, \pi_{i}(j)} \leq \alpha d_{i, \pi_{i}(j + 1)}  & \forall i, j : \intns_i(j) = \qggc \text{\small(intensity constraint)}
    \\ 
    \nonumber
    & \alpha d_{i, \pi_{i}(j + 1)} \leq d_{i, \pi_{i}(j)} & \forall i, j : \intns_i(j) = \qsucc \text{\small(intensity constraint)}
    \\ 
    \nonumber 
    &d_{1, a_{\frac{m}{2}+1}} + d_{2, a_{\frac{m}{2}+1}} = 1 \quad & \text{\small(normalization)}\\
    \nonumber 
    & d_{1, a_j} + d_{2, a_j} \geq 1 \quad & \text{$1 \leq j \leq m ,\, j \neq \frac{m}{2}+1$ (\small optimality of $a_{\frac{m}{2}+1}$)}
    \\
    & d_{i, a_j} \geq 0 \quad &\text{$i \in \{1, 2\}$,\; $1 \leq j \leq m - 1$. \nonumber}
\end{align}
\end{tcolorbox}
In LP \PP, variable $d_{i, a_j}$ denote the distances between agent $i$ and alternative $a_j$. These variables must be non-negative, satisfy the triangle inequality, and be compatible with the given election in the mandatory setting. Assuming the costs are normalized, we have $\scsd{d}(a_{\frac{m}{2}+1})=1$.

Now, let us take the dual of LP \PP. The dual includes four types of variables: $\psi^{i, i'}_{a_{j}, a_{j'}}$ are the dual variables referring to the triangle inequality constraints, $\phi^{i}_{\pi_{i}(j), \pi_{i}(j + 1)}$ are the dual variables for the preference constraints, $\beta^{i}_{\pi_{i}(j), \pi_{i}(j + 1)}$ are the dual variables for the intensity constraints, and $t_{a_j}$ are the dual variables for the normalization/optimality constraints. 

\begin{tcolorbox}[center,boxsep=2pt,top=2pt,bottom=5pt, width=0.8\textwidth]
\centerline{\underline{Linear Program \QQ}}
\centering
\begin{align}
    \nonumber 
    \text{Minimize} \quad & \sum_{j = 1}^{m} t_{a_j}
    \\ 
    \label{poi-dual-constraint}
    \text{subject to} \quad & \theta(i, j) \geq 
     \begin{cases}
         1 & \pi_i(j) = a_{1}\\
         0 & \pi_i(j) \neq a_{1}
     \end{cases} 
     & \text{ $i \in \{1, 2\}$,\; $1 \leq j \leq m$}
     \\ 
     \nonumber
     &\psi^{i, i'}_{a_{j}, a_{j'}} \geq 0 &  \text{$1 \leq i \neq i' \leq 2$,\; $1 \leq j \neq j' \leq m$}\\
     \nonumber
     & \phi^{i}_{\pi_{i}(j), \pi_{i}(j + 1)} \geq 0  & \text{$i \in \{1, 2\}$,\; $1 \leq j \leq m - 1$}\\ 
     \nonumber
     & \beta^{i}_{\pi_{i}(j), \pi_{i}(j + 1)} \geq 0  & \text{$i \in \{1, 2\}$,\; $1 \leq j \leq m - 1$}\\
     \nonumber
     & t_{a_j} \leq 0  & \text{$a_j \neq a_{\frac{m}{2}+1}$},
\end{align}
\end{tcolorbox}
where
\begin{align}
    \nonumber
    \theta(i, j) = \; &
     t_{\pi_{i}(j)} +  (\mathbf{1}_{\{j \neq m\}}) \phi^{i}_{\pi_{i}(j), \pi_{i}(j + 1)} - (\mathbf{1}_{\{j \neq 1\}}) \phi^{i}_{\pi_{i}(j - 1), \pi_{i}(j)} 
     \\   
    \nonumber
     &  + \sum_{\substack{1 \leq i' \leq 2, \; 1 \leq j' \leq  m \\ i'\neq i , \; \pi_{i'}(j') \neq \pi_{i}(j)}} \left(\psi^{i, i'}_{\pi_{i}(j), \pi_{i'}(j')} - \psi^{i, i'}_{\pi_{i'}(j'), \pi_{i}(j)}  - \psi^{i', i}_{\pi_{i}(j), \pi_{i'}(j')} - \psi^{i', i}_{\pi_{i'}(j'), \pi_{i}(j)} \right)
     \\ \nonumber 
     & +\begin{cases}
       (\mathbf{1}_{\{j \neq m\}}) \frac{1}{\alpha} \beta^{i}_{\pi_{i}(j), \pi_{i}(j + 1)} - (\mathbf{1}_{\{j \neq 1\}}) \beta^{i}_{\pi_{i}(j - 1), \pi_{i}(j)}  & i = 1, j \leq \frac{m}{2} \\
         - (\mathbf{1}_{\{j \neq 1\}}) \beta^{i}_{\pi_{i}(j - 1), \pi_{i}(j)} - (\mathbf{1}_{\{j \neq m\}}) \frac{1}{\alpha} \beta^{i}_{\pi_{i}(j), \pi_{i}(j + 1)}  & i = 1, j = \frac{m}{2} + 1\\
        (\mathbf{1}_{\{j \neq 1\}}) \beta^{i}_{\pi_{i}(j - 1), \pi_{i}(j)} - (\mathbf{1}_{\{j \neq m\}}) \frac{1}{\alpha} \beta^{i}_{\pi_{i}(j), \pi_{i}(j + 1)}  & \text{otherwise.}
    \end{cases}
\end{align}
We now show that to maximize the distortion, we must have $d(i, a_j) >0 \,\, \forall i, j$. First, note that if $d(2, b) = 0$ for some $b \in \alts$, then $d(2, b) = 0$ for all $b \in \alts$. To satisfy the triangle inequality in this case, $d(1, b)$ must be the same for all $b$, which results in a distortion of $1$. Hence, $d(2, b) > 0$ for all $b \in \alts$.
Now, suppose that $i$ is the smallest index for which $d(1, a_i) > 0$. Note that $1 \leq i \leq \frac{m}{2}+1$, because if not, we can deduce that the distortion is $1$ using the similar argument. Since none of the triangle inequalities with $d(1, a_j)$ in the left-hand side for any $1\leq j \leq i-1$ are tight (as we already know $d(2, b) > 0$), we can choose a small number $\epsilon$, and for all $1 \leq j \leq i-1$, increase $d(1, a_j)$ by $\epsilon (\frac{1}{\alpha})^{j-1}$ while still satisfying the triangle inequalities. One can easily check that this increase of values does not violate the intensity constraints. If $i > 1$, $d(1, a_1)$ increases, leading to an increase in the distortion of $a_1$. Hence, $i = 1$, and we can deduce that $d(i, a_j) >0 \,\, \forall i, j$. This fact, along with Complementary Slackness conditions, show that Constraint \eqref{poi-dual-constraint} must hold as an equality. We use this to introduce the feasible solution of LP \QQ.
\\ The non-zero variables of the feasible solution $(\psi, \phi, \beta, t)$ for LP \QQ are as follows: 
\begin{align}
    \nonumber
    &\beta^{2}_{\pi_{2}(\frac{m}{2}-i), \pi_{2}(\frac{m}{2}+1-i)} = y \left(\frac{1}{\alpha} \right)^i
    & 0\leq i \leq \frac{m}{2}-1 
    \\ 
    \nonumber
    &\beta^{1}_{\pi_{1}(\frac{m}{2}-i), \pi_{1}(\frac{m}{2}+1-i)} = y \left( \frac{1}{\alpha}\right)^{\frac{m}{2}+i}
    & 0 \leq i \leq \frac{m}{2}-1
    \\ 
    \nonumber
    &\psi^{i, i'}_{a_{j}, a_{j'}} = 1-y
    & i = 2, \; i' = 1, \;, a_j = a_1, \; a_{j'} =a_{\frac{m}{2}+1} 
    \\ 
    \nonumber
    &t_{a_j} = y  \left( \left(\frac{1}{\alpha}\right)^{\frac{m}{2}}-1\right) +1 
    & a_j = a_{\frac{m}{2}+1},
\end{align}
where $y = \frac{2}{ \left( \frac{1}{\alpha} \right)^m+1}$.
We can check that these variables satisfy constraint \eqref{poi-dual-constraint}.
Therefore, $(\psi, \phi, \beta, t)$ is a feasible solution for LP \QQ, as desired.\\
Now, we introduce a feasible solution for LP \PP. The variables of the feasible solution $d$ for LP \PP are as follows:
\begin{equation}
    \nonumber d_{i, \pi_i(j)} =
    \begin{cases}
         \frac{ \left(\frac{1}{\alpha}\right)^{\frac{m}{2}+j-1}+ \left(\frac{1}{\alpha}\right)^{j-1} }{\left(\frac{1}{\alpha}\right)^m+1 }, \; 
         &i=2,\; 1\leq j \leq \frac{m}{2}+1
         \\ 
         \frac{\left(\frac{1}{\alpha}\right)^m +\left(\frac{1}{\alpha}\right)^{\frac{m}{2}} }{\left(\frac{1}{\alpha}\right)^m+1 }, \; 
         &i=2,\; \frac{m}{2}+2\leq j \leq m
         \\ 
         \frac{ \left(\frac{1}{\alpha}\right)^{\frac{m}{2}+j-1}- \left(\frac{1}{\alpha}\right)^{j-1} }{\left(\frac{1}{\alpha}\right)^m+1 }, \; 
         &i=1,\; 1\leq j \leq \frac{m}{2}+1
         \\ 
         \frac{ \left(\frac{1}{\alpha}\right)^m- \left(\frac{1}{\alpha}\right)^{\frac{m}{2}} }{\left(\frac{1}{\alpha}\right)^m+1 }, \; 
         &i=1,\; \frac{m}{2}+2\leq j \leq m.
    \end{cases}
\end{equation}
Also, since all the variables are non-negative, we only need to check that they satisfy the triangle inequality. Specifically, for every $1 \leq i \neq i' \leq 2$ and $1 \leq j \neq j' \leq m$, we must have:
\begin{equation}
\label{poi-tri-ine-lp}
d_{i,a_j} \leq d_{i,a_{j'}} + d_{i',a_{j'}} + d_{i', a_j}.    
\end{equation}
We consider Inequality \eqref{poi-tri-ine-lp} in three cases. First, assume that $1 \leq j \neq j' \leq \frac{m}{2}$. In this case, at least one of the $d_{2, a_{j}}$ or $d_{2, a_{j'}}$ appears on the right-hand side if the inequality. Since both of them are the largest possible value that $d$ takes, inequality holds trivially. Second, assume that $\frac{m}{2}+1 \leq j \neq j' \leq m$. If $d_{1, a_{j}}$ appears on the left-hand side of the inequality, inequality holds trivially, since $d(1, a_{j})$ and $d(1, a_{j'})$ have the same value and the $d(1, a_{j'})$ appears on the other side of the inequality. If both of them appear on the right-hand side of the inequality, the worst case occurs when $a_j = \pi_{2}(\frac{m}{2})$ and $a_{j'} = \pi_{2}(1)$. We have
\begin{align}
    \nonumber
    d(2, \pi_{2}(\frac{m}{2})) &=  \frac{\left(\frac{1}{\alpha}\right)^m +\left(\frac{1}{\alpha}\right)^{\frac{m}{2}} }{\left(\frac{1}{\alpha}\right)^m+1 }\\
    \nonumber
    &\leq \frac{2\left(\frac{1}{\alpha}\right)^m - \left(\frac{1}{\alpha}\right)^\frac{m}{2} + 1}{\left(\frac{1}{\alpha}\right)^m+1 }    & \left(\text{since }\left(\left(\frac{1}{\alpha}\right)^\frac{m}{2} - 1\right)^2 \geq 0\right) \\
    \nonumber
    &= d(2, \pi_{2}(1)) + d(1, \pi_{2}(1)) + d(1, \pi_{2}(\frac{m}{2})).
\end{align}
Third, assume that either $1 \leq j \leq \frac{m}{2}$ and $\frac{m}{2}+1 \leq j' \leq m$, or $1 \leq j' \leq \frac{m}{2}$ and $\frac{m}{2}+1 \leq j \leq m$. In this case, the worst case occurs when $a_j = \pi_{1}(1)$ and $a_{j'} = \pi_{2}(1)$. Now, we have
\[
d(2, \pi_{1}(1)) = \frac{\left(\frac{1}{\alpha}\right)^m +\left(\frac{1}{\alpha}\right)^{\frac{m}{2}} }{\left(\frac{1}{\alpha}\right)^m+1 } = d(1, \pi_{2}(1)) + d(2, \pi_{2}(1)) + d(1, \pi_{1}(1)).
\]
Therefore, $d$ satisfies the triangle inequality and is a feasible solution for LP \PP.
\\Now, note that the objective value of LP \QQ, given by $(\psi, \phi, \beta, t)$, is: 
\[
\sum_{j = 1}^{m} t_{a_j} = t_{a_{\frac{m}{2}+1}} =\frac{2}{ \left( \frac{1}{\alpha} \right)^m+1}  \left( \left(\frac{1}{\alpha}\right)^{\frac{m}{2}}-1\right) +1 = 1 + \frac{2\alpha^{\frac{m}{2}}-2\alpha^{m}}{\alpha^{m}+1}.
\]
Similarly, the objective value of LP \PP, given by $d$, is:
\[
d_{2, a_{1}} + d_{1, a_{1}} = \frac{\left( \frac{1}{\alpha} \right)^m + \left( \frac{1}{\alpha} \right)^{\frac{m}{2}}}{\left( \frac{1}{\alpha} \right)^m+1} + \frac{\left( \frac{1}{\alpha} \right)^{\frac{m}{2}} -1}{\left( \frac{1}{\alpha} \right)^m+1}=1 + \frac{2\alpha^{\frac{m}{2}}-2\alpha^{m}}{\alpha^{m}+1} .
\]
Therefore, $1 + \frac{2\alpha^{\frac{m}{2}}-2\alpha^{m}}{\alpha^{m}+1} = \frac{1+2\alpha^{\frac{m}{2}}-\alpha^{m}}{\alpha^{m}+1} = \dist_\alpha(a_{1}, \elc)$ is the optimal value of both LP \QQ and LP \PP.
\\
Now, assume that $m$ is odd. Consider the election \(\elc' = (\ags, \alts, \prefp)\) where \(\ags = \{1, 2\}\) and \(\alts = \{a_1, a_2, \ldots, a_{m}\}\). The preferences of agents are as follows:
\begin{equation}
\nonumber \pref_{l}  =
\begin{cases}
        a_1 \;\ggc\; a_2 \;\ggc\; \ldots \;\ggc\; a_{\frac{m-1}{2}} \;\ggc\; a_{\frac{m-1}{2}+1} \succ a_{\frac{m-1}{2}+2} \succ \ldots \succ a_{m} &\; l = 1\\
        a_{\frac{m-1}{2}+1} \succ a_{\frac{m-1}{2}+2} \succ \ldots \succ a_{m-1} \succ a_1 \succ a_2 \succ \ldots \succ a_{\frac{m-1}{2}} \;\succ \; a_{m} &\; l = 2.
\end{cases}
\end{equation}
We drop $a_m$ from the election since $a_m$ is the least preferred alternative for each agent. Now, we can apply the metric $d$ we found for the case where $m$ is even to the remaining alternatives and agents, and obtain the distortion of $a_1$:
\[
\dist_\alpha(a_{1}, \elc') = \frac{1+2\alpha^{\lfloor \frac{m}{2} \rfloor}-\alpha^{\lfloor m \rfloor_{\text{even}}}}{\alpha^{\lfloor m \rfloor_{\text{even}}}+1}.
\]
To complete this argument, we have to determine  $d(1, a_m)$ and $d(2, a_m)$. If we set $d(1, a_m) = d(1, \pi_1(m-1))$ and $d(2, a_m) = d(2, \pi_2(m-1))$, it is easy to check that intensity and preference constraints are still satisfied.

\end{proof}

\begin{restatable}{theorem}{poiMandLb}
\label{thm:poi_mand_lb}  
For every $\alpha \in [0,1]$, we have
$$
\poi(\alpha) \geq
\frac{3 \left(\alpha^{\lfloor m \rfloor_{\text{even}}}+1\right)}{1+2\alpha^{\lfloor \frac{m}{2} \rfloor}-\alpha^{\lfloor m \rfloor_{\text{even}}}},
$$
 where $\lfloor m \rfloor_{\text{even}}$ is the largest even integer smaller than or equal to $m$.
\end{restatable}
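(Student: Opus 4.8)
The plan is to invoke \Cref{lb POI} on the election of \Cref{polar-exact-up}, but to extract the bound at the level of its \emph{ranking} profile. Write $\vec\pi$ for that ranking profile (even $m$): agent $1$ ranks $a_1\succ a_2\succ\cdots\succ a_m$ and agent $2$ ranks $a_{m/2+1}\succ\cdots\succ a_m\succ a_1\succ\cdots\succ a_{m/2}$. The first step is the observation that $\optob$ depends only on the ranking profile, so restricting the outer maximum in $\poi(\alpha)=\max_{\elc}\poi(\elc,\alpha)$ to completions of $\vec\pi$ gives
\[
\poi(\alpha)\;\ge\;\max_{\elc\rhd\vec\pi}\poi(\optob(\vec\pi),\elc,\alpha)\;=\;\poi(\optob(\vec\pi),\vec\pi,\alpha)\;=\;\min_{a\in\alts}\poi(a,\vec\pi,\alpha),
\]
so it is enough to lower bound $\poi(a,\vec\pi,\alpha)$ uniformly over all alternatives $a$.

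The second step reduces this to a single alternative by a domination argument. In $\vec\pi$ the two rankings are a cyclic shift of one another, so every alternative other than $a_1$ and $a_{m/2+1}$ is Pareto-dominated: $a_j$ by $a_1$ for $2\le j\le m/2$, and $a_j$ by $a_{m/2+1}$ for $m/2+2\le j\le m$. If $a^{\star}$ Pareto-dominates $a_j$ then $d(i,a^{\star})\le d(i,a_j)$ for every agent and every metric, hence $\scs_d(a^{\star})\le\scs_d(a_j)$; dividing by $\scs_d(\optaw(\elc'))$ and taking the worst metric consistent with any completion $\elc'\rhd\vec\pi$ yields $\poi(a_j,\elc',\alpha)\ge\poi(a^{\star},\elc',\alpha)$, and therefore $\poi(a_j,\vec\pi,\alpha)\ge\poi(a^{\star},\vec\pi,\alpha)$. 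Since $\vec\pi$ is invariant under simultaneously swapping the two agents and relabelling $a_j\mapsto a_{j+m/2\bmod m}$ (which swaps $a_1$ and $a_{m/2+1}$), we also get $\poi(a_1,\vec\pi,\alpha)=\poi(a_{m/2+1},\vec\pi,\alpha)$, and hence $\min_{a}\poi(a,\vec\pi,\alpha)=\poi(a_1,\vec\pi,\alpha)$.

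For the last step I would exhibit a single completion $\tilde{\elc}\rhd\vec\pi$ and bound $\poi(a_1,\tilde{\elc},\alpha)=\dist_\alpha(a_1,\tilde{\elc})/\dist_\alpha(\optaw(\tilde{\elc}),\tilde{\elc})$ from below; namely, $\tilde{\elc}$ is the profile in which agent $1$ reports no intensity and agent $2$ reports $a_{m/2+1}\ggc\cdots\ggc a_m\ggc a_1\succ a_2\succ\cdots\succ a_{m/2}$. Observe that $\tilde{\elc}$ is exactly the instance of \Cref{polar-exact-up} after the agent-swap/relabelling above, under which $a_1$ becomes $a_{m/2+1}$; hence $\dist_\alpha(a_{m/2+1},\tilde{\elc})=\tfrac{1+2\alpha^{m/2}-\alpha^{m}}{\alpha^{m}+1}$, so $\dist_\alpha(\optaw(\tilde{\elc}),\tilde{\elc})\le\tfrac{1+2\alpha^{m/2}-\alpha^{m}}{\alpha^{m}+1}$. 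For the numerator I would write down the classical distortion-$3$ configuration adapted to the constraints of $\tilde{\elc}$: take agent $1$ at distance $R$ from every alternative, let agent $2$ sit at $a_{m/2+1}$ (so its distances to $a_{m/2+2},\dots,a_m$ form a geometric sequence of tiny values respecting the $\ggc$ chain and staying below $\alpha\cdot 2R$), and set $d(2,a_j)=2R$ for $1\le j\le m/2$; the associated shortest-path metric is $\alpha$-consistent with $\tilde{\elc}$ and has $\scs(a_1)=3R$ while $\scs(a_{m/2+1})=R$ is the minimum, so $\dist_\alpha(a_1,\tilde{\elc})\ge 3$. Multiplying the two estimates gives $\poi(a_1,\vec\pi,\alpha)\ge\poi(a_1,\tilde{\elc},\alpha)\ge\tfrac{3(\alpha^{m}+1)}{1+2\alpha^{m/2}-\alpha^{m}}$, which together with the first two steps proves the bound for even $m$; the odd case follows exactly as in \Cref{polar-exact-up} by appending a universally last alternative placed far from everything and replacing $m$ by $\lfloor m\rfloor_{\text{even}}$ and $m/2$ by $\lfloor m/2\rfloor$.

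The routine parts — the reduction to $\min_a\poi(a,\vec\pi,\alpha)$, the Pareto-domination inequalities, the symmetry, and quoting \Cref{polar-exact-up} for the denominator — are straightforward. The main obstacle is the numerator: one must check that the distances described above genuinely extend to a metric (all triangle inequalities, in particular $d(a_{m/2+1},a_j)=2R$ for $j\le m/2$ together with the small intra-cluster distances among $a_{m/2+1},\dots,a_m$) and that this metric is $\alpha$-consistent with $\tilde{\elc}$ in the \emph{mandatory} sense, i.e.\ each reported ``$\succ$'' corresponds to a distance ratio strictly above the $\alpha$-threshold and each ``$\ggc$'' to a ratio at or below it.
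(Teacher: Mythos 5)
Your even-$m$ argument is correct and is essentially the paper's proof in a different wrapper: you use the same two-agent cyclic-shift ranking profile, the same completion (the mirror image of the instance of \Cref{polar-exact-up}, quoted for the denominator via relabelling), and the same style of three-valued witness metric giving distortion $3$ for the numerator. Where the paper fixes the intensities by a case analysis on which half of $\pi_1$ contains $\optob(\vec\pi)$ and then invokes \Cref{lb POI} with $a=a_1$, you instead lower bound $\min_{a}\poi(a,\vec\pi,\alpha)$ directly and collapse it to $a_1$ via Pareto domination and the agent-swap/relabelling automorphism; this is a slightly more explicit justification of the paper's ``without loss of generality'' step, but not a genuinely different route.

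One concrete slip in your odd-$m$ wrap-up: you cannot place the appended universally-last alternative ``far from everything.'' In the completion you need in order to quote \Cref{polar-exact-up} (whose odd-$m$ profile has $\qsucc$ before the last alternative for both agents), mandatory elicitation forces $d(i,\pi_i(m)) < \alpha^{-1}\, d(i,\pi_i(m-1))$, so a far-away placement makes your numerator metric inconsistent with the profile; conversely, if you switch those last separators to $\qggc$ so that ``far'' is allowed, the lemma no longer applies verbatim to the denominator. The fix is exactly what the lemma itself does: place $a_m$ at the same distance as the penultimate alternative for each agent (e.g.\ $d(1,a_m)=R$ and $d(2,a_m)=2R$ in your construction), which keeps $\scs(a_1)=3R$ against an optimum of $R$ and leaves the rest of your reduction intact. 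With that correction, and the routine verification of the triangle inequalities you already flag, the proposal goes through.
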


\begin{proof}
    First, assume that $m$ is even. Consider an election $\elc$ with two agents and $m$ alternatives, where $\pi_1 = \left(a_1, \dots , a_m\right)$ and $\pi_2 = \left(a_{\frac{m}{2}+1}, \dots , a_m, a_1 , a_2, \dots, a_{\frac{m}{2}}\right)$. We determine intensities based on the position of $\optob(\vec \pi)$ in the rankings.
    If $\optob(\vec \pi)$ appears in the first half of the first agent's ranking, we set
    $a_1 \succ a_2 \succ \ldots \succ a_{m}$ for the first agent and $a_{\frac{m}{2}+1} \,\ggc \,a_{\frac{m}{2}+2} \,\ggc\, \ldots \,\ggc\, a_{m} \,\ggc\, a_1 \succ a_2 \succ \ldots \succ a_{m}$ for the second one. In the other case, we set
    $a_1 \,\ggc\, a_2 \,\ggc\, \ldots \,\ggc\, a_{\frac{m}{2}} \,\ggc\, a_{\frac{m}{2}+1} \succ a_{\frac{m}{2} + 2} \ldots \succ a_{m}$ for the first agent and $a_{\frac{m}{2}+1} \succ a_{\frac{m}{2}+2} \succ \ldots \succ a_{m} \succ a_1 \succ a_2 \succ \dots \succ a_{\frac{m}{2}}$ for the second one.\\ 
    Without loss of generality, assume that $\optob(\vec \pi)$ is in the second half of $\pi_1$. Using Lemma, \ref{lb POI} we have:
    \begin{equation}
    \label{eq:poii-app}
    \poi(\alpha) \geq  \frac{\dist_\alpha(\optob(\elc), \elc)}{\dist_\alpha(a_{1}, \elc)}\cdot
    \end{equation}
 
    To prove the desired bound, we calculate the exact amount of the right-hand side of Inequality 
    \eqref{eq:poii-app}. Consider the following metric:
    $$ d(1, b) = \begin{cases}
        0  & \rank{1}(b) \leq \frac{m}{2} \\ 2  & \rank{1}(b) > \frac{m}{2}
    \end{cases}, \text{ and } d(2, b) = 1 \cdot$$
It is easy to check that $d$ satisfies the triangle inequality and intensity constraints. Now, note that
\begin{align}
    \nonumber
    \dist_\alpha(\optob(\elc), \elc) &\geq \frac{\scs(\optob(\elc))}{\scs(a_1)} \\
    \nonumber
    &= \frac{d(1, \optob(\elc)) + d(2, \optob(\elc))}{d(1, a_1) + d(2, a_1)} = 3 & \left(\rank{1}(\optob(\elc)) > \frac{m}{2}\right). 
\end{align}
Hence, $\dist_\alpha(\optob(\elc), \elc)$ is at least $3$. But we know that it is also upper bounded by $3$. Therefore, it is exactly equal to $3$.
Note that Lemma \ref{polar-exact-up} gives the exact value of $\dist_\alpha(a_1, \elc)$, i.e., $\frac{1+2\alpha^{\frac{m}{2}}-\alpha^{m}}{\alpha^{m}+1}$. Now, using Inequality (\ref{eq:poii-app}) gives the desired lower bound on $\poi(\alpha)$.\\
In the case where $m$ is odd, we can use an election $\elc$ with two agents and $m$ alternatives, where, $\pi_1 = \left(a_1, \dots , a_{m-1}, a_{m}\right)$ and $\pi_2 = \left(a_{\frac{m-1}{2}+1}, \dots , a_{m-1}, a_1 , a_2, \dots, a_{\frac{m-1}{2}}, a_{m}\right)$. Now, we can drop $a_m$ from the election (since $a_m$  is the least preferred alternative for both agents) and apply a similar argument to the rest of the agents and alternatives to achieve the desired lower bound.  
\end{proof}


\subsection{Voluntary Elicitation of the Intensities}
\label{sec:vol}

As described in the preliminaries, we can make an alternate assumption on how the agents report their intensities. In this setting, we give the option of using $\qggc$ to the agents but they might decide not to use it. This can happen when agents do not want to report the intensity or are not sure if there is an $\alpha$-gap between the distances. Recall the definition of $\alpha$-consistency of a preference profile $\prefp$ and a metric space $d$ under voluntary assumption ($\prefp \cns d$). Using that definition, we can redefine the distortion for the voluntary setting:

$$\distv(a, \elc) := \sup_{d: \prefp \cns d} \frac{\scs_d(a)}{\min_{b \in \alts} \scs_d(b)}\cdot$$

Note that the optimal deterministic distortion in this setting is still 3, since if all the agents decide not to report their intensities, the voting rule is restricted to the data that is available in the classic voting with ranked preferences. However, we can use this definition to  define $\optawv, \optobv, \allowbreak \poiv(a, \elc, \alpha)$, and $\poiv(\alpha)$ similar to the mandatory setting, and then prove bounds on the POII under this assumption. Bounds on the POII in this setting show that even if the agents are not forced to report the intensities, if the voting rule does not take them into account, we might lose a great deal in terms of efficiency.

\begin{theorem}
For every $\alpha \in [0,1]$, we have:
$$
\poiv(\alpha) \geq \frac{3}{2 \alpha^{\lfloor\frac{m}{2}\rfloor} +1}\cdot
$$    
\end{theorem}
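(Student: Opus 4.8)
The plan is to mirror the proof of \Cref{thm:poi_mand_lb}, but replace the exact distortion computation of \Cref{polar-exact-up} by a cheap upper bound that survives in the voluntary setting. First I would record the voluntary analogue of \Cref{lb POI}: since $\poiv(\elc,\alpha)=\distv(\optobv(\vec\pi),\elc)/\distv(\optawv(\elc),\elc)$ and $\distv(\optawv(\elc),\elc)=\min_{b\in\alts}\distv(b,\elc)\le\distv(a,\elc)$ for every $a$, we get $\poiv(\elc,\alpha)\ge\distv(\optobv(\vec\pi),\elc)/\distv(a,\elc)$ for all $a$ --- the same one-line argument as in the mandatory case. So it suffices to exhibit an election $\elc\rhd\vec\pi$ with $\distv(\optobv(\vec\pi),\elc)$ large and $\distv(a,\elc)$ small for some fixed $a$.

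Next I fix the ranked profile. For even $m$, take two agents with $\pi_1=(a_1,\dots,a_m)$ and $\pi_2=(a_{\frac m2+1},\dots,a_m,a_1,\dots,a_{\frac m2})$. This profile has an automorphism that swaps the agents and cyclically shifts the alternatives by $\frac m2$, exchanging the two halves of $\pi_1$; hence, exactly as in \Cref{thm:poi_mand_lb}, we may assume $\rank{1}(\optobv(\vec\pi))>\frac m2$. Now choose the intensities of $\elc$: agent $1$ reports $\qggc$ in positions $1,\dots,\frac m2$ and $\qsucc$ afterwards, and agent $2$ reports only $\qsucc$. Since $\poiv(\alpha)\ge\poiv(\elc,\alpha)\ge\distv(\optobv(\vec\pi),\elc)/\distv(a_1,\elc)$, I just need to bound the two sides.

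For the numerator, place agent $1$ and $a_1,\dots,a_{\frac m2}$ at coordinate $0$ of a line, agent $2$ at coordinate $1$, and $a_{\frac m2+1},\dots,a_m$ at coordinate $2$; then $d(2,a_j)=1$ for all $j$ and $d(1,a_j)\in\{0,2\}$. One checks $\vec\pi\rhd d$ and that every reported $\qggc$ position of agent $1$ carries an $\alpha$-gap (the relevant pairs are $0$ vs.\ $0$, or $0$ vs.\ $2$), so $\prefp\cns d$; here $\opt_d$ has social cost $1$ while $\optobv(\vec\pi)$, lying in the second half of $\pi_1$, has social cost $3$, so $\distv(\optobv(\vec\pi),\elc)\ge 3$. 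For the denominator, let $d$ be any metric with $\prefp\cns d$. Chaining the $\frac m2$ reported $\qggc$ constraints of agent $1$ gives $d(1,a_1)\le\alpha^{\frac m2}d(1,a_{\frac m2+1})$, and one path inequality gives $d(2,a_1)\le d(2,a_{\frac m2+1})+d(1,a_{\frac m2+1})+d(1,a_1)$; adding, $\scs_d(a_1)\le(1+2\alpha^{\frac m2})d(1,a_{\frac m2+1})+d(2,a_{\frac m2+1})\le(1+2\alpha^{\frac m2})\scs_d(a_{\frac m2+1})$, so $\distv(a_1,\elc)\le 1+2\alpha^{\frac m2}$. Combining the two bounds yields $\poiv(\alpha)\ge\frac{3}{1+2\alpha^{m/2}}$.

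Finally, for odd $m$ I append an alternative $a_m$ ranked last by both agents (preceded by a $\qsucc$), drop it exactly as in \Cref{polar-exact-up} and \Cref{general_lower_bound for line}, apply the even case to the remaining $m-1$ alternatives, and place $a_m$ far from everything (coordinate $2$ in the numerator metric, and $d(i,a_m)=d(i,\pi_i(m-1))$ in the denominator analysis), which leaves all inequalities intact; since $\lfloor m/2\rfloor=(m-1)/2$ this gives $\poiv(\alpha)\ge\frac{3}{1+2\alpha^{\lfloor m/2\rfloor}}$. I expect the only point requiring genuine care is checking that the numerator metric is voluntary-consistent with the \emph{assigned} intensities (not merely with the ranking) and that the chosen $\optobv(\vec\pi)$ really sits at the factor-$3$ vertex there; the denominator estimate and the odd-$m$ reduction are routine.
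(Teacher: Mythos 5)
Your proposal is correct and follows the same skeleton as the paper's proof: the same two-agent profile with intensities chosen after locating $\optob(\vec\pi)$, the same reduction via the (voluntary analogue of the) KLS lemma, and the same $\{0,2\}/\{1\}$ metric certifying that the intensity-oblivious optimum has distortion at least $3$. Where you genuinely diverge is the denominator. The paper bounds $\distv(a_1,\elc)$ by writing the linear program that maximizes $\scs_d(a_1)$ subject to voluntary consistency, normalization $\scs_d(a_{\frac m2+1})=1$ and ``optimality of $a_{\frac m2+1}$'' constraints, and then exhibits a feasible dual solution of value $1+2\alpha^{m/2}$. Your direct argument --- chaining agent $1$'s $\frac m2$ reported $\qggc$ constraints to get $d(1,a_1)\le\alpha^{m/2}d(1,a_{\frac m2+1})$ and adding one triangle path for $d(2,a_1)$ --- is exactly the explicit unrolling of that dual certificate (the geometric chain of $\phi^1$-variables plus the single $\psi^{2,1}_{a_1,a_{\frac m2+1}}$ variable), so you get the same bound with no LP machinery, which is arguably cleaner. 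One step you should spell out: your inequality compares $\scs_d(a_1)$ only to $\scs_d(a_{\frac m2+1})$, while $\distv(a_1,\elc)$ compares to $\min_b\scs_d(b)$. This is easily repaired: every alternative in the first half of $\pi_1$ is ranked below $a_1$ by \emph{both} agents, so its social cost is at least $\scs_d(a_1)$, and every alternative in the second half is ranked below $a_{\frac m2+1}$ by both agents, so its social cost is at least $\scs_d(a_{\frac m2+1})$; hence $\min_b\scs_d(b)=\min\bigl(\scs_d(a_1),\scs_d(a_{\frac m2+1})\bigr)$ and your bound on the ratio follows. (The paper absorbs this point into the LP's optimality constraints, leaning on the analogous remark in \Cref{polar-exact-up}.) Your handling of the numerator consistency check, the symmetry-based WLOG, and the odd-$m$ reduction all match the paper; the only cosmetic slip is describing a ``placement'' of $a_m$ in the denominator analysis, which is a universally quantified bound and needs no placement --- only the observation that $a_m$, being ranked last by both agents, cannot be the social-cost minimizer in a way that hurts the bound.
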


To prove this bound, we use the same instance as in the proof of \Cref{thm:poi_mand_lb}. However, the bound differs since the distortion of intensity aware optimal outputs in these two settings are different. The source of this difference is that the class of instances over which $\optaw$ has the optimal distortion is a subset of the instances that $\optawv$ has to be optimized for, since $\prefp \scns d$ is a more restricting condition on $d$ than $\prefp \cns d$.
Similar to the proof of \Cref{thm:poi_mand_lb}, we use an LP to formulate this setting and write the dual LP as well. However, in this case, we cannot guarantee that the variables are always positive and hence the Complementary Slackness conditions do not help. In the following proof, we present a feasible solution for the dual and we show an upper bound on the denominator.

\begin{proof} 
    We only consider the case where $m$ is even. Similar to the proof of Theorem \ref{thm:poi_mand_lb}, the odd case can be concluded from the even case.\\
    Consider an election $\elc$ with two agents and $m$ alternatives, where $\pi_1 = \left(a_1, \dots , a_m\right)$ and $\pi_2 = (a_{\frac{m}{2}+1}, \dots , a_m,\allowbreak  a_1 , a_2, \dots, a_{\frac{m}{2}})$. We determine intensities based on the position of $\optob(\vec{\pi})$ in the rankings.
        If $\optob(\vec{\pi})$ appears in the first half of the first agent's ranking, we set
    $a_1 \succ a_2 \succ \ldots \succ a_{m}$ for the first agent and $a_{\frac{m}{2}+1} \,\ggc \,a_{\frac{m}{2}+2} \,\ggc\, \ldots \,\ggc\, a_{m} \,\ggc\, a_1 \succ a_2 \succ \ldots \succ a_{m}$ for the second one. In the other case, we set
    $a_1 \,\ggc\, a_2 \,\ggc\, \ldots \,\ggc\, a_{\frac{m}{2}} \,\ggc\, a_{\frac{m}{2}+1} \succ a_{\frac{m}{2} + 2} \ldots \succ a_{m}$ for the first agent and $a_{\frac{m}{2}+1} \succ a_{\frac{m}{2}+2} \succ \ldots \succ a_{m} \succ a_1 \succ a_2 \succ \dots \succ a_{\frac{m}{2}}$ for the second one.\\ 
    Without loss of generality, assume that $\optob(\vec{\pi})$ is in the second half of $\pi_1$. Using Lemma, \ref{lb POI} we have:
    \begin{equation}
    \label{eq:poii-app-vol}
    \poi(\alpha) \geq  \frac{\dist_\alpha(\optob(\elc), \elc)}{\dist_\alpha(a_{1}, \elc)}\cdot
    \end{equation}\\
    Note that the metric $d$ provided in the proof of Theorem \ref{thm:poi_mand_lb} is also consistent with the intensity constraints in the voluntary setting. Hence, we can conclude that $\dist_\alpha(\optob(\elc), \elc) = 3$.\\
    Henceforth, we aim to find an upper bound on $\dist_\alpha(a_{1}, \elc)$. The technique we use is similar to the technique of the proof of Lemma \ref{polar-exact-up}. We write a linear program to find the metric $d$ that maximizes $\frac{\scs_d(a_1)}{\scs_d(a_{\frac{m}{2}+1})}$.

    \begin{tcolorbox}[center,boxsep=2pt,top=2pt,bottom=5pt, width=\textwidth]
\centerline{\underline{Linear Program \PP}}
\centering
\begingroup
\small
\begin{align}
    \nonumber
    \text{Maximize} \quad  & d_{1, a_{1} } + d_{2, a_{1} }
    \\ 
    \nonumber
    \text{subject to} \quad & d_{i,a_j} \leq d_{i,a_{j'}} + d_{i',a_{j'}} + d_{i', a_j} \quad & \text{$1 \leq i \neq i' \leq 2$,\;  $1 \leq j \neq j' \leq m$ (triangle inequality)}
    \\ 
    \nonumber
    & d_{i, \pi_{i}(j)} \leq \alpha d_{i, \pi_{i}(j + 1)}  & \forall i, j : \intns_i(j) = \qggc \text{(intensity constraint)}
    \\ 
    \nonumber
    & d_{i, \pi_{i}(j)} \leq d_{i, \pi_{i}(j+1)} & \forall i, j : \intns_i(j) = \qsucc \text{(intensity constraint)}
    \\ 
    \nonumber 
    &d_{1, a_{\frac{m}{2}+1}} + d_{2, a_{\frac{m}{2}+1}} = 1 \quad & \text{(normalization)}\\
    \nonumber 
    & d_{1, a_j} + d_{2, a_j} \geq 1 \quad & \text{$1 \leq j \leq m ,\, j \neq \frac{m}{2}+1$ (optimality of $a_{\frac{m}{2}+1}$)}
    \\
    & d_{i, a_j} \geq 0 \quad &\text{$i \in \{1, 2\}$,\; $1 \leq j \leq m - 1$. \nonumber}
\end{align}
\endgroup
\end{tcolorbox}
Note that in the voluntary setting, when we have $\intns_i(j) = \qsucc$, the only inference we can make is that $d_{i, \pi_{i}(j)} \leq d_{i, \pi_{i}(j+1)}$.\\
The dual includes three types of variables: $\psi^{i, i'}_{a_{j}, a_{j'}}$ are the dual variables referring to the triangle inequality constraints, $\phi^{i}_{\pi_{i}(j), \pi_{i}(j + 1)}$ are the dual variables for the intensity constraints, and $t_{a_j}$ are the dual variables for the normalization/optimality constraints.
\begin{figure*}
\begin{tcolorbox}[center,boxsep=2pt,top=2pt,bottom=5pt, width=0.8\textwidth]
\centerline{\underline{Linear Program \QQ}}
\centering
\begin{align*}
    \nonumber 
    \text{Minimize} \quad & \sum_{j = 1}^{m} t_{a_j}
    \\ 
    \label{poi-dual-constraint-vol}
    \text{subject to} \quad & \theta(i, j) \geq 
     \begin{cases}
         1 & \pi_i(j) = a_{1}\\
         0 & \pi_i(j) \neq a_{1}
     \end{cases} 
     & \text{ $i \in \{1, 2\}$,\; $1 \leq j \leq m$}
     \\ 
     \nonumber
     &\psi^{i, i'}_{a_{j}, a_{j'}} \geq 0 &  \text{$1 \leq i \neq i' \leq 2$,\; $1 \leq j \neq j' \leq m$}\\
     \nonumber
     & \phi^{i}_{\pi_{i}(j), \pi_{i}(j + 1)} \geq 0  & \text{$i \in \{1, 2\}$,\; $1 \leq j \leq m - 1$}\\ 
     \nonumber
     & t_{a_j} \leq 0  & \text{$a_j \neq a_{\frac{m}{2}+1}$},
\end{align*}
\end{tcolorbox}
\end{figure*}
where
\begin{align}
    \nonumber
    \theta(i, j) = \; &
     t_{\pi_{i}(j)}   + \sum_{\substack{1 \leq i' \leq 2, \; 1 \leq j' \leq  m \\ i'\neq i , \; \pi_{i'}(j') \neq \pi_{i}(j)}} \left(\psi^{i, i'}_{\pi_{i}(j), \pi_{i'}(j')} - \psi^{i, i'}_{\pi_{i'}(j'), \pi_{i}(j)}  - \psi^{i', i}_{\pi_{i}(j), \pi_{i'}(j')} - \psi^{i', i}_{\pi_{i'}(j'), \pi_{i}(j)} \right)
     \\ \nonumber 
     & +\begin{cases}
       (\mathbf{1}_{\{j \neq m\}}) \frac{1}{\alpha} \phi^{i}_{\pi_{i}(j), \pi_{i}(j + 1)} - (\mathbf{1}_{\{j \neq 1\}}) \phi^{i}_{\pi_{i}(j - 1), \pi_{i}(j)}  &  i = 1, j \leq \frac{m}{2} \\
       (\mathbf{1}_{\{j \neq m\}}) \phi^{i}_{\pi_{i}(j), \pi_{i}(j + 1)} - (\mathbf{1}_{\{j \neq 1\}}) \phi^{i}_{\pi_{i}(j - 1), \pi_{i}(j)}  &
       \text{otherwise}
    \end{cases}
\end{align}
Now, we introduce a feasible solution for LP \QQ. The non-zero variables of this feasible solution are as follows:
\begin{align}
    \nonumber
    &\phi^{1}_{\pi_{2}(\frac{m}{2}-i), \pi_{2}(\frac{m}{2}+1-i)} = x \left(\frac{1}{\alpha} \right)^i
    & 0\leq i \leq \frac{m}{2}-1 
    \\ 
    \nonumber
    &\psi^{i, i'}_{a_{j}, a_{j'}} = 1
    & i = 2, \; i' = 1, \;, a_j = a_1, \; a_{j'} =a_{\frac{m}{2}+1} 
    \\ 
    \nonumber
    &t_{a_j} = 1 + x 
    & a_j = a_{\frac{m}{2}+1},
\end{align}
where $x = 2 \alpha ^ {\frac{m}{2}}$.
The objective value of LP \QQ, given by this feasible solution, is:
\[
\sum_{j=1}^{m} t_{a_j} = t_{a_{\frac{m}{2}+1}} = 1 + 2 \alpha ^ {\frac{m}{2}}.
\]
Therefore, $\dist_{\alpha}(a_1, \elc)$ is upper bounded by $1 + 2 \alpha ^ {\frac{m}{2}}$. As a result, we have

\[
\poiv(\alpha) \geq \frac{3}{2 \alpha^{\frac{m}{2}} +1}\cdot
\]

\end{proof}
\section{Discussion and Future Work}
In this paper, we investigate the effect of adding intensities to voting with ranked ballots to the efficiency from the viewpoint of metric distortion. We established upper and lower bounds on the distortion when agents are required to report the intensities. With these results and the results of \citet{KLS23}, the next step would be to see if the best of both worlds (utilitarian and metric) guarantees can be made using the ideas of \citet{gkatzelis2023best}.

In this paper, our focus was solely on deterministic rules. While deterministic rules are more desirable in elections with high stakes, randomized rules have gained a lot of attention in repeated elections where the price of having one bad output once in a while is not that high. An immediate extension to our work is to see how randomization helps improve the distortion in this model.

One can also think about an axiomatic approach to voting with intensities. This ballot format provides extra information compared to both approvals and ranked preferences. An interesting direction would be to find a meaningful way to define axioms for this type of voting and see if desirable axioms are satisfiable.

Another direction is to bring this ballot format to other problems. Distortion has been defined and extensively analyzed in other settings such as fair division and matching. An interesting future work would be to see how intensities change the distortion in these problems.

\clearpage

\section*{Acknowledgements}
This work was completed while Mehrad Abbaszadeh was a student at Sharif University of Technology, Iran. Mohamad Latifian was supported by the UK Engineering and Physical Sciences Research Council (EPSRC) grant EP/Y003624/1.


\clearpage



\bibliographystyle{ACM-Reference-Format} 
\bibliography{abb,bib}

@STRING{proc = {Proceedings of the }}

@STRING{ijcai = { International Joint Conference on Artificial Intelligence (IJCAI)}}

@STRING{focs = { Symposium on Foundations of Computer Science (FOCS)}}

@STRING{aaai = { AAAI Conference on Artificial Intelligence (AAAI)}}

@STRING{ec = { ACM Conference on Economics and Computation (EC)}}

@STRING{agents = { Annual Conference on Autonomous Agents (AGENTS)}}

@STRING{wine = { Conference on Web and Internet Economics (WINE)}}

@STRING{sagt = { International Symposium on Algorithmic Game Theory (SAGT)}}

@STRING{springer = {Springer-Verlag}}

@STRING{acm = {ACM Press}}

@STRING{ieee = {IEEE Press}}

@inproceedings{GHS20,
	title        = {Resolving the optimal metric distortion conjecture},
	author       = {Gkatzelis, Vasilis and Halpern, Daniel and Shah, Nisarg},
	year         = 2020,
	booktitle    = {Proceedings of the 61st IEEE Annual Symposium on Foundations of Computer Science (FOCS)},
	pages        = {1427--1438}
}

@article{benade2021preference,
  title={Preference elicitation for participatory budgeting},
  author={Benade, Gerdus and Nath, Swaprava and Procaccia, Ariel D and Shah, Nisarg},
  journal={Management Science},
  volume={67},
  number={5},
  pages={2813--2827},
  year={2021},
  publisher={INFORMS}
}

@inproceedings{Kizilkaya2022,
  author       = {Fatih Erdem Kizilkaya and
                  David Kempe},
  title        = {Plurality Veto: {A} Simple Voting Rule Achieving Optimal Metric Distortion},
  booktitle    = {Proceedings of the Thirty-First International Joint Conference on Artificial Intelligence, {IJCAI}},
  pages        = {349--355},
  year         = {2022},
}

@inproceedings{KLS23,
  title={Voting with preference intensities},
  author={Kahng, Anson and Latifian, Mohamad and Shah, Nisarg},
  booktitle={Proceedings of the AAAI Conference on Artificial Intelligence},
  volume={37},
  number={5},
  pages={5697--5704},
  year={2023}
}

@inproceedings{procaccia2006distortion,
  title={The distortion of cardinal preferences in voting},
  author={Procaccia, Ariel D and Rosenschein, Jeffrey S},
  booktitle={International Workshop on Cooperative Information Agents},
  pages={317--331},
  year={2006},
  organization={Springer}
}

@inproceedings{anshelevich2021distortion,
  title={Distortion in Social Choice Problems: The First 15 Years and Beyond},
  author={Anshelevich, Elliot and Filos-Ratsikas, Aris and Shah, Nisarg and Voudouris, Alexandros A},
  booktitle={30th International Joint Conference on Artificial Intelligence},
  pages={4294--4301},
  year={2021},
}

@article{caragiannis2017subset,
  title={Subset selection via implicit utilitarian voting},
  author={Caragiannis, Ioannis and Nath, Swaprava and Procaccia, Ariel D and Shah, Nisarg},
  journal={Journal of Artificial Intelligence Research},
  volume={58},
  pages={123--152},
  year={2017}
}

@article{caragiannis2011voting,
  title={Voting almost maximizes social welfare despite limited communication},
  author={Caragiannis, Ioannis and Procaccia, Ariel D},
  journal={Artificial Intelligence},
  volume={175},
  number={9-10},
  pages={1655--1671},
  year={2011},
  publisher={Elsevier}
}

@article{anshelevich2018,
  title={Approximating optimal social choice under metric preferences},
  author={Anshelevich, Elliot and Bhardwaj, Onkar and Elkind, Edith and Postl, John and Skowron, Piotr},
  journal={Artificial Intelligence},
  volume={264},
  pages={27--51},
  year={2018},
  publisher={Elsevier}
}

@inproceedings{cheng2017people,
  title={Of the people: voting is more effective with representative candidates},
  author={Cheng, Yu and Dughmi, Shaddin and Kempe, David},
  booktitle={Proceedings of the 2017 ACM Conference on Economics and Computation},
  pages={305--322},
  year={2017}
}

@article{BNPS21,
  title={Preference elicitation for participatory budgeting},
  author={Benade, Gerdus and Nath, Swaprava and Procaccia, Ariel D and Shah, Nisarg},
  journal={Management Science},
  volume={67},
  number={5},
  pages={2813--2827},
  year={2021}
}

@article{amanatidis2021peeking,
  title={Peeking behind the ordinal curtain: Improving distortion via cardinal queries},
  author={Amanatidis, Georgios and Birmpas, Georgios and Filos-Ratsikas, Aris and Voudouris, Alexandros A},
  journal={Artificial Intelligence},
  volume={296},
  pages={103488},
  year={2021},
  publisher={Elsevier}
}

@inproceedings{abramowitz2019awareness,
  title={Awareness of voter passion greatly improves the distortion of metric social choice},
  author={Abramowitz, Ben and Anshelevich, Elliot and Zhu, Wennan},
  booktitle={Web and Internet Economics: 15th International Conference, WINE 2019, New York, NY, USA, December 10--12, 2019, Proceedings 15},
  pages={3--16},
  year={2019},
  organization={Springer}
}

@inproceedings{Munagala2019,
  author    = {Kamesh Munagala and Kangning Wang},
  title     = {Improved Metric Distortion for Deterministic Social Choice Rules},
  booktitle = {Proceedings of the 21st ACM Conference on Economics and Computation (EC)},
  pages     = {245--262},
  year      = {2019}
}

@inproceedings{Gross2017,
  author    = {Stefan Gross and Elliot Anshelevich and Lirong Xia},
  title     = {Vote Until Two of You Agree: Mechanisms with Small Distortion and Sample Complexity},
  booktitle = {Proceedings of the 31st AAAI Conference on Artificial Intelligence},
  pages     = {544--550},
  year      = {2017}
}

@article{seddighin2021distortion,
  title={On the distortion value of elections with abstention},
  author={Seddighin, Masoud and Latifian, Mohammad and Ghodsi, Mohammad},
  journal={Journal of Artificial Intelligence Research},
  volume={70},
  pages={567--595},
  year={2021}
}

@article{caragiannis2022metric,
  title={The metric distortion of multiwinner voting},
  author={Caragiannis, Ioannis and Shah, Nisarg and Voudouris, Alexandros A},
  journal={Artificial Intelligence},
  volume={313},
  pages={103802},
  year={2022},
  publisher={Elsevier}
}

@inproceedings{cheng2018distortion,
  title={On the distortion of voting with multiple representative candidates},
  author={Cheng, Yu and Dughmi, Shaddin and Kempe, David},
  booktitle={Proceedings of the AAAI Conference on Artificial Intelligence},
  volume={32},
  number={1},
  year={2018}
}

@article{amanatidis2022few,
  title={A few queries go a long way: Information-distortion tradeoffs in matching},
  author={Amanatidis, Georgios and Birmpas, Georgios and Filos-Ratsikas, Aris and Voudouris, Alexandros A},
  journal={Journal of Artificial Intelligence Research},
  volume={74},
  pages={227--261},
  year={2022}
}

@article{mandal2019efficient,
  title={Efficient and thrifty voting by any means necessary},
  author={Mandal, Debmalya and Procaccia, Ariel D and Shah, Nisarg and Woodruff, David},
  journal={Advances in Neural Information Processing Systems},
  volume={32},
  year={2019}
}

@inproceedings{mandal2020optimal,
  title={Optimal communication-distortion tradeoff in voting},
  author={Mandal, Debmalya and Shah, Nisarg and Woodruff, David P},
  booktitle={Proceedings of the 21st ACM Conference on Economics and Computation},
  pages={795--813},
  year={2020}
}

@inproceedings{kempe2020analysis,
  title={An analysis framework for metric voting based on LP duality},
  author={Kempe, David},
  booktitle={Proceedings of the AAAI Conference on Artificial Intelligence},
  volume={34},
  number={02},
  pages={2079--2086},
  year={2020}
}

@inproceedings{gkatzelis2023best,
  title={Best of both distortion worlds},
  author={Gkatzelis, Vasilis and Latifian, Mohamad and Shah, Nisarg},
  booktitle={Proceedings of the 24th ACM Conference on Economics and Computation},
  pages={738--758},
  year={2023}
}

@article{anshelevich2017randomized,
  title={Randomized social choice functions under metric preferences},
  author={Anshelevich, Elliot and Postl, John},
  journal={Journal of Artificial Intelligence Research},
  volume={58},
  pages={797--827},
  year={2017}
}

@inproceedings{filos2021approximate,
  title={Approximate mechanism design for distributed facility location},
  author={Filos-Ratsikas, Aris and Voudouris, Alexandros A},
  booktitle={Proceedings of the 14th International Symposium on Algorithmic Game Theory (SAGT)},
  pages={49--63},
  year={2021}
}

@article{anshelevich2021ordinal,
  title={Ordinal approximation for social choice, matching, and facility location problems given candidate positions},
  author={Anshelevich, Elliot and Zhu, Wennan},
  journal={ACM Transactions on Economics and Computation (TEAC)},
  volume={9},
  number={2},
  pages={1--24},
  year={2021}
}

@inproceedings{bishop2022sequential,
  title={Sequential blocked matching},
  author={Bishop, Nicholas and Chan, Hau and Mandal, Debmalya and Tran-Thanh, Long},
  booktitle={Proceedings of the AAAI Conference on Artificial Intelligence},
  volume={36},
  number={5},
  pages={4834--4842},
  year={2022}
}

@article{ma2020improving,
  title={Improving welfare in one-sided matching using simple threshold queries},
  author={Ma, Thomas and Menon, Vijay and Larson, Kate},
  journal={arXiv preprint arXiv:2011.13977},
  year={2020}
}

@inproceedings{BHLS22,
  title={Distortion in voting with top-t preferences},
  author={Borodin, Allan and Halpern, Daniel and Latifian, Mohamad and Shah, Nisarg},
  booktitle= proc # {31st} # ijcai,
  pages={116--122},
  year={2022}
}

@inproceedings{bedaywi2023distortion,
  title={The distortion of public-spirited participatory budgeting},
  author={Bedaywi, Mark and Flanigan, Bailey and Latifian, Mohamad and Shah, Nisarg},
  booktitle={Proceedings of the AAAI Conference on Artificial Intelligence},
  volume={39},
  number={13},
  pages={13605--13613},
  year={2025}
}

@inproceedings{flanigan2023distortion,
  title={Distortion Under Public-Spirited Voting},
  author={Flanigan, Bailey and Procaccia, Ariel D and Wang, Sven},
  booktitle={Proceedings of the 24th ACM Conference on Economics and Computation},
  pages={700--700},
  year={2023}
}

@article{cook1985ordinal,
  title={Ordinal ranking with intensity of preference},
  author={Cook, Wade D and Kress, Moshe},
  journal={Management science},
  volume={31},
  number={1},
  pages={26--32},
  year={1985},
  publisher={INFORMS}
}

@article{harvey1999aggregation,
  title={Aggregation of individuals' preference intensities into social preference intensity},
  author={Harvey, Charles M},
  journal={Social choice and Welfare},
  volume={16},
  number={1},
  pages={65--79},
  year={1999},
  publisher={Springer}
}

@article{saaty2007group,
  title={Group decision-making: Head-count versus intensity of preference},
  author={Saaty, Thomas L and Shang, Jen S},
  journal={Socio-Economic Planning Sciences},
  volume={41},
  number={1},
  pages={22--37},
  year={2007},
  publisher={Elsevier}
}

@article{farquhar1989preference,
  title={Preference intensity measurement},
  author={Farquhar, Peter H and Keller, L Robin},
  journal={Annals of operations research},
  volume={19},
  number={1},
  pages={205--217},
  year={1989},
  publisher={Springer}
}

\clearpage



\end{document}